\setlist{leftmargin=15pt,labelindent=15pt}
\setlist[enumerate]{align=left}
\renewcommand{\th}{{\theta}}
\newcommand{\Th}{{\Theta}}
\newcommand{\calN}{{\mathcal N}}
\newcommand{\calP}{{\mathcal P}}
\newcommand{\calF}{{\mathcal F}}
\newcommand{\calX}{{\mathcal X}}
\newcommand{\hsp}{{\hspace{2ex}}}
\newcommand{\by}{{\mathbf y}}
\newcommand{\bw}{{\mathbf w}}
\newcommand{\one}{{\mathbf 1}}
\newcommand{\calZ}{{\mathcal Z}}
\newcommand{\zn}{{\mathcal Z_N}}
\newcommand{\z}{{\zeta}}
\newcommand{\eps}{{\epsilon}}
\newcommand{\al}{{\alpha}}
\def\RR{{\mathbf{ R}}}
\newcommand{\beq}{\begin{equation}}
\newcommand{\eeq}{\end{equation}}
\newcommand{\beqn}{\begin{eqnarray}}
\newcommand{\eeqn}{\end{eqnarray}}
\newcommand{\pr}{\mbox{Pr}}
\newtheorem{theorem}{Theorem}[section]
\newtheorem{definition}{Definition}[section]
\newtheorem{corollary}{Corollary}[theorem]
\newtheorem{lemma}[theorem]{Lemma}
\title{Finding our Way in the Dark: Approximate MCMC for Approximate Bayesian Methods}
\author{
Evgeny Levi \\
\and
Radu V. Craiu \thanks{Email: craiu@utstat.toronto.edu} 
}
\date{\small Department of Statistical Sciences, University of Toronto}
\begin{document}

\maketitle

\begin{abstract}
With larger amounts of data at their disposal, scientists are emboldened to tackle  complex questions that require sophisticated statistical models. It is not unusual for the latter to have  likelihood functions that elude analytical formulations. Even under such adversity, when one can simulate from the sampling distribution,  Bayesian analysis can be conducted using approximate methods such as Approximate Bayesian Computation (ABC) or Bayesian Synthetic Likelihood (BSL).  A significant drawback of these  methods is that the number of required simulations can be prohibitively large, thus severely  limiting their scope.  In this paper we design perturbed MCMC samplers that can be used within the ABC and BSL paradigms to significantly accelerate computation while maintaining control on computational efficiency. The  proposed strategy relies on recycling  samples from the chain's past.  The  algorithmic design is supported by a theoretical analysis while  practical performance is examined via a series of simulation examples and data analyses.  
\end{abstract}

\noindent {\it Keywords:}
Approximate Bayesian Computation, Synthetic Likelihood, Perturbed MCMC, k-Nearest-Neighbor.\\
\section{Introduction}   
\label{sec:intro}



 Since the early 1990s Bayesian statisticians have been able to operate largely due to the rapid development of Markov chain Monte Carlo (MCMC) sampling methods \citep[see, for example][for a recent review]{craiu2014bayesian}.   Given observed
data $\by_0 \in \calX^n$ with sampling density $f(\by_0|\th)$  indexed by parameter $\theta\in \RR^q$, Bayesian inference for functions of $\th$ rely on the characteristics of the  posterior distribution
\beq
\pi(\th|\by_0)=\frac{p(\th)f(\by_0|\th)}{\int_{R^q} p(\th)f(\by_0|\th) d\th}\propto p(\th)f(\by_0|\th),
\label{eq:post}
\eeq
where  $p(\th)$ denotes  the prior distribution. 
 When the posterior distribution in \eqref{eq:post}  cannot be studied analytically,  we rely on MCMC algorithms  to generate  samples from  $\pi$.  While traditional MCMC samplers such as Metropolis-Hastings or Hamiltonian MCMC \citep[see][and references therein]{brooks2011handbook} can sample distributions with unknown normalizing constants, they rely on the closed form of the unnormalized posterior,  $p(\th)f(\by_0|\th)$. 

The advent of large data has altered  in multiple ways the framework we just described. For example, larger data tend to yield likelihood functions that are much more expensive to compute, thus exposing the liability inherent in the iterative nature of MCMC samplers. In response to this challenge, new computational methods  based on divide and conquer  \citep{cons,weierstrass,entezari2018likelihood}, subsampling \citep{bardenet,quiroz2}, or sequential  \cite{balakrishnan2006one,maclaurin2015firefly} strategies have emerged. Second, it is understood that larger data should yield answers to more complex problems.   This implies the  use of increasingly complex models, in as much as the sampling distribution is no longer available in closed form. 

In the absence of a tractable likelihood function, statisticians have developed  approximate methods to perform Bayesian inference when, for any parameter value $\th \in \RR^q$,  data  $\by \sim f(\by|\th)$ can be sampled from the model. Here we consider two alternative approaches that have been proposed and gained considerable momentum in recent years:  the Approximate Bayesian Computation (ABC) \citep{marin2012approximate, baragatti2014overview, sisson2018overview, drovandi2018abc} and the Bayesian Synthetic Likelihood (BSL)\citep{wood2010statistical, drovandi2018approximating, price2018bayesian}.  
Both algorithms are effective when they are combined with  Markov chain Monte Carlo sampling schemes to produce samples from an approximation of $\pi$ and both share the  need for generating many pseudo-data sets $\by \sim f(\by|\th)$. This comes with serious challenges when the data is large and generating a pseudo-data set  is computationally expensive. In this paper we tackle the reduction of computational burden by recycling  draws from the chain's  history. While this reduces drastically the computation time, it alters the transition kernel of the original  MCMC chain. We demonstrate that we can control the approximating error introduced when perturbing the original  kernel using some of the error analysis for perturbed Markov chains developed recently by
\cite{mitrophanov2005sensitivity}, \cite{johndrow2015optimal} and \cite{johndrow2017error}.

 The paper is structured as follows. Section~\ref{sec:review} briefly reviews the ABC method and  Section~\ref{sec:meth}   introduces the proposed MCMC algorithms for ABC. Section~\ref{sec:bsl} reviews  BSL sampling and extends the proposed methods to this class of approximations. The practical impact of these algorithms is evaluated  via  simulations in Section~\ref{sec:sim} and data analyses in Section~\ref{sec:real}. The theoretical analysis showing control of perturbation errors in total variation norm is in  Section~\ref{sec:theory}. The paper closes with conclusions and ideas for  future work. 
\section{Approximate Bayesian Computation}
 \label{sec:review}
In order to illustrate the ABC sampler, let us consider the following Hidden Markov Model (HMM) 
\begin{eqnarray}
X_0 \sim & p(x_0), \nonumber \\
X_i|x_{i-1} \sim & p(X_i|x_{i-1},\th), \hsp i=1,\ldots, n, \label{trans}\\
Y_i|x_i \sim & p(Y_i|x_i,\th),\hsp i=1,\ldots, n .\label{emiss}
\end{eqnarray} 
Unless Gaussian distributions are used to specify  the transition and emission laws given in \eqref{trans} and \eqref{emiss}, respectively, the marginal distribution 
$P(y_1,\cdots,y_n|\th)$ cannot be calculated in closed form. It is possible to treat the hidden random variables $X_i$ as auxiliary and sample them using Particle MCMC (PMCMC) \citep{andrieu2010particle} or ensemble MCMC \citep{shestopaloff2013mcmc}. However, computations become increasingly difficult as $n$ increases. Moreover, for some financial time series models  such as Stochastic Volatility for log return, the $\al$-Stable distribution may be useful to model transition and/or emission probabilities \citep{nolan2003modeling}. The challenge is that the stable distributions do not have closed form densities, thus rendering the particle and ensemble MCMC impossible to use. Other widely used  examples where the likelihood functions cannot be expressed analytically include various networks models \citep[e.g.,][]{kolaczyk2014statistical} and Markov random fields \citep{rue2005gaussian}. 
For such models with intractable or computationally expensive likelihood evaluations, simulation based algorithms such as ABC are frequently used for  inference. In its simplest form, the  ABC is an
 accept/reject sampler.  Given a user-defined summary statistic $S(\by) \in \RR^p$, the Accept/Reject ABC is described in Algorithm~\ref{alg:abc-sim}.
\begin{algorithm}
\caption{Accept/Reject ABC}
\label{alg:abc-sim}
\begin{algorithmic}[1]
\State Given observed $\by_0$ and required number of samples $M$.
\For{$t=1,\cdots,M$}
\State Match = FALSE
\While{Not Match}
\State $\z^* \sim p(\z)$
\State $\by \sim f(\by|\z^*)$
\If{$S(\by) = S(\by_0)$}
\State $\th^{(t)}=\z^*$,
\State Match = TRUE.
\EndIf
\EndWhile
\EndFor
\end{algorithmic}
\end{algorithm}

 We emphasize that a  closed form equation for the likelihood is not needed, only the ability to generate from $f(\by|\th)$ for any $\th$. If $S(\by)$ is a sufficient statistics and $\pr(S(\by)=S(\by_0))>0$ then the algorithm yields posterior samples from the true posterior $\pi(\th|\by_0)$. Alas, the level of complexity for   models where ABC is needed, makes it unlikely for these two conditions to hold.    In order to implement ABC under more realistic assumptions, a  (small) constant $\eps$ is chosen and  $\z^*$ is accepted whenever $d(S(\by),S(\by_0))<\eps$, where  $d(S(\by),S(\by_0))$ is a user-defined distance function.  
The introduction of $\eps>0$ and the use of non-sufficient statistics remove layers of exactness from the target distribution. The approximating distribution is  denoted  $\pi_\eps(\th|S(\by_0))$ and we have 
\beq
\lim_{\eps\downarrow 0} \pi_\eps(\th|S(\by_0))=\pi(\th|S(\by_0)).
\label{abc:conv}
\eeq
In light of \eqref{abc:conv} one would like to have $S(\by)=\by$, but if the sample size of  $\by_0$ is large, then the curse of dimensionality leads to $\pr(d(\by,\by_0)<\epsilon) \approx 0$. Thus, obtaining even a moderate number of samples using ABC can be an unattainable goal in this case.  In almost all cases of interest, $S$ is not a sufficient statistics, implying that some information about $\th$ is lost. Not surprisingly,  much attention has been focused on finding appropriate low-dimensional summary statistics for inference \citep[see, for example][]{robert2011lack,  fearnhead2012constructing, marin2014relevant, prangle2015summary}. In this paper we assume that the summary statistic  $S(\by)$ is given. 

In the absence of information about the model parameters, the prior and posterior distributions may be misaligned, having non-overlapping regions of mass concentration. Hence,  parameter values that are drawn from the prior will be rarely retained making the algorithm very inefficient. 
\begin{algorithm}
\caption{ABC MCMC}
\label{alg:abc_mcmc}
\begin{algorithmic}[1]
\State Given $\by_0$, $\eps>0$ and required number of samples $M$.
\State Find initial $\th^{(0)}$ with simulated $\by$ such that $d(S(\by),S(\by_0))<\eps$.
\For{$t=1,\cdots,M$}
\State Generate $\z^* \sim q(\cdot|\th^{(t-1)}).$
\State Simulate $\by^* \sim f(\by|\z^*)$ and let $\delta(\by^*,\by_0) = d(S(\by^*),S(\by_0)).$
\State Calculate $\alpha = \min\left(1,\one_{\{\delta(\by^*,\by_0)<\eps\}}\frac{p(\z^*)q(\th^{(t-1)}|\z^*)}{p(\th^{(t-1)})q(\z^*|\th^{(t-1)})}\right)$
\State Generate independent $U \sim \mathcal U(0,1).$
\If{$U\leq \alpha$}
\State $\th^{(t)}=\z^*$.
\Else
\State $\th^{(t)}=\th^{(t-1)}$.
\EndIf 
\EndFor
\end{algorithmic}
\end{algorithm}  
Algorithm~\ref{alg:abc_mcmc} presents the ABC-MCMC algorithm of 
\cite{marjoram2003markov} which avoids sampling from the prior and instead relies on building a chain with a Metropolis-Hastings  (MH) transition kernel, with state space $\{(\th,\by) \in  \RR^q \times \calX^n\}$, proposal distribution $ q(\z|\th)\times f(\by|\z)$ and  target distribution 
\begin{equation}
\pi_{\eps}(\th,\by|\by_0)\propto p(\th)f(\by|\th)\one_{\{\delta(\by_0,\by)<\eps\}},
\label{abc:t2}
\end{equation} 
where $\delta(\by_0,\by) = d(S(\by),S(\by_0))$.
Note that the goal is the marginal distribution for $\th$ which is:
\begin{equation}
\pi_\eps(\th|\by_0)=\int \pi_{\eps}(\th,\by|\by_0)d\th \propto \int p(\th)f(\by|\th)\one_{\{\delta(\by_0,\by)<\eps\}}d\th = p(\th)\pr(\delta(\by_0,\by)<\eps|\th).
\label{abc:trg}
\end{equation} 

There are  a few alternatives  to    Algorithm~\ref{alg:abc_mcmc}.    For instance, \cite{lee2012discussion} approximates $P(\delta(\by_0,\by)<\eps|\th)$ via one of its unbiased estimators, $J^{-1} \sum_{j=1}^J 1_{\{\delta(\by_0,\by_j)<\eps\}}$ where $J\ge 1$ and each $\by_j$ is simulated from $f(\by| \th)$. The use of unbiased estimators for $P(\delta(\by_0,\by)<\eps|\th)$ when computing the MH acceptance  ratio can be validated using the theory  of   pseudo-marginal MCMC samplers \citep{andrieu2009pseudo}. Clearly, when the probability $P(\delta<\eps|\th)$ is very small, this method would require simulating a large number of $\delta$s (or equivalently $\by$s)  in order to move to a new state. 
Other MCMC designs suitable for ABC  can be found in \cite{bornn2014one}. 

Sequential Monte Carlo (SMC) samplers have also been successfully used for ABC (henceforth denoted  ABC-SMC)  \citep{sisson2007sequential, lee2012choice, filippi2013optimality}.  
 ABC-SMC requires a specified decreasing sequence $\eps_0>\cdots>\eps_J$. Lee's method \cite{lee2012choice} uses the Particle MCMC design \citep{andrieu2010particle} in which samples are updated as the target  distribution evolves with $\eps$. More specifically, it starts by sampling $\th_0^{(1)},\ldots, \th_0^{(M)}$ from $\pi_{\eps_0}(\th|\by_0)$ using Accept-Reject ABC. Subsequently, at time $t+1$ all samples are sequentially updated  so their  distribution is $\pi_{\eps_{t+1}}(\th|\by_0)$ \citep[see][for a complete description]{lee2012choice}. The advantage of this method is not only that it starts from large $\eps$, but also that it generates independent draws. 
 A comprehensive coverage of computational techniques for ABC can be found in \cite{sisson2018handbook} and references therein.
 We also note a general lack of guidelines concerning the selection of $\eps$, which is unfortunate as the performance of ABC sampling  depends heavily on its value. 
To make a fair comparison between different methods, we revise ABC-MCMC algorithm by introducing a decreasing sequence $\eps_0>\cdots>\eps_J$ ($J$ is number of "steps") similar to ABC-SMC and "learning" transition kernel during burn-in as in Algorithm~\ref{alg:ABC-MCMC-M}.
\begin{algorithm}
\caption{ABC MCMC modified (ABC-MCMC-M)}
\label{alg:ABC-MCMC-M}
\begin{algorithmic}[1]
\State Given $\by_0$, sequence $\eps_0>\cdots>\eps_J$, constant $c$, burn-in period $B$ and required number of samples $M$.
\State Define $\eps = \eps_0$.
\State Find initial $\th^{(0)}$ with simulated $\by$ such that $d(S(\by),S(\by_0))<\eps$.
\State Let $\tilde \mu$ be expectation of prior distribution and $\tilde \Sigma=c\Sigma$ where $\Sigma$ is covariance of the prior $p(\th)$.
\State Define, $b=\lfloor(B/J)\rfloor$ and define sequence $(a_1,\cdots,a_J)=(b,2b,\cdots,Jb)$.
\For{$t=1,\cdots,M$}
\If{$t=a_j$ for some $j=1,\cdots, J$}
\State Set $\eps = \eps_j$.
\State Find $\tilde \mu$ as mean of $\{\th^{(t)}\}$ $t=1,\cdots,(a_j-1)$ and $\tilde \Sigma=c\Sigma$ where $\Sigma$ is covariance of $\{\th^{(t)}\}$ $t=1,\cdots,(a_j-1)$.
\EndIf
\State Generate $\z^* \sim q(\cdot|\th^{(t-1)},\tilde \mu,\tilde \Sigma).$
\State Simulate $\by^* \sim f(\by|\z^*)$ and let $\delta^* = d(S(\by^*),S(\by_0)).$
\State Calculate $\alpha = \min\left(1,\one_{\{\delta^*<\eps\}}\frac{p(\z^*)q(\th^{(t-1)}|\z^*,\tilde \mu,\tilde \Sigma)}{p(\th^{(t-1)})q(\z^*|\th^{(t-1)},\tilde \mu,\tilde \Sigma)}\right)$
\State Generate independent $U \sim \mathcal U(0,1).$
\If{$U\leq \alpha$}
\State $\th^{(t)}=\z^*$.
\Else
\State $\th^{(t)}=\th^{(t-1)}$.
\EndIf 
\EndFor
\end{algorithmic}
\end{algorithm}  
Since the choice of proposal distribution $q(\cdot|\th)$  can considerably influence the performance of ABC-MCMC, we consider finite adaptation during the burn-in period of length $B$. In addition, during burn-in  the $\eps$ also varies, starting with a higher value (which makes it easier to find the initial $\th^{(0)}$ value) and gradually decreasing in accordance to a pre-determined scheme.  In our implementations we use independent MH sampling or RWM. In the former case, the  proposal is Gaussian $\calN(\cdot|\tilde{\mu},\tilde{\Sigma})$ with $c= 3$. The RWM proposal  is $\calN(\cdot|\th^{(t-1)},\tilde{\Sigma})$ with $c=2.38^2/q$  \citep{roberts1997weak, roberts2001optimal}. 

All the algorithms discussed so far rely on numerous generations of pseudo-data. Since the latter can be computationally costly,  proposals for reducing the simulation cost are made in \cite{wilkinson2014accelerating} and \cite{jarvenpaa2018efficient}.  The approaches are based on learning the dependence between  $\delta$ and  $\th$ and, from it,  establishing directly whether a proposal $\th$ should be accepted or not. Flexible regression models are used to model these unknown functional relationships. The overall  performance depends  on the signal to noise ratio and on the model's performance in  capturing  patterns that can be highly complex. 

To accelerate ABC-MCMC we consider a different approach and propose to store and utilize past simulations (with appropriate weights) in order to speed up the calculation while keeping under control the resulting approximating errors. The objective is to approximate $P(\delta<\eps|\z^*)$ for any $\z^*$ at every MCMC iteration using past simulated $(\z,\delta)$ proposals, making the whole procedure computationally faster. The changes proposed perturb the chain's transition kernel 
and we rely on the theory developed  by \cite{mitrophanov2005sensitivity} and \cite{johndrow2015approximations} to assess the approximating error for the posterior. 
The  k-Nearest-Neighbor (kNN) method is used to integrate past observations into the transition kernel. The main advantage of kNN is that it is uniformly strongly consistent which guarantees that for a large enough chain history, we can control the error between the intended stationary distribution and that of the proposed accelerated MCMC as  shown in Section~\ref{sec:theory}.  
\section{Approximated ABC-MCMC (AABC-MCMC)}
\label{sec:meth}
 In this section we describe an  ABC-MCMC algorithm that utilizes past  simulations to significantly improve computational efficiency.  As noted previously, the ABC-MCMC with threshold $\eps$ targets the density 
\begin{equation}
\label{eq:abc_true}
\pi_{\eps}(\th|\by_0)\propto p(\th)P(\delta(\by_0,\by)<\eps|\th),
\end{equation}
where $\delta(\by_0,\by)=d(S(\by),S(\by_0))$ with $\by\sim f(\by|\th)$ and $\th \in \Th$. Denote $h(\th):= P(\delta(\by_0,\by)<\eps|\th)$ and note that if $h$ were known for every $\th$ then we could run an  MH-MCMC chain with invariant target density proportional to $p(\th)h(\th)$.    
Alas,  $h$ is almost always unknown and  unbiased estimates can be computationally expensive or statistically inefficient.  We build an alternative approach that relies on  consistent estimates of $h$ that rely on the chain's past history, are much cheaper to compute, and require a new theoretical treatment.  

To fix ideas, suppose that at time $t$ we generate the proposal  $(\zeta_{t+1}, \bw_{t+1}) \sim q(\zeta|\theta^{(t)}) f(\bw|\zeta)$ and suppose  that at iteration $N$,  all the proposals $\zeta_n$, regardless whether they were accepted or rejected, along with corresponding distances $\delta_n=\delta(\bw_n,\by_0)$ are available for $0\le n \le N-1$. This past history is stored in the set $\calZ_{N-1}=\{\zeta_n,\delta_n\}_{n=1}^{N-1}$. Given a new proposal $\zeta^* \sim q(|\th^{(t)})$, we  generate $\bw^* \sim f( \cdot |\zeta^*)$  and compute $\delta^*=d(S(\bw^*),S(\by_0))$. Set $\zeta_N=\zeta^*$, $\bw_N=\bw^*$,  $\calZ_{N}=\calZ_{N-1} \cup \{(\zeta_N,\delta_N)\}$ and  estimate $h(\zeta^*)$ using 
\begin{equation}
 \hat h(\zeta^*) = \frac{ \sum_{n=1}^{N}W_{Nn}(\zeta^*) \one_{\{\delta_n<\eps\}} }{\sum_{n=1}^{N}W_{Nn}(\zeta^*)},
\label{hw2}
\end{equation} 
where  $W_{Nn}(\zeta^*)=W(\|\zeta_n-\zeta^*\|)$ are weights and  $W:\RR \rightarrow[0,\infty)$ is a decreasing function. We  discuss a couple of choices for the function $W(\cdot)$  below. 

{\it Remark 1:} Note that if some of the past proposals have been accepted, then  the Markovian property of the chain is violated since the acceptance probability  does not depend solely on the current state, but also on the past ones. We defer the  theoretical considerations for dealing with adaptation in the context of perturbed Markov chains to a future communication. Below,  we modify slightly the construction above while respecting the core idea. 

In order to separate the samples used as proposals from those used to estimate $h$ in \eqref{hw2}, we will generate at each time $t$ two independent samples $\zeta_{t+1}\sim q(\zeta|\theta^{(t)})$ and $(\tilde \zeta_{t+1}, \tilde \bw_{t+1})$ from $q(\zeta|\theta^{(t)}) f(\bw|\zeta)$. Then, the history  $\calZ$ collects the $(\tilde \zeta,\tilde \delta)$ samples while the proposal used to update the chain is the $\zeta$ sample. With this notation \eqref{hw2} becomes
\begin{equation}
 \hat h(\zeta^*) = \frac{ \sum_{n=1}^{N}W_{Nn}(\zeta^*) \one_{\{\tilde \delta_n<\eps\}} }{\sum_{n=1}^{N}W_{Nn}(\zeta^*)},
\end{equation} 
where $\tilde \delta_n = \delta(\tilde \bw ,\by_0)$ and  $W_{Nn}(\zeta^*)=W(\|\tilde \zeta_n-\zeta^*\|)$.

\begin{algorithm}
\caption{Approximated ABC MCMC (AABC-MCMC)}
\label{alg:AABC}
\begin{algorithmic}[1]
\State Given $\by_0$ with summary statistics $s_0=S(\by_0)$, sequence $\eps_0>\cdots>\eps_J$, constant $c$, burn-in period $B$, required number of samples $M$, initial simulations $\calZ_N=\{\tilde \zeta_n,\tilde \delta_n\}_{n=1}^{N}$ with $\tilde \zeta_n \sim p(\zeta)$, $\tilde \bw_n \sim f(\cdot|\tilde \zeta_n)$ and $\tilde \delta_n = d(S(\tilde \bw_n),s_0)$.
\State Define $\eps = \eps_0$.
\State Find initial $\th^{(0)}$ with simulated $\by$ such that $d(S(\by),s_0)<\eps$.
\State Let $\tilde \mu$ be expectation of prior distribution and $\tilde \Sigma=c\Sigma$ where $\Sigma$ is covariance of the prior $p(\th)$.
\State Define, $b=\lfloor(B/J)\rfloor$ and define sequence $(a_1,\cdots,a_J)=(b,2b,\cdots,Jb)$
\For{$t=1,\cdots,J$}
\If{$t=a_j$ for some $j=1,\cdots, J$}
\State Set $\eps = \eps_j$.
\State Find $\tilde \mu$ as mean of $\th^{(t)}$ $t=1,\cdots,(a_j-1)$ and $\tilde \Sigma=c\Sigma$ where $\Sigma$ is covariance of $\th^{(t)}$ $t=1,\cdots,(a_j-1)$.
\EndIf
\State Generate $\zeta^* \sim \calN(\cdot;\tilde \mu,\tilde \Sigma)$ and $\tilde \zeta^* \sim \calN(\cdot;\tilde \mu,\tilde \Sigma)$. 
\State Simulate $\tilde \bw^* \sim f(\cdot|\tilde \zeta^*)$ and let $\tilde \delta^* = d(S(\tilde \bw^*),s_0).$
\State Add the dual simulated pair of parameter and discrepancy to the past set: $\calZ_N = \calZ_{N-1}\cup \{\tilde \zeta^*,\tilde \delta^*\}$ and set $N=N+1$.
\State $\hat h(\zeta^*) = \frac{ \sum_{n=1}^{N}W_{Nn}(\zeta^*) \one_{\{\tilde \delta_n<\eps\}} }{\sum_{n=1}^{N}W_{Nn}(\zeta^*)}$.
\State $\hat h(\th^{(t)}) = \frac{ \sum_{n=1}^{N}W_{Nn}(\th^{(t)}) \one_{\{\tilde \delta_n<\eps\}} }{\sum_{n=1}^{N}W_{Nn}(\th^{(t)})}$.
\State Calculate $\alpha = \min\left(1,\frac{p(\zeta^*)\hat h(\zeta^*) \calN(\th^{(t)};\tilde \mu,\tilde \Sigma)}{p(\th^{(t)})\hat h(\th^{(t)})\calN(\zeta^*;\tilde \mu,\tilde \Sigma)}\right)$
\State Generate independent $U \sim \mathcal U(0,1).$
\If{$U\leq \alpha$}
\State $\th^{(t+1)}=\zeta^*$.
\Else
\State $\th^{(t+1)}=\th^{(t)}$.
\EndIf 
\EndFor
\end{algorithmic} 
\end{algorithm}  

{\it Remark 2:}  Even if $\delta^*$ is greater than $\eps$ (which would trigger automatically rejection for ABC-MCMC), suppose there is a close neighbour of $\zeta^*$ whose corresponding $\delta$ is less than $\eps$. Then the estimated $h(\z^*)$ will not be zero and there is a chance of moving to a different state. Intuitively, this is expected to reduce the variance of the accepting probability estimate. 

{\it Remark 3:}  When comparing the unbiased estimator 
\beq \tilde h(\zeta^*)={1\over K} \sum_{j=1}^K  \one_{\{\tilde \delta_j<\eps\}},
\label{unb}
\eeq with the  consistent estimator \beq
\hat h(\zeta^*) = \frac{ \sum_{n=1}^{N}W_{Nn}(\zeta^*) \one_{\{\tilde \delta_n<\eps\}} }{\sum_{n=1}^{N}W_{Nn}(\zeta^*)},
\label{wgh}\eeq
 we hope to outperform both the   small and large $K$ cases in \eqref{unb}. For the small $K$, we expect to reduce the variability in our acceptance probabilities, while for larger $K$ we expect  to reduce the computational costs without sacrificing much in terms of precision.

Since the proposed weighted estimate is no longer an unbiased estimator of $h(\th)$, a new theoretical evaluation is needed to study the effect of  perturbing  the  transition kernel on the statistical analysis. Central to the algorithm's utility is the ability to control the total variation distance between the desired distribution of interest given in \eqref{eq:abc_true} and the modified chain's target. As will be shown in Section~\ref{sec:theory}, we rely  on three assumptions to ensure that the chain would approximately sample from \eqref{eq:abc_true}: 1) compactness of $\Th$; 2) uniform ergodicity of the chain using the true $h$ and 3) uniform convergence in probability of $\hat h(\th)$ to $h(\th)$ as $N\to \infty$. 

The k-Nearest-Neighbor (kNN) regression approach \citep{fix1951discriminatory, biau2015lectures} has a property of uniform consistency \citep{cheng1984strong}.  Define $K=g(N)$ (in our numerical experiments we have used $g(\cdot)=\sqrt{(\cdot)}$). Without loss of generality we relabel the elements of $\calZ_N=\{\tilde \zeta_n,\tilde \delta_n\}_{n=1}^N$ according to distance  $\| \tilde \zeta_n-\zeta^*\|$ so that $(\tilde \z_1,\tilde \delta_1)$   and $(\tilde \z_N,\tilde \delta_N)$ corresponds to the smallest and largest  among all distances $\{\|\tilde \zeta_j -\zeta^*\|: \; 1\le j \le N\}$, respectively. The kNN method sets $W_{Nn}(\z^*)$ to zero for all $n>K$.   For $n\leq K$, we focus on the following two weighting schemes:
\begin{enumerate}
\item[(U)]  The \textit{uniform} kNN  with $W_{Nn}(\z^*)=1$ for all $n \leq K$; 
\item[(L)] The \textit{linear} kNN  with $W_{Nn}(\zeta^*)=W(\|\tilde \zeta_n -\zeta^*\|)=1-\|\tilde \zeta_n -\zeta^*\|/\|\tilde \zeta_K-\zeta^*\|$ for $n\leq K$ so that the weight decreases from $1$ to $0$ as 
$n$ increases from $1$ to $K$.
\end{enumerate}

The kNN's  theoretical properties that are  used to validate our sampler rely on   independence between the  pairs  $\{\tilde \zeta_n,\tilde \delta_n\}_{n\ge1}$. Therefore, throughout the paper, we use an independent proposal in the MH sampler, i.e.  $q(\cdot|\th^{(t)})=q(\cdot)$  and $q$ is Gaussian. The entire procedure is outlined in Algorithm~\ref{alg:AABC}. 

 To conclude, at the end of a simulation of size $M$ the MCMC samples are $\{\th^{(1)},\ldots, \th^{(M)}\}$ and the history used for updating the chain is $\{(\tilde \zeta_1, \tilde\delta_1),\ldots, (\tilde \zeta_M, \tilde \delta_M)\}$. The two sequences are independent of one another, i.e.  for any $N>0$, the elements in $\calZ_N$ are independent of the chain's history up to time $N$. 
 
Note also that $h(\theta^{(t)})$ is required in order  to determine the  acceptance probability at step $t+1$. In this case the $h$-value may be updated  if $\|\th^{(t)} - \tilde \z^*\|$ is small enough. 


In the next section we extend the approximate MCMC construction  to Bayesian Synthetic Likelihood. In Sections 5 and 6 we use numerical experiments to show that the proposed procedure generally improves the mixing of a chain.  
\section{BSL and Approximated BSL (ABSL)}
\label{sec:bsl}

An alternative approach to bypass the intractability of the sampling distribution is proposed by  \cite{wood2010statistical}. His approach is based on the assumption that the conditional  distribution for a user-defined statistic $S(\by)$ given $\th$ is
Gaussian with mean $\mu_{\th}$ and covariance matrix $\Sigma_{\th}$. The  \textit{Synthetic Likelihood} (SL) procedure assigns to each  $\th$ the likelihood  $SL(\th)={\cal{N}}(s_{0};\mu_{\th},\Sigma_{\th})$, where  $s_0=S(\by_0)$ and
${\cal{N}}(x; \mu, \Sigma)$ denotes the density of a normal with mean $\mu$ and covariance $\Sigma$. SL can be used for maximum likelihood estimation as in \cite{wood2010statistical} or within the Bayesian paradigm as proposed by \cite{drovandi2018approximating} and \cite{price2018bayesian}.  The latter work proposes to sample the approximate posterior generated by the Bayesian Synthetic Likelihood (BSL) approach,  $\pi(\th|s_0)\propto p(\th)\calN(s_{0};\mu_{\th},\Sigma_{\th})$, using a MH sampler.   Direct calculation of the acceptance probability  is not possible because  the conditional mean and covariance are unknown for any $\theta$. However, both can be estimated based on $m$ statistics $(s_1,\cdots,s_m)$ sampled from their conditional distribution given $\th$. More precisely, after simulating $\by_i\sim f(\by|\th)$ and setting $s_i=S(\by_i)$, $i=1,\cdots,m$,  one can estimate
\begin{equation}
\label{eq:bsl}
\begin{split}
\hat \mu_{\th} & = \frac{\sum_{i=1}^m s_i}{m},\\
\hat \Sigma_{\th} & =  \frac{ \sum_{i=1}^m (s_i-\hat\mu_{\th})(s_i-\hat\mu_{\th})^T }{m-1},
\end{split}
\end{equation} 
so that the synthetic likelihood is 
\beq SL(\th|\by_0)={\cal{N}}(S(\by_0);\hat \mu_{\th} , \hat \Sigma_{\th}).
\label{sl}
\eeq
\begin{algorithm}
\caption{Bayesian Synthetic Likelihood (BSL-MCMC)}
\label{alg:bsl}
\begin{algorithmic}[1]
\State Given $s_0=S(\by_0)$, number of simulations $m$ and required number of samples $M$.
\State Get initial $\th^{(0)}$, estimate $\hat \mu_{\th^{(0)}},\hat \Sigma_{\th^{(0)}}$ by simulating $m$ statistics given $\th^{(0)}$.
\State Define $h(\th^{(0)}) = \calN(s_0;\hat \mu_{\th^{(0)}},\hat \Sigma_{\th^{(0)}})$.
\For{$t=1,\cdots,M$}
\State Generate $\zeta^* \sim q(\cdot|\th^{(t-1)}).$
\State Estimate $\hat \mu_{\zeta^*},\hat \Sigma_{\zeta^*}$ by simulating $m$ pseudo-data points $\{\by^{(j)}: \; 1\le j \le m\}$ and corresponding statistics $\{S(\by^{(j)}):\; 1\le j \le m\}$ given $\zeta^*$.
\State Calculate $h(\zeta^*) = \calN(s_0;\hat \mu_{\zeta^*},\hat \Sigma_{\zeta^*})$.
\State Calculate $\alpha = \min\left(1,\frac{p(\zeta^*)h(\zeta^*) q(\th^{(t-1)}|\zeta^*)}{p(\th^{(t-1)})h(\th^{(t-1)})q(\zeta^*|\th^{(t-1)})}\right)$
\State Independently generate $U \sim \mathcal U(0,1).$
\If{$U\leq \alpha$}
\State Set $\th^{(t)}=\zeta^*$ .
\Else
\State Set $\th^{(t)}=\th^{(t-1)}$.
\EndIf 
\EndFor
\end{algorithmic}
\end{algorithm}   
The pseudo-code in Algorithm \ref{alg:bsl} shows the steps involved in the BSL-MCMC sampler. Since each MH step requires calculating the likelihood ratios between two SLs calculated at different parameter values, one can anticipate the heavy computational load involved in running the chain for thousands of iterations, especially if sampling data  $\by$ is expensive.  Note that even though these estimates for the conditional mean and covariance are unbiased, the estimated value of the Gaussian likelihood is biased and therefore pseudo marginal MCMC theory is not applicable. \cite{price2018bayesian}  presented an unbiased Gaussian likelihood estimator and  have empirically showed that using biased and unbiased estimates generally perform similarly. They have also remarked that this procedure is very robust to the number of simulations $m$, and demonstrate empirically that using $m=50$ to $200$  produce similar results.  

The normality assumption for summary statistics is certainly a strong assumption which may not hold in practice. Following up on this, \cite{an2018robust} relaxed the jointly Gaussian assumption to Gaussian copula with non-parametric marginal distribution estimates (NONPAR-BSL), which includes joint Gaussian as a special case, but is much more flexible. The estimation is based, as in the BSL framework, on $m$ pseudo-data samples simulated for each $\th$. 

Clearly, BSL is computationally costly and requires many pseudo-data simulations to obtain Monte Carlo samples of even moderate sizes. To accelerate BSL-MCMC we propose to store and utilize past simulations of $(\z,s)$ to approximate $\mu_{\zeta^*},\Sigma_{\zeta^*}$ for any $\zeta^* \in \Th$, making the whole procedure computationally faster. As in the previous section, we separate the simulation used to update the chain from the simulation used to enrich the history of the chain.    
The approach can trivially be extended for NONPAR-BSL but we do not pursue it further here. K-Nearest-Neighbor (kNN) method is used as a non-parametric estimation tool for different quantities described above. As will be shown in Section~\ref{sec:theory} with the proposed method we can control the error between the intended stationary distribution and that of the proposed accelerated MCMC.
\subsection*{Approximated Bayesian Synthetic Likelihood (ABSL)}      
Setting $s_0=S(\by_0)$ and assuming conditional normally for this statistic the objective is to sample from
\begin{equation}
\label{eq:bsl_true}
\pi(\th|s_0)\propto p(\th)\calN(s_0;\mu_{\th},\Sigma_{\th}).
\end{equation}     
During the MCMC run, {the proposal $\zeta^*$ is generated from $q(\cdot)$ 
and the history $\calZ_N$ is enriched using 
$\tilde \zeta^* \sim q( \cdot)$, $\{ \tilde \by^{*(j)}\}_{j=1}^m \stackrel{iid}{\sim} f(\by |\tilde\zeta^*) $ and $\{\tilde s^{*(j)}=S(\tilde \by^{*(j)})\}_{j=1}^m$}. Then for any $\zeta$, the conditional mean and covariance of statistics vector is estimated using past samples as weighted averages:
\begin{equation}
\begin{split}
\hat \mu_{\zeta} & = \frac{\sum_{n=1}^N [ W_{Nn}(\zeta) \sum_{j=1}^m \tilde s_n^{(j)}]}{m\sum_{n=1}^N W_{Nn}(\zeta)},\\
\hat \Sigma_{\zeta} & =  \frac{ \sum_{i=1}^N [W_{Nn}(\zeta) \sum_{j=1}^m (\tilde s_n^{(j)}-\hat\mu_{\zeta})(\tilde s_n^{(j)}-\hat\mu_{\zeta})^T ]}{m\sum_{i=1}^N W_{Nn}(\zeta)}.
\end{split}
\end{equation}
\begin{algorithm}
\caption{Approximated Bayesian Synthetic Likelihood (ABSL)}
\label{alg:ABSL}
\begin{algorithmic}[1]
\State Given $s_0=S(\by_0)$, constant $c$, burn-in period $B$, $J$ number of adaption points during burn-in , required number of samples $M$, initial pseudo data simulations $\calZ_N=\{\tilde \zeta_n,\{\tilde s_n^{(j)}\}_{j=1}^m\}_{n=1}^{N}$ with $\tilde \zeta_n \sim p(\zeta)$, $\tilde \by_n^{(j)} \sim f(\by|\tilde \zeta_n)$ and $\tilde s_n^{(j)} = S(\tilde \by_n^{(j)})$.
\State Get initial $\th^{(0)}$.
\State Let $\tilde \mu$ be expectation of prior distribution and $\tilde \Sigma=c\Sigma$ where $\Sigma$ is covariance of the prior $p(\th)$.
\State Define, $b=\lfloor(B/J)\rfloor$ and define sequence $(a_1,\cdots,a_J)=(b,2b,\cdots,Jb)$
\For{$i=1,\cdots,M$}
\If{$i=a_j$ for some $j=1,\cdots, J$}
\State Find $\tilde \mu$ as mean of $\th^{(t)}$ $t=1,\cdots,(a_j-1)$ and $\tilde \Sigma=c\Sigma$ where $\Sigma$ is covariance of $\th^{(t)}$ $t=1,\cdots,(a_j-1)$.
\EndIf
\State Generate $\zeta^* , \tilde \zeta^* \stackrel{iid}{\sim} \calN(\cdot;\tilde \mu,\tilde \Sigma).$
\State Simulate $\tilde \by^{*(j)} \sim f(\by|\tilde \zeta^*)$ and let $\tilde s^{*(j)} = S(\tilde \by^{*(j)})$ for $1\le j \le m$. 
\State Add simulated parameter and statistics to the past set: $\calZ_N = \calZ_{N-1} \cup \{\tilde \zeta^*,\{\tilde s_n^{(j)}\}_{j=1}^m\}\}$ and set $N=N+1$.
\State Calculate: 
\begin{eqnarray*}
\hat \mu_{\zeta^*}  &=& \frac{\sum_{n=1}^N [ W_{Nn}(\zeta^*) \sum_{j=1}^m \tilde s_n^{(j)}]}{m\sum_{n=1}^N W_{Nn}(\zeta^*)}\\
\hat \Sigma_{\zeta^*}  &=&  \frac{ \sum_{i=1}^N [W_{Nn}(\zeta^*) \sum_{j=1}^m (\tilde s_n^{(j)}-\hat\mu_{\zeta^*})(\tilde s_n^{(j)}-\hat\mu_{\zeta^*})^T ]}{m\sum_{i=1}^N W_{Nn}(\zeta^*)}
\end{eqnarray*}
\State Calculate: 
\begin{eqnarray*}
\hat \mu_{\th^{(t)}} &=& \frac{\sum_{n=1}^N[ W_{Nn}(\th^{(t)}) \sum_{j=1}^m \tilde s_n^{(j)}]}{m\sum_{n=1}^N W_{Nn}(\th^{(t)})}\\ 
\hat \Sigma_{\th^{(t)}}  &=&  \frac{ \sum_{i=1}^N[ W_{Nn}(\th^{(t)}) \sum_{j=1}^m (\tilde s_n^{(j)}-\hat\mu_{\th^{(t)}})(\tilde s_n^{(j)}-\hat\mu_{\th^{(t)}})^T] }{m\sum_{i=1}^N W_{Nn}(\th^{(t)})}.
\end{eqnarray*}

\State $\hat h(\zeta^*) = \calN(s_0;\hat \mu_{\zeta^*},\hat \Sigma_{\zeta^*})$.
\State $\hat h(\th^{(t)}) = \calN(s_0;\hat \mu_{\th^{(t)}},\hat \Sigma_{\th^{(t)}})$.
\State Calculate $\alpha = \min\left(1,\frac{p(\zeta^*)\hat h(\zeta^*) \calN(\th^{(t)};\tilde \mu,\tilde \Sigma)}{p(\th^{(t)})\hat h(\th^{(t)})\calN(\zeta^*;\tilde \mu,\tilde \Sigma)}\right)$.
\State Generate independent $U \sim \mathcal U(0,1).$
\If{$U\leq \alpha$}
\State $\th^{(t+1)}=\zeta^*$.
\Else
\State $\th^{(t+1)}=\th^{(t)}$.
\EndIf 
\EndFor
\end{algorithmic} 
\end{algorithm}  Again the weights are functions of distance between proposed value and parameters' values from the past $W_{Nn}(\zeta)=W(\|\zeta -\tilde \zeta_n\|)$, where $\|\cdot \|$ is the Euclidean norm. To get appropriate convergence properties we use the kNN approach to calculate weights $W_{Nn}$, where only the $K=\sqrt{N}$ closest values to $\zeta$ are used in the calculation of conditional means and covariances. As  in the previous section, uniform (U) and linear (L) weights are used.  Once again we expect that the use of the chain's cumulated  history  can significantly speed up the whole procedure since it relieves the pressure to simulate many data sets $\by$ at every step.  The use of the independent Metropolis kernel ensures that $\calZ_N$  contains  independent draws  which is required for theoretical validation  in Section~\ref{sec:theory}. We will also show that under mild assumptions and if $\Th$ is compact, the proposed algorithm exhibits good error control properties. In order to get  a rough idea about the  proposal,  we propose to perform finite adaptation with $J$ adaptation points during the burn-in period.  Algorithm~\ref{alg:ABSL} outlines the proposed Approximated BSL  (ABSL) method.
For the simulations we report on in the next section,  we have used $c=1.5$ and $J=15$ to be consistent with AABC-MCMC, ABC-MCMC-M and ABC-SMC procedures.
\section{Simulations}
\label{sec:sim}
We analyze the following statistical models:
\begin{enumerate}
\item[(MA2)] Simple Moving Average model of lag 2;
\item[(R)] Ricker's model;
\item[(SVG)] Stochastic volatility with Gaussian emission noise;
\item[(SVS)] Stochastic volatility with $\al$-Stable errors.
\end{enumerate}
For all these models, the simulation of pseudo data for any parameter is simple and computationally fast, but the use of standard estimation methods can be quite challenging, especially for (R), (SVG) and (SVS). 
For ABC samplers before running a MCMC chain we estimate initial and final thresholds $\eps_0$ and $\eps_{15}$ (15 equal steps in log scale were used for all models) and the matrix $A$ which is used to calculate the discrepancy  $\delta = d(S(\by),S(\by_0))=(S(\by)-S(\by_0)^T A (S(\by)-S(\by_0))$. \\
To estimate $A$, we use the following steps:
\begin{itemize}
\item Set $A=\mathbf{I}_d$
\item Repeat steps I and II below for $J$ times ($J$=3 in our implementations)
\begin{description}
\item[I] Generate 500 pairs $\{\z_i,\by_i\}_{i=1}^{500}$ from $p(\z)f(\by|\z)$ and calculate discrepancies
$\{\z_i,\delta_i\}_{i=1}^{500}$ with $\delta_i = d(S(\by_i),S(\by_0))$
\item[II] Let $\z^*$ with smallest discrepancy. Finally generate 100 pseudo-data $(\by_1,\ldots,\by_{100})$ from $f(\by|\z^*)$, compute corresponding summary statistics $(s_1,\ldots,s_{100})$ and set $A$ to be the inverse of covariance matrix of $(s_1,\ldots,s_{100})$. 
\end{description}
\end{itemize}

We set $\eps_0$ to be the 5\% quantile of the observed discrepancies. The final $\eps_{15}$ is obtained by implementing a Random Walk version of Algorithm~\ref{alg:ABC-MCMC-M} and decreasing $\eps_0$  gradually by setting $\eps_j$ as  the  1\% quantile of  discrepancies $\delta$  corresponding to accepted samples generated  between adaption points $a_{j-1}$ and $a_j$, for $2\le j \le 15$.

The number of simulations was set to 500 and 100 just for computational convenience and is not driven by any theoretical arguments. 

We compare the following algorithms:
\begin{enumerate}
\item[(SMC)] Standard Sequential Monte Carlo for ABC;
\item[(ABC-RW)] The modified ABC-MCMC algorithm which updates $\eps$ and the random walk Metropolis transition kernel during burn-in; 
\item[(ABC-IS)]  The modified ABC-MCMC algorithm which updates $\eps$ and the Independent Metropolis transition kernel during burn-in;
\item[(BSL-RW)] Modified BSL where it adapts the random walk Metropolis transition kernel during burn-in;  
\item[(BSL-IS)] Modified BSL where it adapts the independent Metropolis transition kernel during burn-in; 
\item[(AABC-U)] Approximated ABC-MCMC with independent proposals and  uniform (U) weights;
\item[(AABC-L)] Approximated ABC-MCMC with independent proposals and  linear (L) weights;
\item[(ABSL-U)] Approximated BSL-MCMC with independent proposals and  uniform (U) weights;
\item[(AABC-L)] Approximated BSL-MCMC with independent proposals and  linear (L) weights.
\item[(Exact)]  Likelihood is computable and posterior samples are generated using an MCMC algorithm that is example-specific. 
\end{enumerate}

For SMC 500 particles were used, total number of iterations for ABC-RW, ABC-IS, AABC-U, AABC-L, ABSL-U and ABSL-L is 50000 with 10000 for burn-in. Since BSL-RW and BSL-IS are much more computationally expensive, total number of iterations were fixed at 10000 with 2000 burn-in and 50 pseudo-data simulations for every proposed parameter value (i.e. $m=50$). The Exact chain was run for 5000 iterations and 2000 for burn-in. It must be pointed out that all approximate samplers are based on the same summary statistics, same discrepancy function and the same $\eps$ sequence, so that they all start with the same initial conditions. 

For more reliable results we compare these sampling algorithms under data set replications. In this study we set the number of replicates $R= 100$, so that for each model 100 data sets were generated and each one was analyzed with the described above sampling methods. Various statistics and measures of efficiency were calculated for every model and data set, letting $\th_{rs}^{(t)}$ represent posterior samples from replicate $r=1,\cdots,R$, iteration $t=1,\cdots,M$ and parameter component $s=1,\cdots,q$ and similarly $\tilde \th_{rs}^{(t)}$ posterior from an exact chain (all draws are after burn-in period). We let $\th^{true}_s$ denote the true parameter that generated the data. Moreover let $D_{rs}(x)$, $\tilde D_{rs}(x)$ be estimated density function at replicate $r=1,\cdots,R$ and components $s=1,\cdots,q$ for approximate and exact chains respectively. Then the following quantities are defined:
\begin{equation*}
\begin{split}
& \mbox{Diff in mean (DIM)} = Mean_{r,s}(|Mean_t(\th_{rs}^{(t)}) - Mean_t(\tilde\th_{rs}^{(t)})|), \\
& \mbox{Diff in covariance (DIC)} = Mean_{r,s}(|Cov_t(\th_{rs}^{(t)}) - Cov_t(\tilde\th_{rs}^{(t)})|), \\
& \mbox{Total Variation (TV)} = Mean_{r,s}\left(0.5\int |D_{rs}(x)-\tilde D_{rs}(x)| dx \right), \\
& \mbox{Bias}^2 = Mean_s\left(\left( Mean_{tr}(\th_{rs}^{(t)}) - \th^{true}_s \right)^2\right), \\
& \mbox{VAR} = Mean_s(Var_r(Mean_t(\th_{rs}^{(t)}))), \\
& \mbox{MSE} = \mbox{Bias}^2 + \mbox{VAR},
\end{split}
\end{equation*}
where $Mean_t(a_{st})$ is defined as average of $\{a_{st}\}$ over index $t$ and in similar manner $Var_t(a_{st})$ and $Cov_t(a_{st})$ representing variance and covariance respectively. The first three measures are useful in determining how close posterior draws from different samplers are to the draws generated by the exact chain (when it is available). On the other hand the last three are standard quantities that measure how close in mean square posterior means are to the true parameters that generated the data. To study efficiency of proposed algorithms we need to take into account CPU time that it takes to run a chain as well as auto-correlation properties. Define auto-correlation time (ACT) for every parameter's component and replicate of samples $\th_{rs}^{(t)}$ as:
$$ \mbox{ACT}_{rs} = 1 + 2\sum_{a =1}^{\infty}\rho_a(\th_{rs}^{(t)}), $$
where $\rho_a$ is auto-correlation coefficient at lag $a$. In practice we sum all the lags up to the first negative correlation. Letting $M-B$ to be number of chain iterations (after burn-in) and $CPU_{r}$ correspond to total CPU time to run the whole chain during replicate $r$, we use Effective Sample Size (ESS) and Effective Sample Size per CPU (ESS/CPU) as:
\begin{equation}
\begin{split}
& \mbox{ESS} = Mean_{rs}((M-B)/ \mbox{ACT}_{rs}), \\
& \mbox{ESS/CPU} = Mean_{rs}((M-B)/ \mbox{ACT}_{rs}/CPU_r). 
\end{split}
\end{equation} 
Note that these indicators are averaged over parameter components and replicates.
ESS intuitively can be thought as approximate number of "independent" samples out of $M-B$, the higher is ESS the more efficient is the sampling algorithm, when ESS is combined with CPU (ESS/CPU) it provides a powerful indicator for MCMC's efficiency. Generally a sampler with highest ESS/CPU  is preferred as it produces larger number of "independent" draws per unit time. 
\subsection{Moving Average Model}
A popular toy example to check performances of ABC and BSL techniques is MA2 model:
\begin{equation}
\begin{split}
&z_{i}\overset{iid}{\sim}\calN(0,1); \hsp i=\{-1,0,1,\cdots,n\}, \\
& y_i = z_i + \theta_1 z_{i-1} + \theta_2 z_{i-2}; \hsp i=\{1,\cdots,n\}.
\end{split}
\end{equation}
The data are represented by the sequence $\by=\{y_1,\cdots,y_n\}$. It is well known that $Y_i$ follow a stationary distribution for any $\th_1,\th_2$, but there are conditions required for identifiability. Hence, we impose uniform prior on the following set:
\begin{equation}
\begin{split}
\th_1+\th_2 > -1,& \\
\th_1-\th_2 < 1,&  \\
-2 < \th_1 < 2,& \\
-1 < \th_2 < 2.& 
\end{split}
\end{equation}  
It is very easy to see that the joint distribution of $\by$ is multivariate Gaussian with mean 0, diagonal variances $1+ \th_1^2 + \th_2^2$, covariance at lags 1 and 2, $\th_1+\th_1\th_2$ and $\th_2$ respectively and zero at other lags. In this case, (Exact) sampling is feasible. 
For simulations we set $\{\th_1=0.6,\th_2=0.6\}$, $n=200$ and define summary statistics $S(\by)=(\hat \gamma_0(\by),\hat \gamma_1(\by),\hat \gamma_2(\by))$ as sample variance and covariances at lags 1 and 2.  
First we show results based on one replicate. Figure~\ref{fig:ma-trace-wabcu} shows the trace plots, histograms and auto-correlation functions estimated from posterior draws for parameters $\theta_1$ and $\theta_2$ for the AABC-U sampler. Note that only post burn-in samples are shown.
\begin{figure}[!ht]
\begin{center}
\caption {MA2 model: AABC-U Sampler. Each row corresponds to parameters $\theta_1$ (top row) and $\theta_2$ (bottom row)  and shows in order from left to right: Trace-plot, Histogram and Auto-correlation function. Red lines represent true parameter values. }
\includegraphics[scale=0.40]{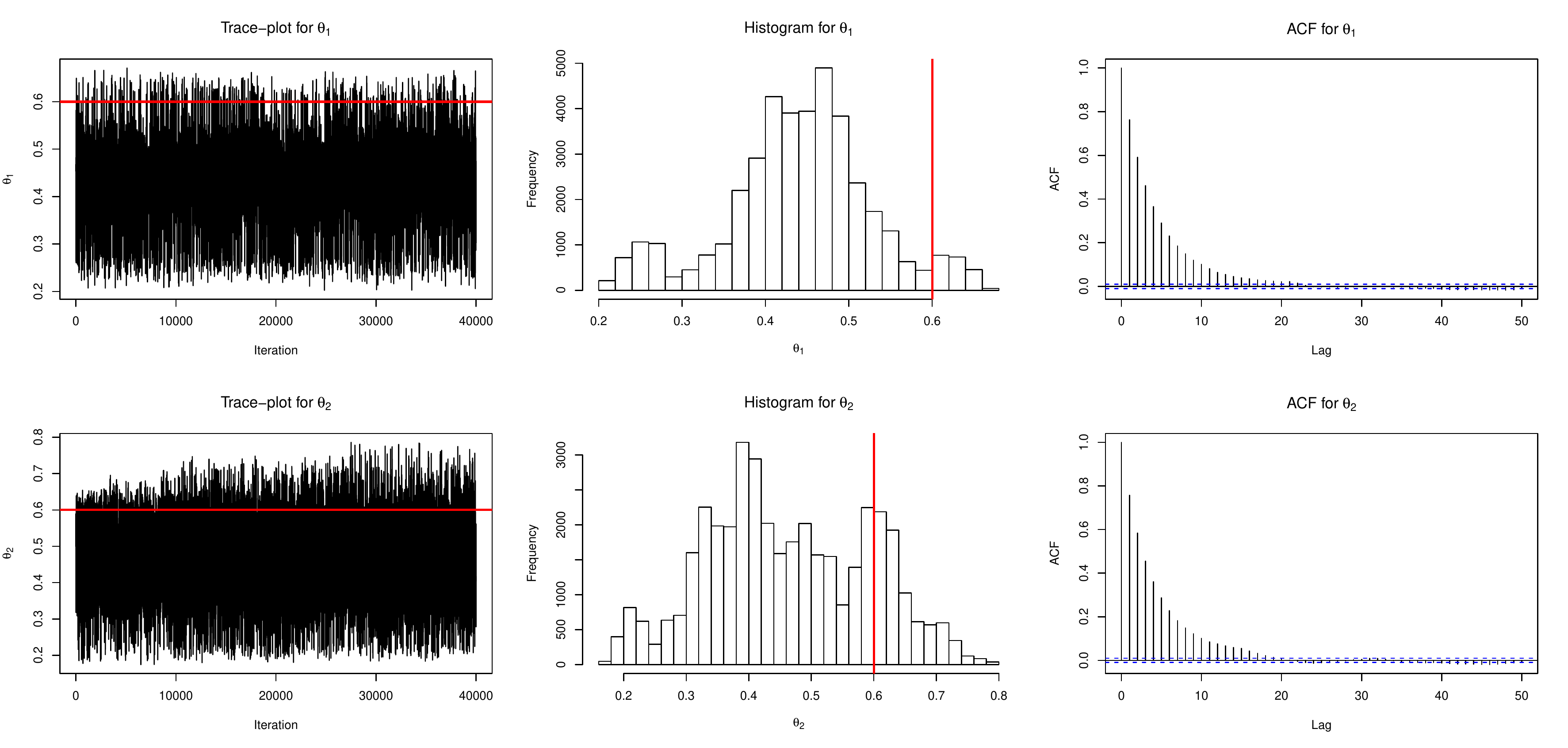}
\label{fig:ma-trace-wabcu}
\end{center}
\end{figure}
\begin{figure}[!ht]
\begin{center}
\caption {MA2 model: ABSL-U Sampler. Each row corresponds to parameters $\theta_1$ (top row) and $\theta_2$ (bottom row)  and shows in order from left to right: Trace-plot, Histogram and Auto-correlation function. Red lines represent true parameter values. }
\includegraphics[scale=0.40]{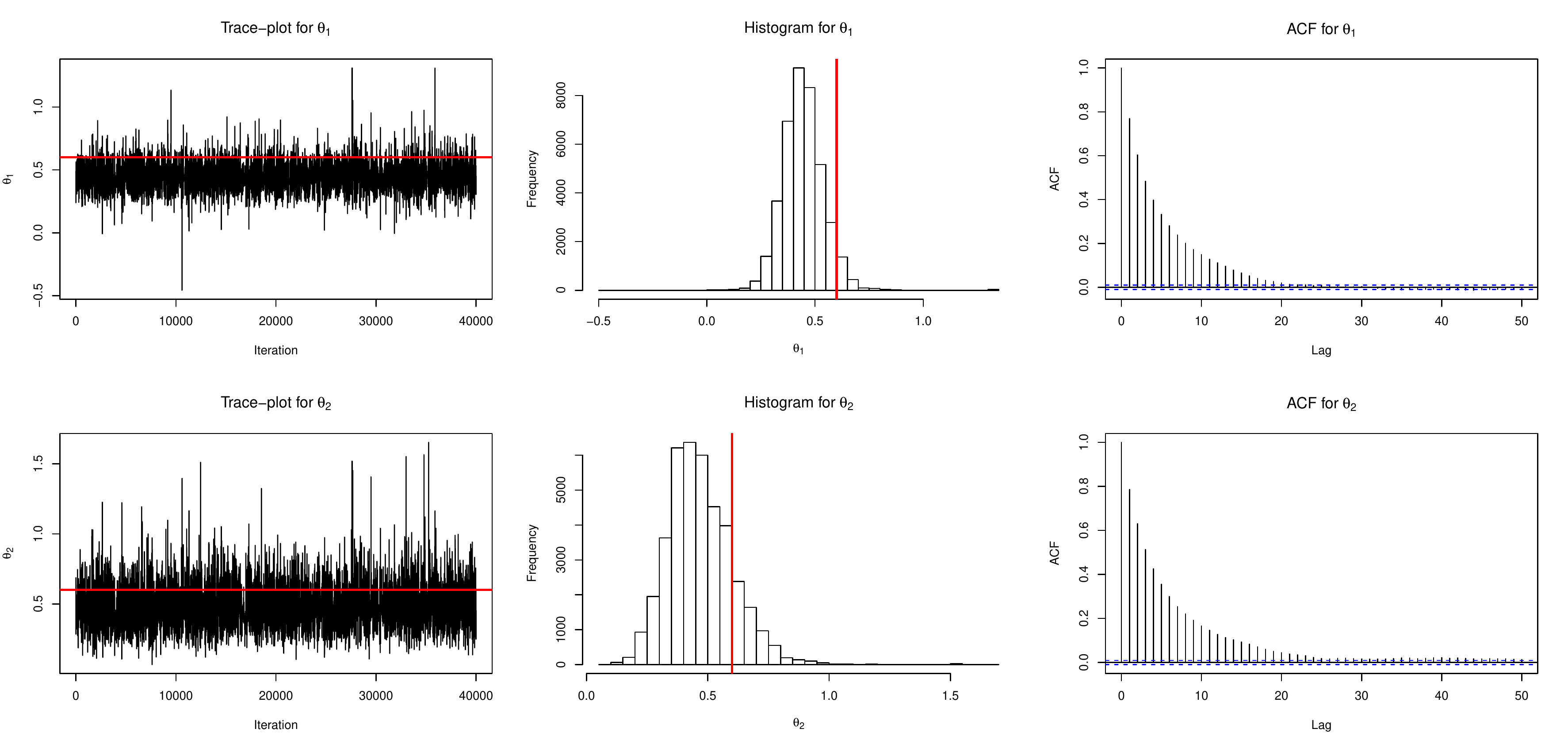}
\label{fig:ma-trace-wbslu}
\end{center}
\end{figure}
\begin{figure}[!ht]
\begin{center}
\caption {MA2 model: ABC-RW Sampler. Each row corresponds to parameters $\theta_1$ (top row) and $\theta_2$ (bottom row)  and shows in order from left to right: Trace-plot, Histogram and Auto-correlation function. Red lines represent true parameter values. }
\includegraphics[scale=0.40]{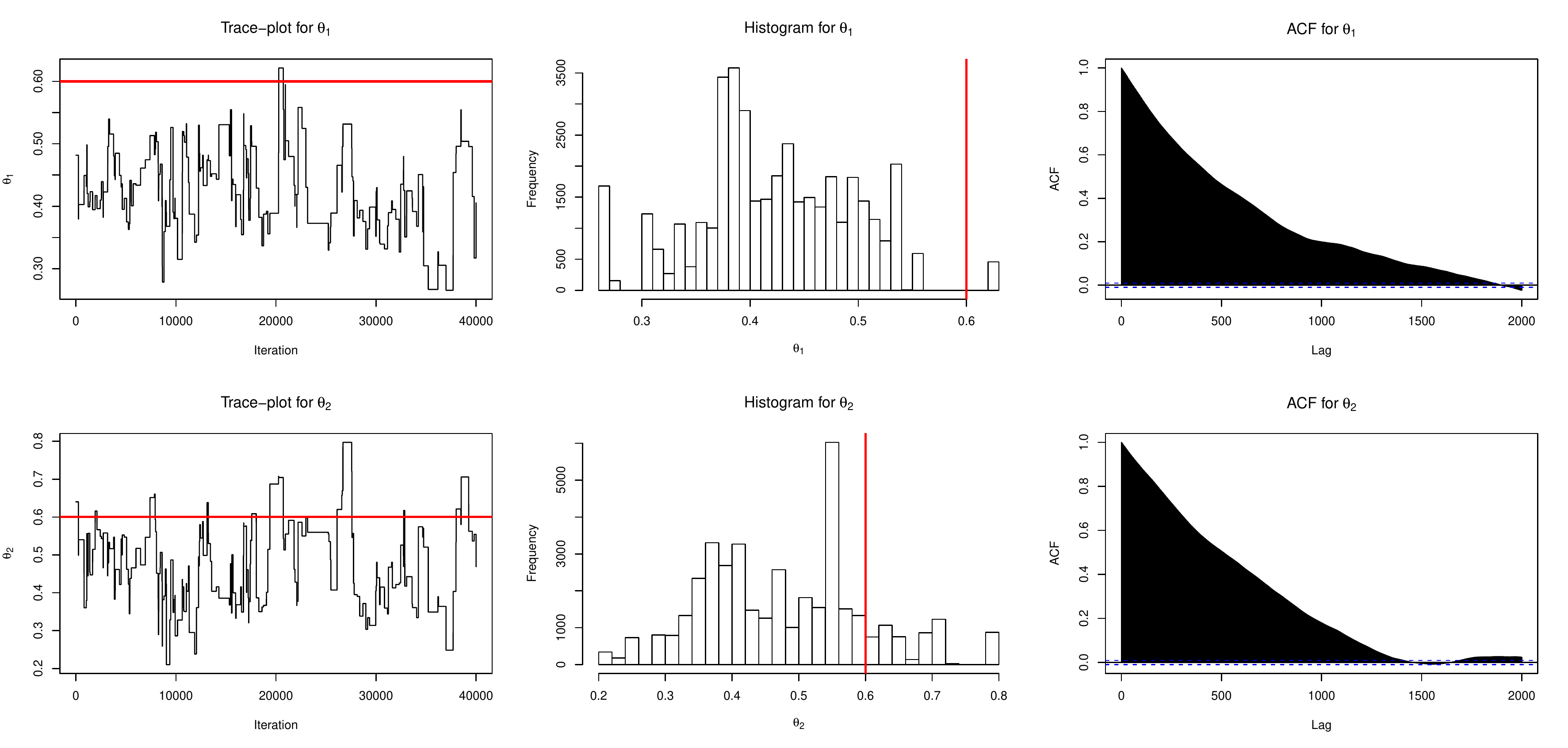}
\label{fig:ma-trace-abcrw}
\end{center}
\end{figure}
Similarly, Figure~\ref{fig:ma-trace-wbslu} and  Figure~\ref{fig:ma-trace-abcrw} display the behaviour of ABSL-U sampler and standard ABC-RW, respectively. From these plots it is apparent that the proposed AABC-U and ABSL-U have much better mixing than ABC-RW.   In the interest of keeping the paper length within reasonable limits, we briefly mention that additional simulations suggest that  AABC-L is similar to AABC-U and ABSL-L to ABSL-U, while  ABC-IS is  outperformed by ABC-RW. 

In order to summarize and compare the information in the MCMC draws produced by the approximated samplers and  the exact chain, we plot the estimated densities in Figure~\ref{fig:ma-density-BSLIS}. The left and right side plots refer to $\theta_1$ and $\theta_2$, respectively.  The two upper plots  compare the estimated density of the exact MCMC sampler with ABC-based ones (SMC, ABC-RW and AABC-U), while the two lower plots compare the exact sampler with Synthetic Likelihood based methods (BSL-IS and ABSL-U).
\begin{figure}[!ht]
\begin{center}
\caption {MA model: Estimated densities for each component. First row compares Exact, SMC, ABC-RW and AABC-U samplers. Second row compares Exact, BSL-IS and ABSL-U. Columns correspond to parameter's components, from left to right: $\theta_1$ and $\theta_2$.}
\includegraphics[scale=0.40]{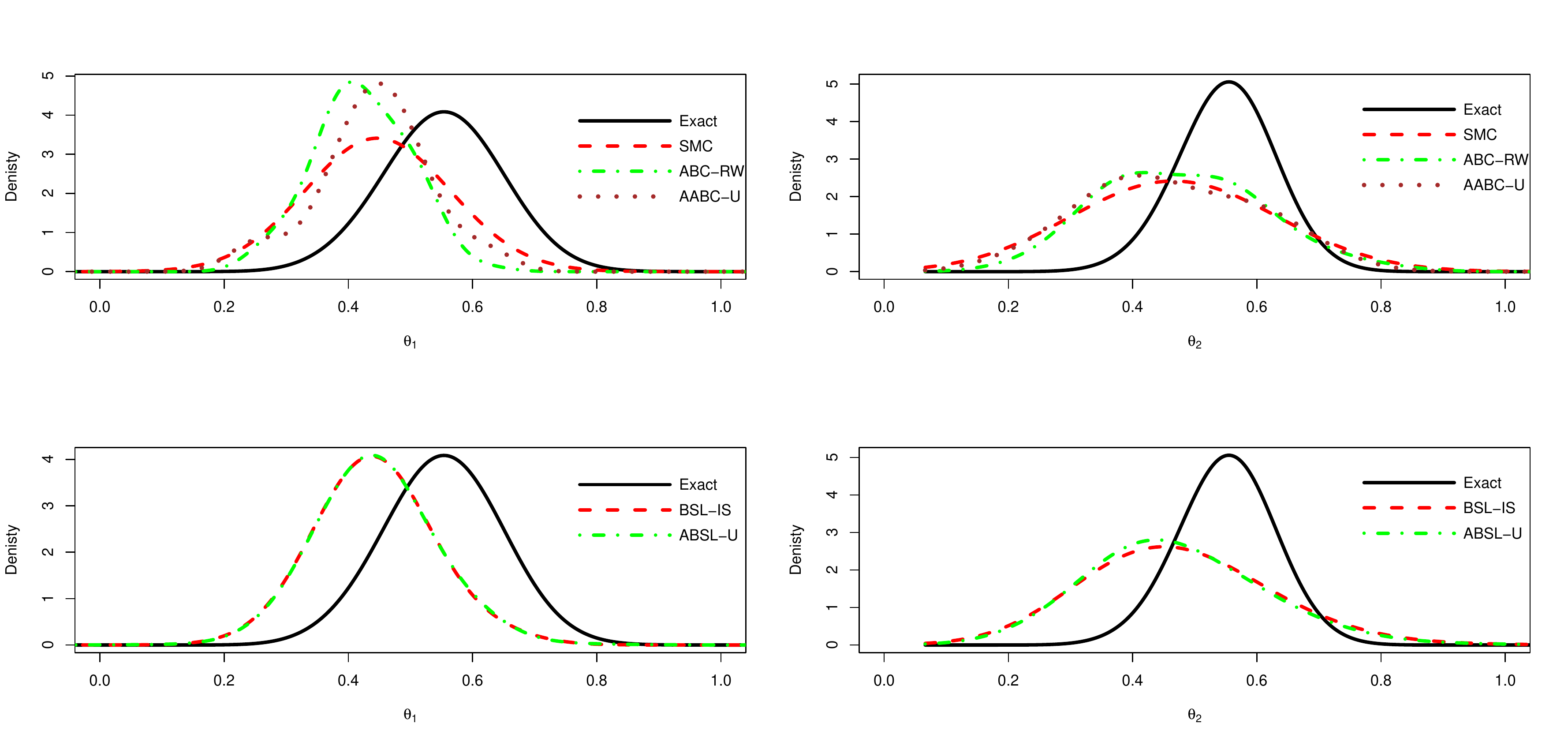}
\label{fig:ma-density-BSLIS}
\end{center}
\end{figure}

The posterior distributions evaluated from  AABC-U is very similar to those produced by SMC and ABC-RW, but all are distinct from the Exact one. This latter difference may be due to the loss of information incurred when the posterior is conditional on a non-sufficient statistic. Similarly, the distribution produced by ABSL-U draws is very close to that of BSL-IS. These observations hold  for both components, $\th_1$ and $\th_2$. \\
To study accuracy, precision and efficiency of proposed samplers we perform a simulation study where 100 data sets are generated and all samplers are run for every data set. The results are summarized in Table~\ref{table:ma}.
\begin{table} [!ht]
\begin{center}
\caption{ Simulation Results (MA model): Average Difference in mean, Difference in covariance, Total variation, square roots of Bias, Variance and MSE, Effective sample size and Effective sample size per CPU time, for every sampling algorithm. }
\smallskip
\scalebox{1.0}{
\begin{tabular}{l | l l l || l l l || l l  }
\multicolumn{1}{c}{ }  &  \multicolumn{3}{c}{Diff with exact}   & \multicolumn{3}{c}{Diff with true parmater} & \multicolumn{2}{c}{Efficiency}\\
\hline
Sampler         & DIM & DIC & TV & $\sqrt{\mbox{Bias}^2}$ & $\sqrt{\mbox{VAR}}$ & $\sqrt{\mbox{MSE}}$ & ESS & ESS/CPU\\
\hline
SMC	&	0.082	&	0.0045	&	0.418	&	0.014	&	0.115	&	0.116	&	471	&	0.505\\
ABC-RW	&	0.088	&	0.0063	&	0.466	&	0.016	&	0.123	&	0.124	&	23	&	0.231	\\
ABC-IS	&	0.084	&	0.0067	&	0.455	&	0.016	&	0.115	&	0.116	&	44	&	0.389	\\
AABC-U	&	0.083	&	0.0071	&	0.444	&	0.018	&	0.116	&	0.117	&	3446	&	6.215 \\
AABC-L	&	0.080	&	0.0067	&	0.438	&	0.017	&	0.112	&	0.113	&	2820	&	5.107	\\
BSL-RW	&	0.082	&	0.0070	&	0.438	&	0.015	&	0.114	&	0.115	&	252	&	0.282\\
BSL-IS	&	0.081	&	0.0070	&	0.436	&	0.015	&	0.114	&	0.115	&	841	&	0.923\\
ABSL-U	&	0.081	&	0.0095	&	0.443	&	0.017	&	0.114	&	0.115	&	3950	&	5.584	\\
ABSL-L	&	0.082	&	0.0078	&	0.441	&	0.015	&	0.114	&	0.115	&	4165	&	6.030	\\
\hline
\end{tabular}
}
\label{table:ma}
\end{center}
\end{table}
Examining this table we immediately note that ESS/CPU measure is much larger for proposed algorithms than for standard methods. The improvement is very substantial, for example ESS/CPU for AABC-U is 12 times larger than for the best standard ABC procedures like SMC. Similar results are shown for Bayesian Synthetic Likelihood. We also examine DIM, DIC, TV and MSE quantities that provide information about the proximity of approximate samples to the exact MCMC ones. For all these quantities the smaller the value the better is the sampler.  We see that all these measures for AABC-U and AABC-L are very similar to SMC, ABC-RW and ABC-IS and frequently outperforms them. Similarly for BSL approach. Another observation is that the approximated algorithm with uniform and linear weights generally perform very similarly.    

\subsection{Ricker's Model}
Ricker's model is analyzed very frequently to test Synthetic Likelihood procedures \cite{wood2010statistical,price2018bayesian}. It is a particular instance of hidden Markov model:
\begin{equation}
\begin{split}
&x_{-49}=1;\hsp z_i\overset{iid}{\sim}\calN(0,\exp(\th_2)^2); \hsp i=\{-48,\cdots,n\}, \\
&x_{i} = \exp(\exp(\th_1))x_{i-1}\exp(-x_{i-1}+z_i); \hsp i=\{-48,\cdots,n\}, \\
&y_{i} = Pois(\exp(\th_3)x_i);  \hsp i=\{-48,\cdots,n\},
\end{split}
\end{equation}
where $Pois(\lambda)$ is Poisson distribution with mean parameter $\lambda$ and $n=100$. Only $\by=(y_1,\cdots,y_n)$ sequence is observed, because the first 50 values are ignored. Note that all parameters $\th=(\th_1,\th_2,\th_3)$ are unrestricted, the prior is given as (each prior parameter is independent):
\begin{equation}
\begin{split}
& \th_1 \sim \calN(0,1), \\
& \th_2 \sim Unif(-2.3,0), \\
& \th_3 \sim \calN(0,4).
\end{split}
\end{equation} 
We restrict the range of $\th_2$ as all algorithms become unstable for $\th_2$ outside this interval. Note that the marginal distribution of $\by$ is not available in closed form, but transition distribution of hidden variables $X_i|x_{i-1}$ and emission probabilities $Y_i|x_i$ are known and hence we can run Particle MCMC (PMCMC) \cite{andrieu2010particle} or Ensemble MCMC \cite{shestopaloff2013mcmc} to sample from the posterior distribution $\pi(\th|\by_0)$. {Here we are utilizing the Particle MCMC with 100 particles.}
As suggested in \cite{wood2010statistical} we set $\th_0=(\log(3.8),0.9,2.3)$ and define summary statistics $S(\by)$ as the 14-dimensional vector whose components are:
\begin{enumerate}
\item[(C1)] \#$\{i: y_i = 0\}$,
\item[(C2)] Average of $\by$, $\bar y$,
\item[(C3:C7)] Sample auto-correlations at lags 1 through 5,
\item[(C8:C11)] Coefficients $\beta_0,\beta_1,\beta_2,\beta_3$ of cubic regression\\
 $(y_i-y_{i-1}) = \beta_0 + \beta_1 y_i + \beta_2 y_i^2 + \beta_3 y_i^3 + \eps_i$, $i=2,\ldots,n$,
\item[(C12-C14)] Coefficients $\beta_0,\beta_1,\beta_2$ of quadratic regression \\
 $y_i^{0.3} = \beta_0 + \beta_1 y_{i-1}^{0.3} + \beta_2 y_{i-1}^{0.6} + \eps_i$, $i=2,\ldots,n$.
\end{enumerate}

Figures~\ref{fig:ricker-trace-wabcu}, \ref{fig:ricker-trace-wbslu} and \ref{fig:ricker-trace-abcrw} show trace-plots, histograms and ACF function for AABC-U, ABSL-U and ABC-RW samplers for each component (red lines correspond to the true parameter).
\begin{figure}[!ht]
\begin{center}
\caption {Ricker's model: AABC-U Sampler. Each row corresponds to parameters $\theta_1$ (top row), $\theta_2$ (middle row) and $\theta_3$ (bottom row)  and shows in order from left to right: Trace-plot, Histogram and Auto-correlation function. Red lines represent true parameter values. }
\includegraphics[scale=0.40]{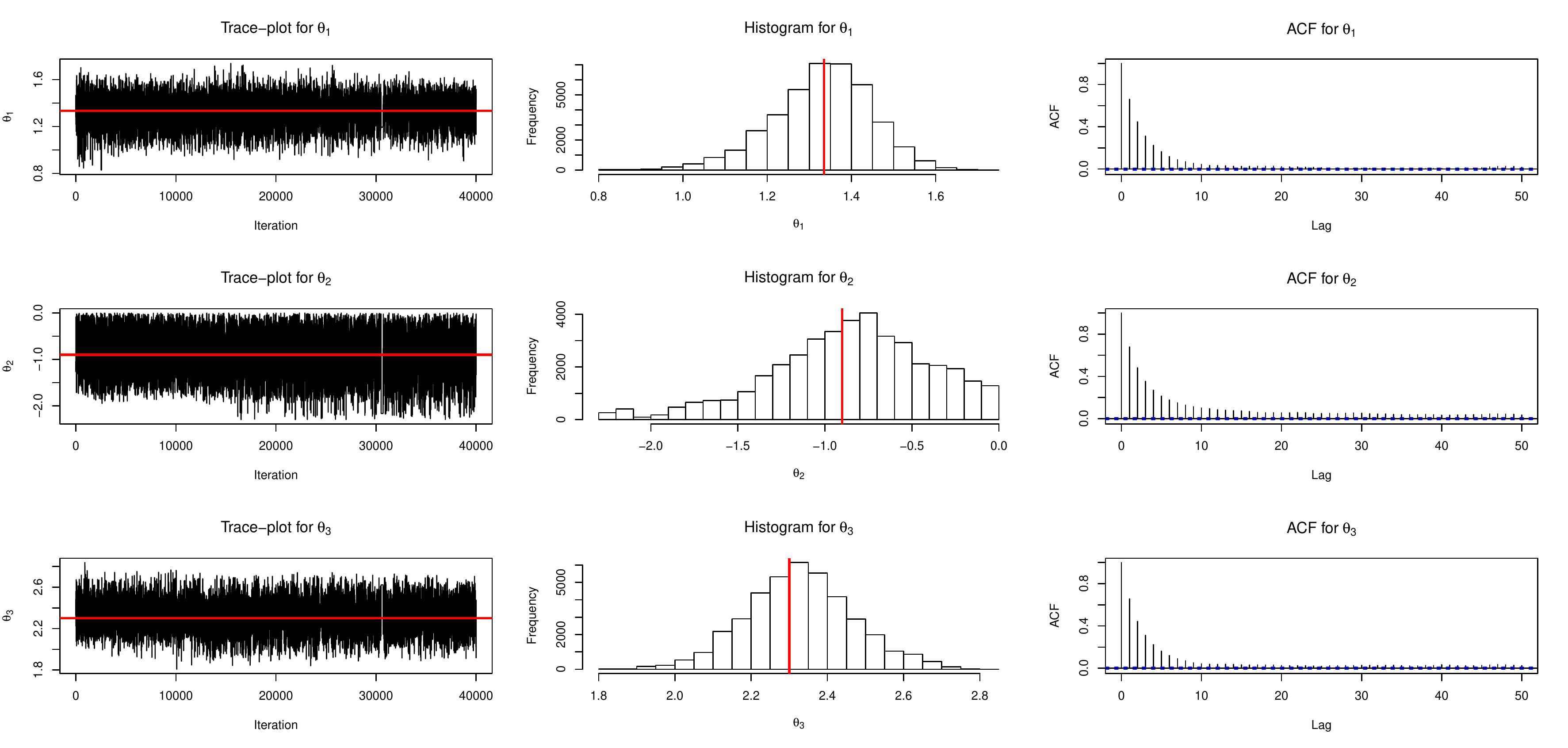}
\label{fig:ricker-trace-wabcu}
\end{center}
\end{figure}
\begin{figure}[!ht]
\begin{center}
\caption {Ricker's model: ABSL-U Sampler. Each row corresponds to parameters $\theta_1$ (top row), $\theta_2$ (middle row) and $\theta_3$ (bottom row)  and shows in order from left to right: Trace-plot, Histogram and Auto-correlation function. Red lines represent true parameter values. }
\includegraphics[scale=0.40]{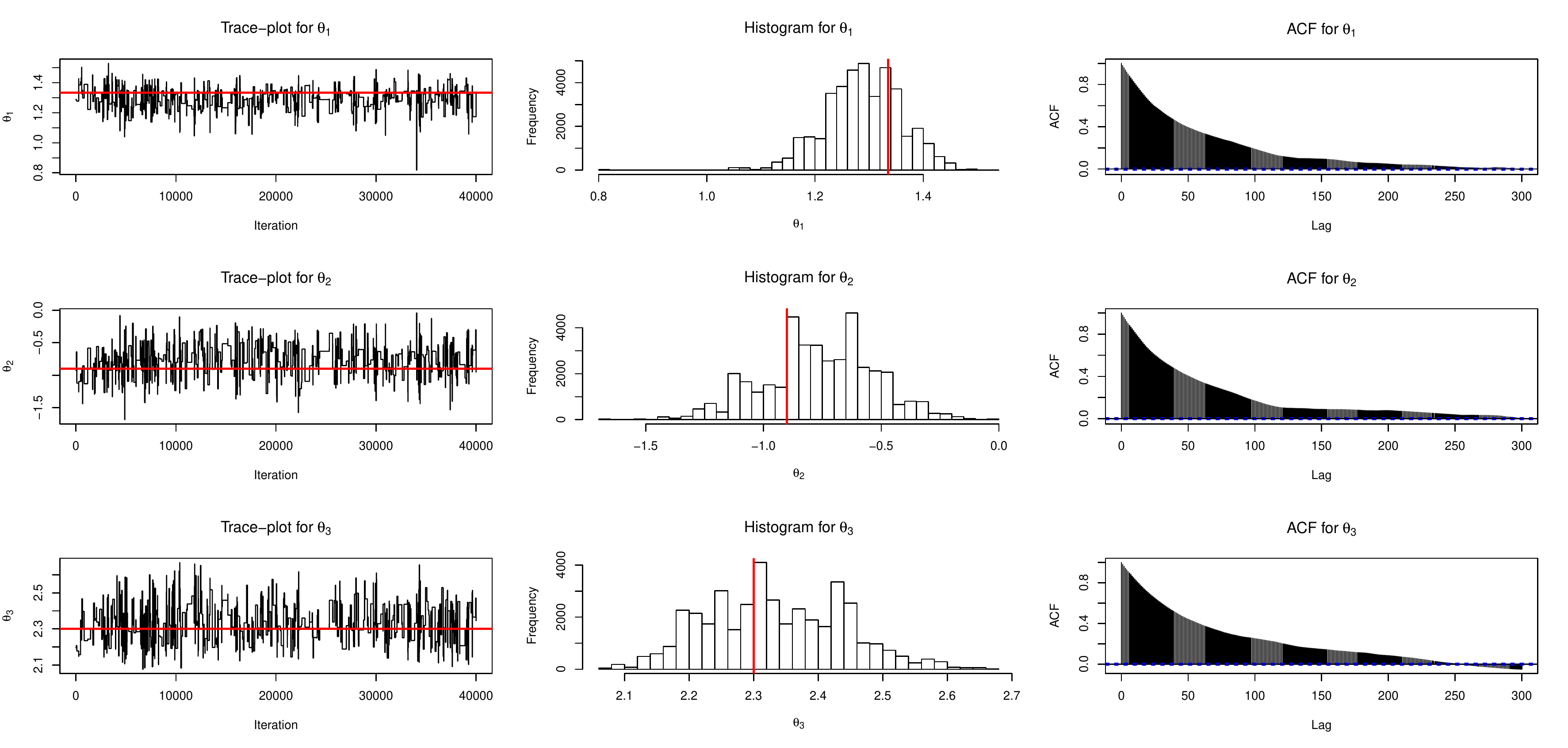}
\label{fig:ricker-trace-wbslu}
\end{center}
\end{figure}
\begin{figure}[!ht]
\begin{center}
\caption {Ricker's model: ABC-RW Sampler. Each row corresponds to parameters $\theta_1$ (top row), $\theta_2$ (middle row) and $\theta_3$ (bottom row)  and shows in order from left to right: Trace-plot, Histogram and Auto-correlation function. Red lines represent true parameter values.}
\includegraphics[scale=0.40]{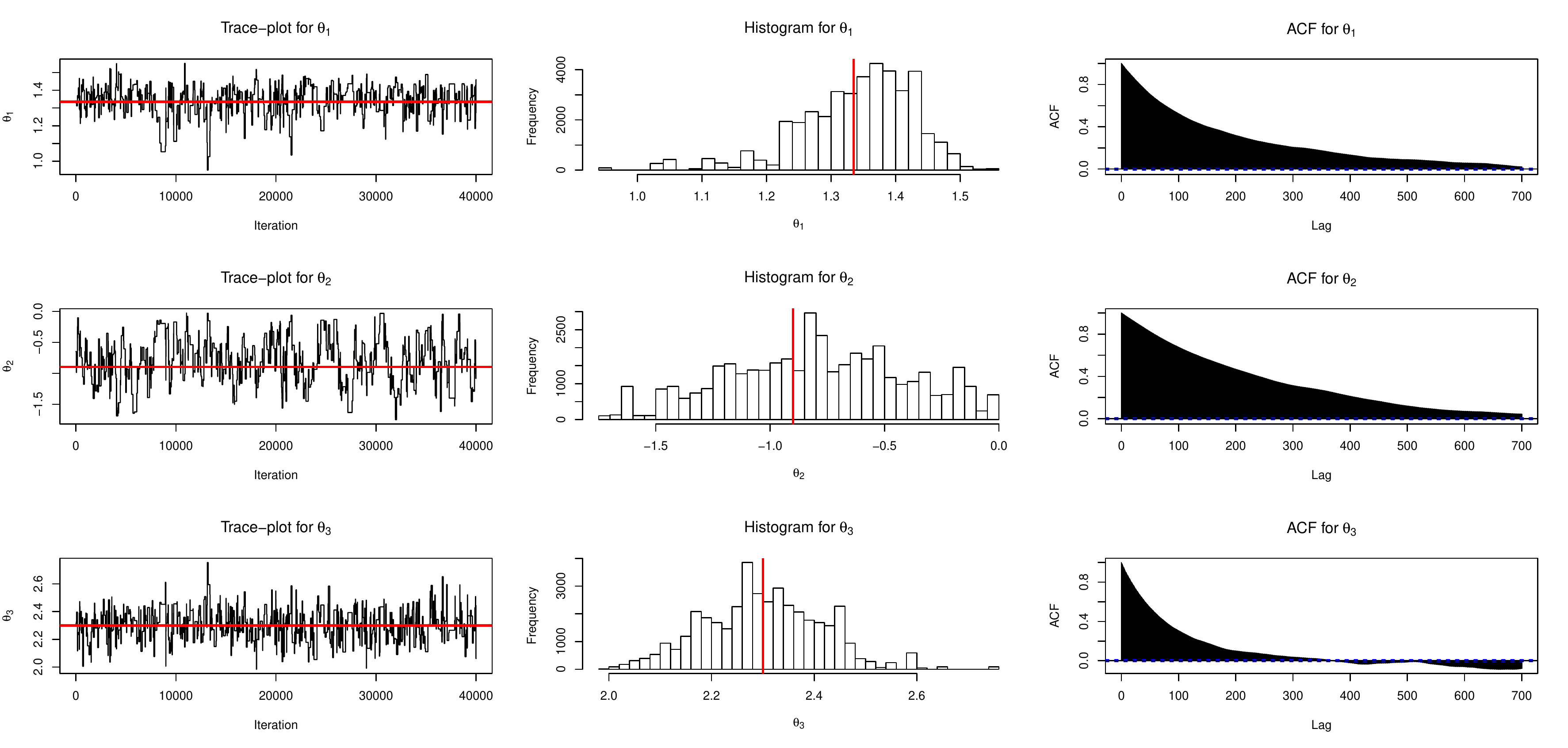}
\label{fig:ricker-trace-abcrw}
\end{center}
\end{figure}
We show here ABC-RW instead of ABC-IS because the latter exhibits a poorer performance.  The main observation is that mixing of AABC-U is much better than in ABC-RW with smaller auto-correlation values.  ABSL-U has higher auto-correlations than AABC-U but still performs quite well.
To see how close the draws from simulation-based algorithms to the draws from the Exact chain, we plot the estimated approximate  posterior marginal densities in Figure~\ref{fig:ricker-density-BSLRW}. The two upper plots (left and right are associated to parameter's component) compares estimated density of exact PMCMC sampler (with 100 particles) with ABC-based ones (SMC, ABC-RW and AABC-U), two lower plots compare the Exact sampler with Synthetic Likelihood based methods (BSL-RW and ABSL-U).

\begin{figure}[!ht]
\begin{center}
\caption {Ricker's model: Estimated posterior marginal densities for each component. First row compares Exact, SMC, ABC-RW and AABC-U samplers. Second row compares Exact, BSL-RW and ABSL-U. Columns correspond to parameter's components, from left to right: $\theta_1$, $\theta_2$ and $\theta_3$.}
\includegraphics[scale=0.40]{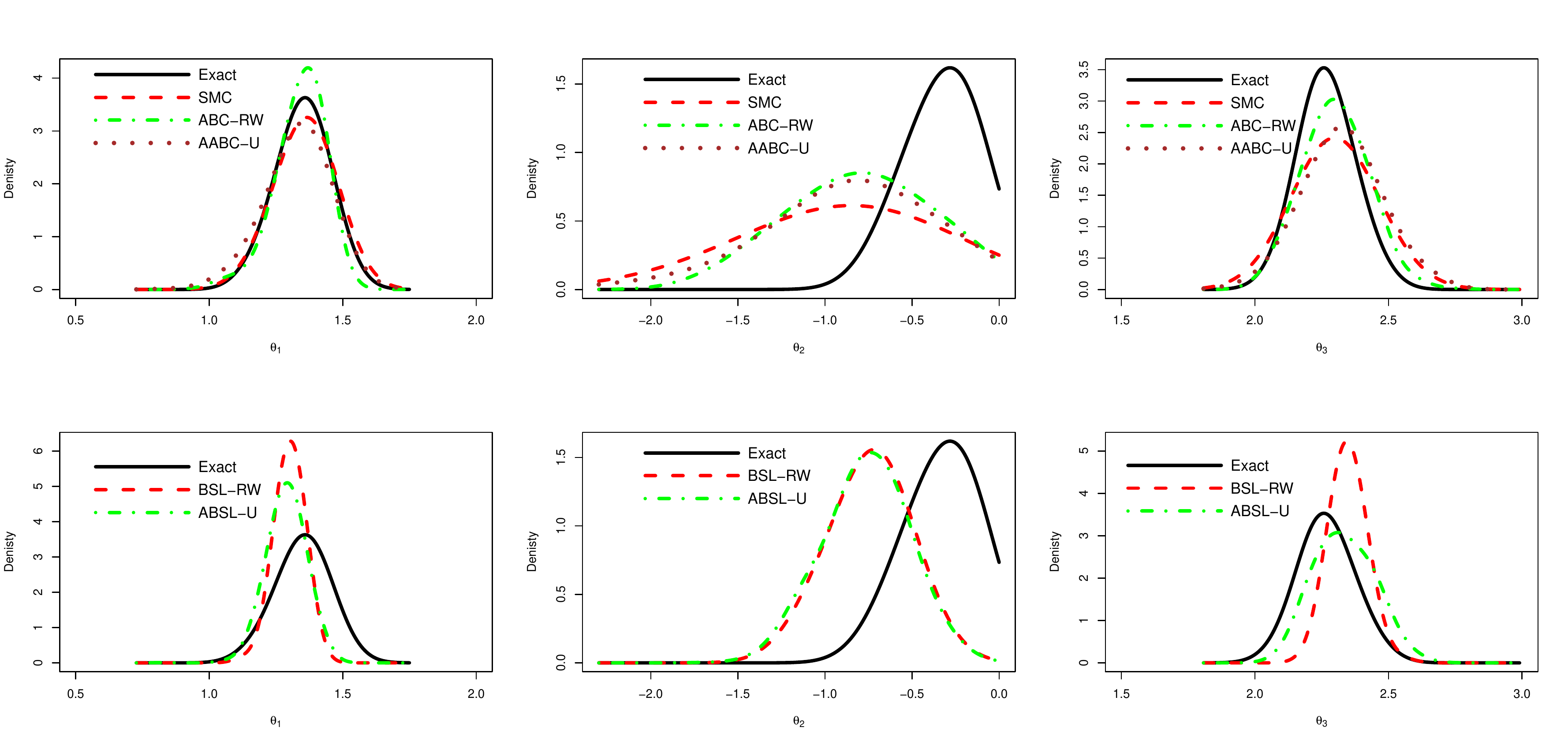}
\label{fig:ricker-density-BSLRW}
\end{center}
\end{figure}
Note that ABC-based samplers (SMC, ABC-RW and AABC-U) have very similar estimated densities. The densities of Synthetic Likelihood methods are also similar. For the second component there is a large difference between exact and approximate posteriors which may be caused by the loss of information induced by the choice of summary statistics.

A more general study, where results are averaged over 100 independent replicates, is shown in Table~\ref{table:ricker}.
\begin{table} [!ht]
\begin{center}
\caption{Simulation Results (Ricker's model): Average Difference in mean, Difference in covariance, Total variation, square roots of Bias, variance and MSE, Effective sample size and Effective sample size per CPU time, for every sampling algorithm. }
\smallskip
\scalebox{1.0}{
\begin{tabular}{l | l l l || l l l || l l  }
\multicolumn{1}{c}{ }  &  \multicolumn{3}{c}{Diff with exact}   & \multicolumn{3}{c}{Diff with true parmater} & \multicolumn{2}{c}{Efficiency} \\
\hline
Sampler         & DIM & DIC & TV & $\sqrt{\mbox{Bias}^2}$ &  $\sqrt{\mbox{VAR}}$ & $\sqrt{\mbox{MSE}}$ & ESS & ESS/CPU\\
\hline
SMC	&	0.152	&	0.0177	&	0.378	&	0.086	&	0.201	&	0.219	&	472	&	0.521\\
ABC-RW	&	0.135	&	0.0201	&	0.389	&	0.059	&	0.180	&	0.189	&	87	&	0.199	\\
ABC-IS	&	0.139	&	0.0215	&	0.485	&	0.063	&	0.195	&	0.205	&	47	&	0.099	\\
AABC-U	&	0.147	&	0.0279	&	0.402	&	0.076	&	0.190	&	0.204	&	3563	&	4.390	\\
AABC-L	&	0.141	&	0.0258	&	0.392	&	0.070	&	0.189	&	0.201	&	4206	&	5.193	\\
BSL-RW	&	0.129	&	0.0080	&	0.382	&	0.038	&	0.206	&	0.209	&	131	&	0.030\\
BSL-IS	&	0.122	&	0.0082	&	0.455	&	0.022	&	0.197	&	0.198	&	33	&	0.007\\
ABSL-U	&	0.103	&	0.0054	&	0.377	&	0.023	&	0.170	&	0.171	&	284	&	0.180\\
ABSL-L	&	0.106	&	0.0051	&	0.382	&	0.012	&	0.173	&	0.173	&	207	&	0.135\\
\hline
\end{tabular}
}
\label{table:ricker}
\end{center}
\end{table}
Again, the proposed strategies clearly outperform in terms of overall efficiency (ESS/CPU). For instance, AABC-U is about 10 times more efficient than standard SMC and ABSL-U is 6 times more efficient than BSL-RW. At the same time DIM, DIC, TV and MSE are generally smaller for approximate methods than for standard ones. 

\subsection{Stochastic Volatility with Gaussian emissions}
When analyzing stationary time series,  it is frequently observed that there are periods of high and periods of low volatility. Such phenomenon is called \textit{volatility clustering}, see for example \citep{lux2000volatility}. One way to model such a behaviour is through a Stochastic Volatility (SV) model, where variances of the observed time series depend on hidden states that themselves form a stationary time series. Consider the following model which depends on three parameters $(\th_1,\th_2,\th_3)$:
\begin{equation}
\begin{split}
&x_{1} \sim \calN(0,1/(1-\th_1^2));\hsp v_i\overset{iid}{\sim}\calN(0,1);\hsp w_i\overset{iid}{\sim}\calN(0,1) ; \hsp i=\{1,\cdots,n\}, \\
&x_{i} = \th_1 x_{i-1} + v_i; \hsp i=\{2,\cdots,n\}, \\
&y_{i} = \sqrt{\exp(\th_2 + \exp(\th_3)x_i)}w_i;  \hsp i=\{1,\cdots,n\}.
\end{split}
\end{equation}
Only $\by=(y_1,\cdots,y_n)$ is observed while $(x_1,\cdots,x_n)$ are hidden states. The parameter $\th_1\in (-1, 1)$  controls the auto-correlation of hidden states, while $\th_2$ and $\th_3$ are unrestricted and relate to the hidden states influence on the variability of the observed series. Given a hidden state, the distribution of the observed variable is normal which may not be appropriate in some examples. We introduce the following priors, independently for each parameter:
\begin{equation}
\begin{split}
& \th_1 \sim Unif(0,1), \\
& \th_2 \sim \calN(0,1), \\
& \th_3 \sim \calN(0,1).
\end{split}
\end{equation} 
We set the true parameters to $(\th_1=0.95,\th_2=-2,\th_3=-1)$ and length of the time series $n=500$.  We use Particle MCMC (PMCMC)   as the Exact sampling scheme. Since pseudo-data sets can be easily generated for every parameter value, the SV is a good example to demonstrate the performances of the generative algorithms considered here. For summary statistics we use a 7-dimensional vector whose components are:
\begin{enumerate}
\item[(C1)] \#$\{i: y_i^2 > \mbox{quantile}(\by_0^2,0.99)\}$,
\item[(C2)] Average of $\by^2$,
\item[(C3)] Standard deviation of $\by^2$,
\item[(C4)] Sum of the first 5 auto-correlations of $\by^2$,
\item[(C5)] Sum of the first 5 auto-correlations of $\{\one_{\{y_i^2<\mbox{quantile}(\by^2,0.1)\}}\}_{i=1}^n$,
\item[(C6)] Sum of the first 5 auto-correlations of $\{\one_{\{y_i^2<\mbox{quantile}(\by^2,0.5)\}}\}_{i=1}^n$,
\item[(C7)] Sum of the first 5 auto-correlations of $\{\one_{\{y_i^2<\mbox{quantile}(\by^2,0.9)\}}\}_{i=1}^n$.
\end{enumerate}
Here $\mbox{quantile}(\by,\tau)$ is defined as $\tau$-quantile of the sequence $\by$. As was shown in \cite{schmitt2015quantile} and \cite{dette2015copulas} the auto-correlation of indicators (under different quantiles) can be very useful in characterizing a time series and that is why we have added (C5),(C6) and (C7) to the summary statistic. We focus here on $\by^2$ and its auto-correlations since model parameters only affect variability of $\by$ (auto-correlation of $\by$ is zero for any lag). 
Figures~\ref{fig:sv-trace-wabcu}, \ref{fig:sv-trace-wbslu} and \ref{fig:sv-trace-abcrw} show trace-plots, histograms and ACF function for AABC-U, ABSL-U and ABC-RW samplers respectively for each component (red lines correspond to the true parameter).
\begin{figure}[!ht]
\begin{center}
\caption {SV model: AABC-U Sampler. Each row corresponds to parameters $\theta_1$ (top row), $\theta_2$ (middle row) and $\theta_3$ (bottom row)  and shows in order from left to right: Trace-plot, Histogram and Auto-correlation function. Red lines represent true parameter values. 
}
\includegraphics[scale=0.40]{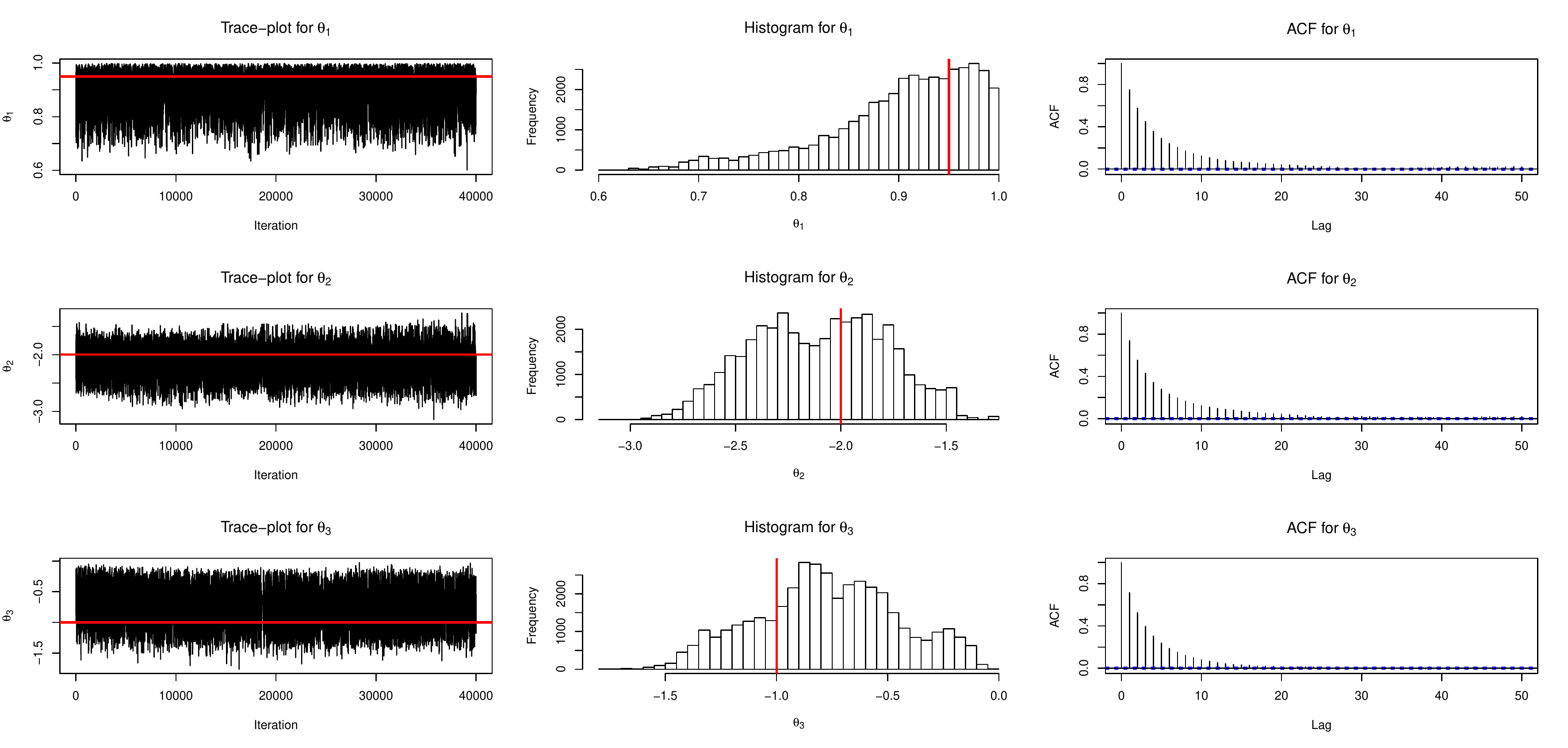}
\label{fig:sv-trace-wabcu}
\end{center}
\end{figure}
\begin{figure}[!ht]
\begin{center}
\caption {SV model: ABSL-U Sampler. Each row corresponds to parameters $\theta_1$ (top row), $\theta_2$ (middle row) and $\theta_3$ (bottom row)  and shows in order from left to right: Trace-plot, Histogram and Auto-correlation function. Red lines represent true parameter values. }
\includegraphics[scale=0.40]{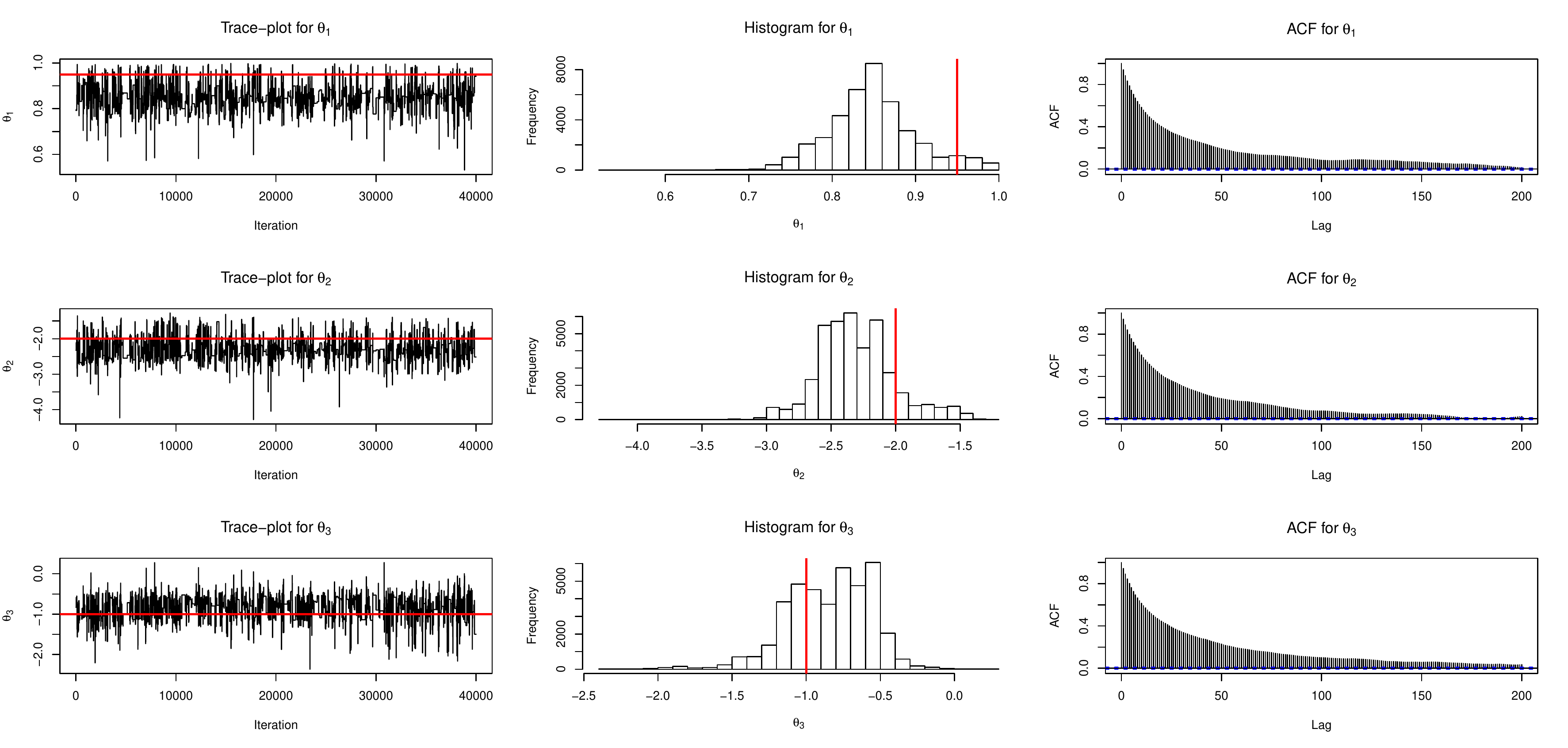}
\label{fig:sv-trace-wbslu}
\end{center}
\end{figure}
\begin{figure}[!ht]
\begin{center}
\caption {SV model: ABC-RW Sampler. Each row corresponds to parameters $\theta_1$ (top row), $\theta_2$ (middle row) and $\theta_3$ (bottom row)  and shows in order from left to right: Trace-plot, Histogram and Auto-correlation function. Red lines represent true parameter values. }
\includegraphics[scale=0.40]{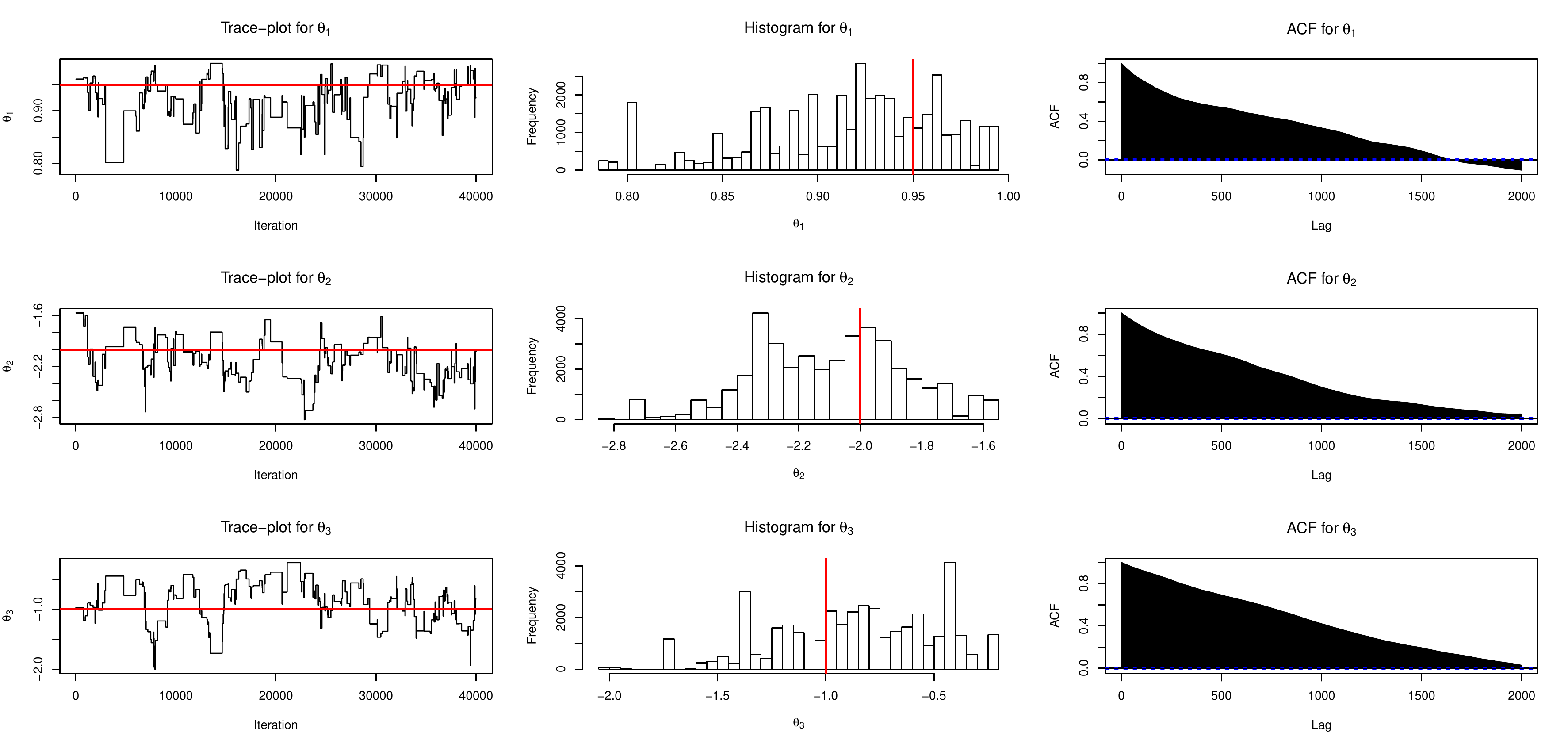}
\label{fig:sv-trace-abcrw}
\end{center}
\end{figure}
The major observation is that  AABC-U and ABSL-U are less sluggish than
ABC-RW, exhibiting  smaller auto-correlation values. 

In Figure~\ref{fig:sv-density-BSLIS} we compare the sample-based kernel smoothing posterior marginal density estimates for Exact, SMC, ABC-RW and AABC-U (top row) as well as Exact, BSL-IS and ABSL-U (bottom row). 
\begin{figure}[!ht]
\begin{center}
\caption {SV model: Estimated posterior marginal densities for each component. First row compares Exact, SMC, ABC-RW and AABC-U samplers. Second row compares Exact, BSL-IS and ABSL-U. Columns correspond to parameter's components, from left to right: $\theta_1$, $\theta_2$ and $\theta_3$.  }
\includegraphics[scale=0.40]{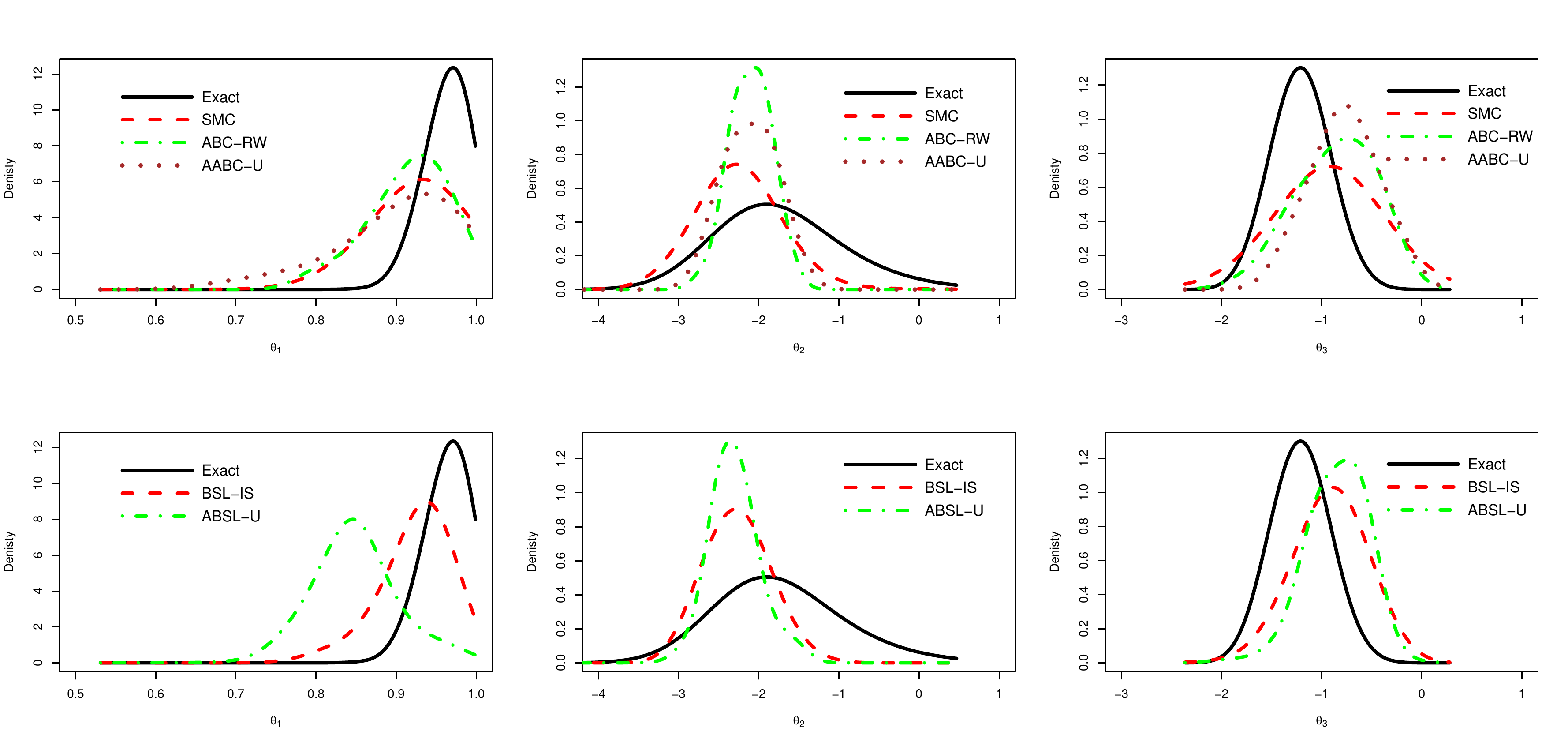}
\label{fig:sv-density-BSLIS}
\end{center}
\end{figure}
We note that all samples obtained from the approximate algorithms are  exact posterior (produced using PMCMC with 100 particles). Generally all ABC-based samplers perform similarly, on the other hand ABSL-U performs worse than generic BSL-IS in this run as it is shifted away from the exact posterior for $\th_1$ and $\th_3$.\\
To get more general conclusions we show average results in Table~\ref{table:sv} over 100 data replicates.
\begin{table} [!ht]
\begin{center}
\caption{ Simulation Results (SV model): Average Difference in mean, Difference in covariance, Total variation, square roots of Bias, variance and MSE, Effective sample size and Effective sample size per CPU time, for every sampling algorithm. }
\smallskip
\scalebox{1.0}{
\begin{tabular}{l | l l l || l l l || l l  }
\multicolumn{1}{c}{ }  &  \multicolumn{3}{c}{Diff with exact}   & \multicolumn{3}{c}{Diff with true parmater} & \multicolumn{2}{c}{Efficiency}\\
\hline
Sampler         & DIM & DIC & TV & $\sqrt{\mbox{Bias}^2}$ & $\sqrt{\mbox{VAR}}$ & $\sqrt{\mbox{MSE}}$ & ESS & ESS/CPU\\
\hline
SMC	&	0.232	&	0.0428	&	0.417	&	0.187	&	0.255	&	0.316	&	471	&	0.336\\
ABC-RW	&	0.210	&	0.0396	&	0.459	&	0.228	&	0.255	&	0.342	&	31	&	0.097	\\
ABC-IS	&	0.179	&	0.0439	&	0.460	&	0.196	&	0.219	&	0.294	&	30	&	0.090	\\
AABC-U	&	0.194	&	0.0447	&	0.424	&	0.212	&	0.217	&	0.304	&	1793	&	2.445	\\
AABC-L	&	0.189	&	0.0441	&	0.420	&	0.211	&	0.235	&	0.316	&	1659	&	2.253	\\
BSL-RW	&	0.200	&	0.0360	&	0.411	&	0.175	&	0.227	&	0.287	&	131	&	0.043	\\
BSL-IS	&	0.195	&	0.0362	&	0.404	&	0.175	&	0.225	&	0.285	&	346	&	0.113	\\
ABSL-U	&	0.229	&	0.0422	&	0.551	&	0.184	&	0.241	&	0.303	&	871	&	0.822	\\
ABSL-L	&	0.231	&	0.0410	&	0.548	&	0.197	&	0.240	&	0.311	&	843	&	0.817	\\
\hline
\end{tabular}
}
\label{table:sv}
\end{center}
\end{table}
Again we note that the proposed algorithms outperform the benchmark samplers by 8 times in ESS/CPU. Moreover AABC-U and AABC-L have very similar or smaller values for DIM, TV and MSE, which demonstrates that these samplers are much more efficient than standard methods and at the same produce as accurate (or more accurate) parameter estimates as generic algorithms.\\
ABSL-U and ABSL-L on the other hand did not perform well for this model, TV and MSE for these samplers are larger by 10\% than generic ones.
\subsection{Stochastic Volatility with $\al$-Stable errors}
\label{sec:sv_al}
As was pointed out in the previous sub-section, standard SV model assumes that the conditional distribution of the observed variables is Gaussian. Frequently, in financial time series, a large sudden drop occurs, thus raising serious doubts about  the latter assumption. Often, it is suggested to use heavy tailed distributions (instead of Gaussian) to model financial data. We consider a family of distributions named $\al$-Stable, denoted $Stab(\al,\beta)$, with two parameters $\al\in(0,2]$ (stability parameter) and $\beta\in[-1,1]$ (skew parameter). Two special cases are $\al=1$ and $\al=2$ which correspond to Cauchy and Gaussian distribution respectively, note that for $\al<2$ the distribution has undefined variance. We define the following SV model with $\al$-Stable errors with parameter
${\mathbf \th}=(\th_1,\th_2,\th_3,\th_4)^T \in\RR^4$:
\begin{equation}
\label{eq:sv_model}
\begin{split}
&x_{1} \sim \calN(0,1/(1-\th_1^2));\hsp v_i\overset{iid}{\sim}\calN(0,1);\hsp w_i\overset{iid}{\sim}Stab(\th_4,-1) ; \hsp i=\{1,\cdots,n\}, \\
&x_{i} = \th_1x_{i-1} + v_i; \hsp i=\{2,\cdots,n\}, \\
&y_{i} = \sqrt{\exp(\th_2 + \exp(\th_3)x_i)}w_i;  \hsp i=\{1,\cdots,n\}.
\end{split}
\end{equation}
This model is very similar to the simple SV with only difference that emission errors follow $\al$-Stable distribution with unknown stable parameter and fixed skew of $-1$. We generally prefer negative skew emission probability to model large negative financial returns. As in the previous simulation example $\th_2$ and $\th_3$ are unrestricted. The prior distribution for this model is (independently for each parameter):
\begin{equation}
\label{eq:sv_prior}
\begin{split}
& \th_1 \sim Unif(0,1), \\
& \th_2 \sim \calN(0,1), \\
& \th_3 \sim \calN(0,1), \\
& \th_4 \sim Unif(1.5,2).
\end{split}
\end{equation}  
We set the true parameters to $\th_1=0.95,\th_2=-2,\th_3=-1,\th_4=1.8$ and length of the time series $n=500$. The major challenge with this model is that there are no closed-form densities for $\al$-Stable distributions. Hence, most MCMC samplers, including PMCMC and ensemble MCMC, cannot be used  to sample from the posterior. However sampling from this family of distributions is feasible which makes it particularly amenable for simulation based methods like ABC and BSL.   
For summary statistics we use a 7-dimensional vector whose components are:
\begin{enumerate}
\item[(C1)] \#$\{i: y_i^2 > \mbox{quantile}(\by_0^2,0.99)\}$,
\item[(C2)] Average of $\by^2$,
\item[(C3)] Standard deviation of $\by^2$,
\item[(C4)] Sum of the first 5 auto-correlations of $\by^2$,
\item[(C5)] Sum of the first 5 auto-correlations of $\{\one_{\{y_i^2<\mbox{quantile}(\by^2,0.1)\}}\}_{i=1}^n$,
\item[(C6)] Sum of the first 5 auto-correlations of $\{\one_{\{y_i^2<\mbox{quantile}(\by^2,0.5)\}}\}_{i=1}^n$,
\item[(C7)] Sum of the first 5 auto-correlations of $\{\one_{\{y_i^2<\mbox{quantile}(\by^2,0.9)\}}\}_{i=1}^n$.
\end{enumerate}
Figures~\ref{fig:sv.stable-trace-wabcu},\ref{fig:sv.stable-trace-wbslu} and \ref{fig:sv.stable-trace-abcrw} show trace-plots, histograms and ACF function for AABC-U, ABSL-U and ABC-RW samplers respectively for each component (red lines correspond to the true parameters).
\begin{figure}[!ht]
\begin{center}
\caption {SV $\al$-Stable model: AABC-U Sampler. Each row corresponds to parameters $\theta_1$ (top row),  $\theta_2$ (second top row), $\theta_3$ (second bottom row), $\theta_4$ (bottom row)  and shows in order from left to right: Trace-plot, Histogram and Auto-correlation function. Red lines represent true parameter values. 
 }
\includegraphics[scale=0.40]{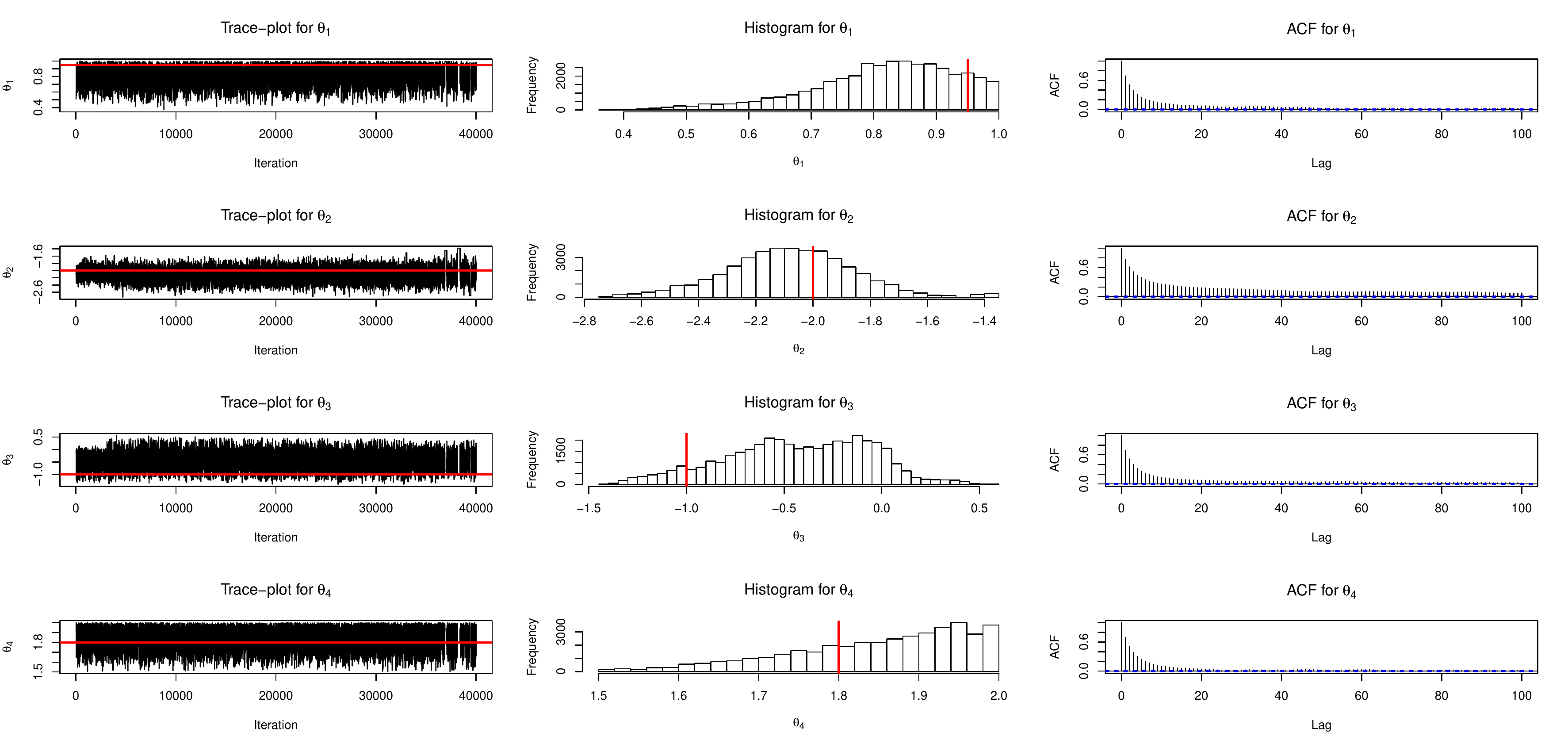}
\label{fig:sv.stable-trace-wabcu}
\end{center}
\end{figure}
\begin{figure}[!ht]
\begin{center}
\caption {SV $\al$-Stable model: ABSL-U Sampler. Each row corresponds to parameters $\theta_1$ (top row),  $\theta_2$ (second top row), $\theta_3$ (second bottom row), $\theta_4$ (bottom row)  and shows in order from left to right: Trace-plot, Histogram and Auto-correlation function. Red lines represent true parameter values.}
\includegraphics[scale=0.40]{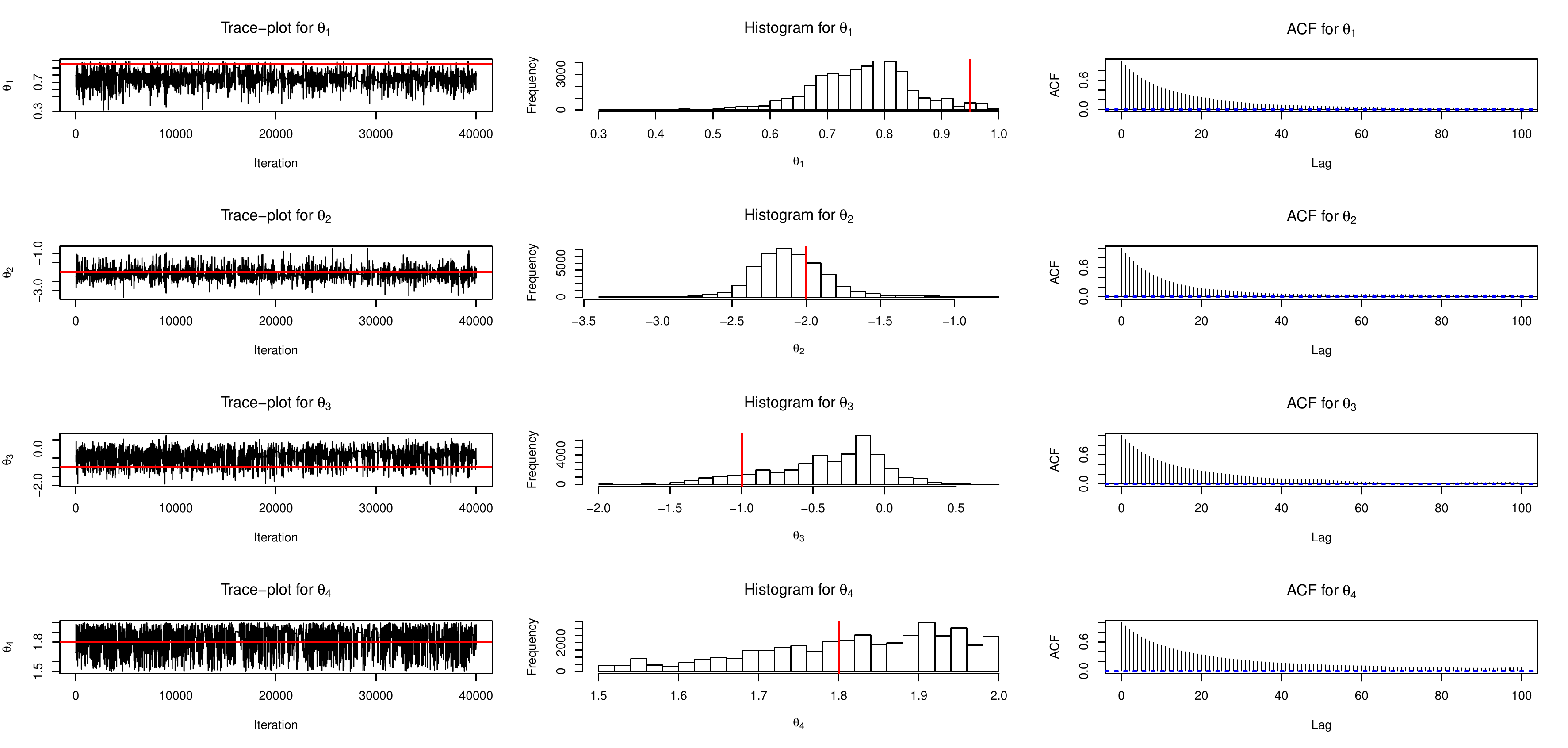}
\label{fig:sv.stable-trace-wbslu}
\end{center}
\end{figure}
\begin{figure}[!ht]
\begin{center}
\caption {SV $\al$-Stable model: ABC-RW Sampler. Each row corresponds to parameters $\theta_1$ (top row),  $\theta_2$ (second top row), $\theta_3$ (second bottom row), $\theta_4$ (bottom row)  and shows in order from left to right: Trace-plot, Histogram and Auto-correlation function. Red lines represent true parameter values.}
\includegraphics[scale=0.40]{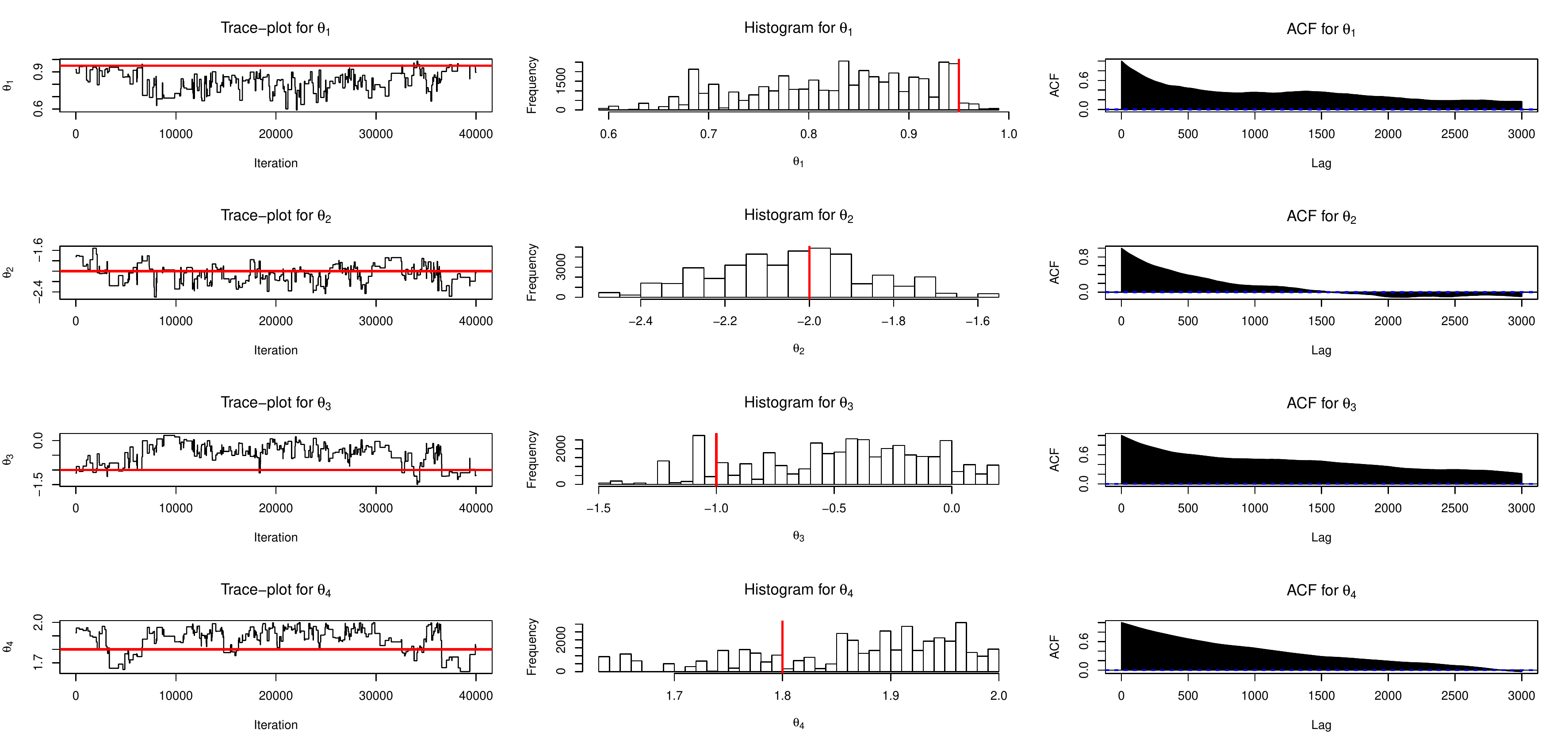}
\label{fig:sv.stable-trace-abcrw}
\end{center}
\end{figure}
As in previous examples the mixing of AABC-U and ABSL-U is much better than of ABC-RW. Since exact sampling is not feasible in this example we compare samplers to SMC (instead of exact samples), the plotted estimated densities are in Figure~\ref{fig:sv.stable-density-BSLIS}, here we have chosen BSL-IS over BSL-RW because it has better general performance in this model. 
\begin{figure}[!ht]
\begin{center}
\caption {SV $\al$-Stable model: Estimated densities for each component. First row compares SMC, ABC-RW and AABC-U samplers. Second row compares SMC, BSL-IS and ABSL-U. Columns correspond to parameter's components, from left to right: $\theta_1$, $\theta_2$, $\theta_3$ and $\theta_4$.}
\includegraphics[scale=0.40]{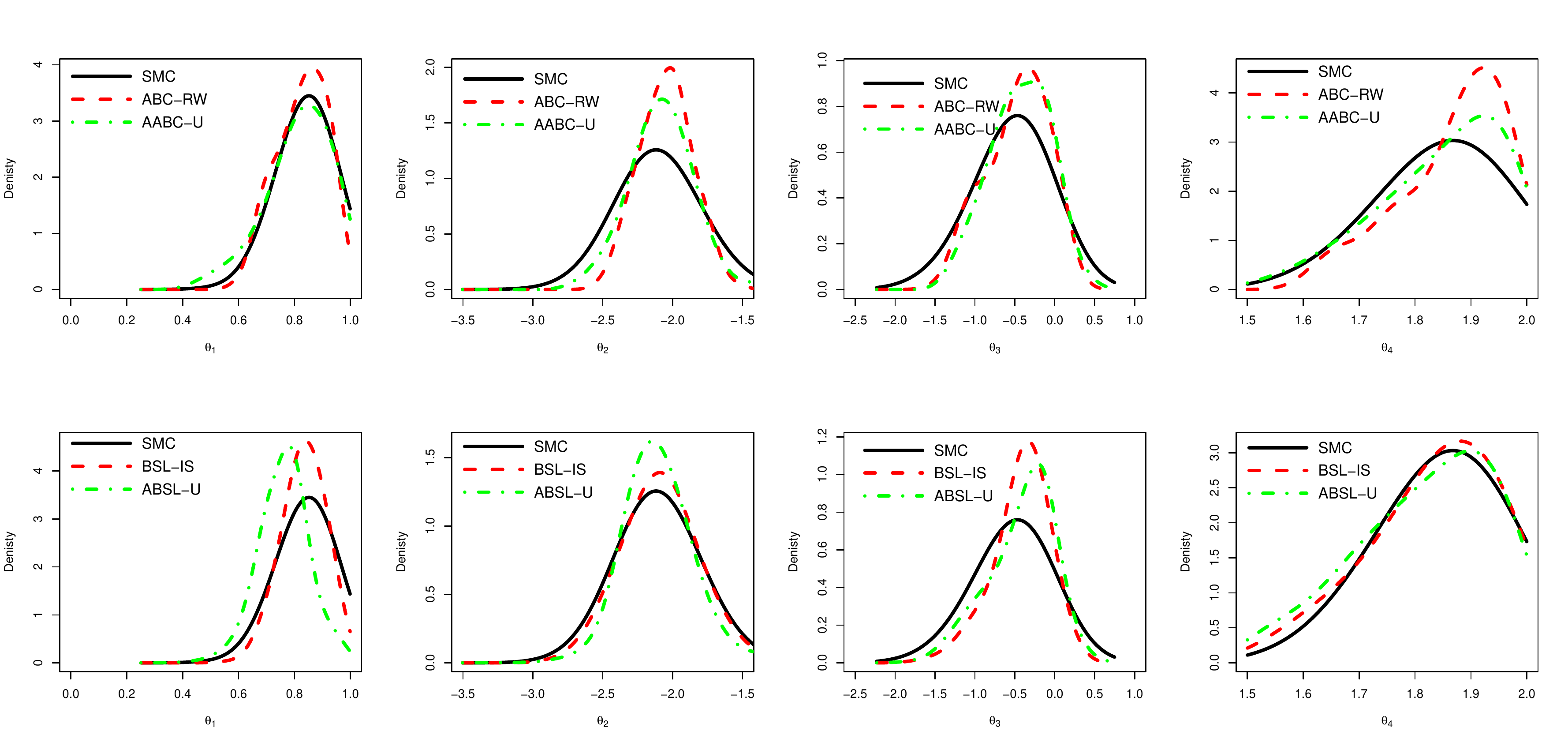}
\label{fig:sv.stable-density-BSLIS}
\end{center}
\end{figure}
Generally all simulation-based samplers have similar densities in this example.\\
For more general conclusions we show average results in Table~\ref{table:sv.stable} over 100 data replicates. Here to calculate DIM, DIC and TV, samplers are compared to SMC since exact draws cannot be obtained.
\begin{table} [!ht]
\begin{center}
\caption{ Simulation Results (SV $\al$-Stable model): Average Difference in mean, Difference in covariance, Total variation, square roots of Bias, variance and MSE, Effective sample size and Effective sample size per CPU time, for every sampling algorithm. In DIM, DIC and TV, samplers are compared to SMC. }
\smallskip
\scalebox{1.0}{
\begin{tabular}{l | l l l || l l l || l l  }
\multicolumn{1}{c}{ }  &  \multicolumn{3}{c}{Diff with SMC}   & \multicolumn{3}{c}{Diff with true parmater} & \multicolumn{2}{c}{Efficiency}\\
\hline
Sampler         & DIM & DIC & TV & $\sqrt{\mbox{Bias}^2}$ & $\sqrt{\mbox{VAR}}$ & $\sqrt{\mbox{MSE}}$ & ESS & ESS/CPU\\
\hline
SMC	&	0.000	&	0.0000	&	0.000	&	0.221	&	0.201	&	0.299	&	468	&	0.267\\
ABC-RW	&	0.078	&	0.0126	&	0.205	&	0.248	&	0.198	&	0.317	&	24	&	0.069\\
ABC-IS	&	0.082	&	0.0151	&	0.306	&	0.232	&	0.221	&	0.320	&	26	&	0.071\\
AABC-U	&	0.069	&	0.0124	&	0.170	&	0.250	&	0.183	&	0.310	&	1303	&	1.617	\\
AABC-L	&	0.069	&	0.0132	&	0.161	&	0.246	&	0.181	&	0.305	&	1256	&	1.546	\\
BSL-RW	&	0.044	&	0.0116	&	0.122	&	0.225	&	0.181	&	0.289	&	123	&	0.037	\\
BSL-IS	&	0.045	&	0.0103	&	0.125	&	0.226	&	0.177	&	0.287	&	285	&	0.084	\\
ABSL-U	&	0.063	&	0.0133	&	0.228	&	0.225	&	0.181	&	0.289	&	832	&	0.735	\\
ABSL-L	&	0.061	&	0.0140	&	0.230	&	0.236	&	0.183	&	0.299	&	757	&	0.671	\\
\hline
\end{tabular}
}
\label{table:sv.stable}
\end{center}
\end{table}
As in previous examples ESS/CPUs for AABC-U, AABC-L, ABSL-U and ABSL-L are roughly 8 times larger than benchmark algorithms. For this example looking at DIM, DIC and TV maybe misleading since approximated samplers are compared to another approximate sampler. Much more informative is MSE measure, it is very similar across ABC-based and BSL-based algorithms. Therefore we can conclude that proposed samplers perform very well in this example.   
\section{Data Analysis}
\label{sec:real}
For real world example we consider Dow-Jones index daily log returns from January 1, 2010 until December 31, 2018. The data were downloaded from Yahoo Finance\footnote{https://ca.finance.yahoo.com/} website. Given a time series of prices $P_i$, $i=1,\cdots, n$, log returns are calculated in the following way:
$$ r_i = \log(P_{i})-\log(P_{i-1}),\hsp i=2,\cdots, n.$$
The resulting time series is of length 2262. To make log returns more suitable for analysis, we standardize $r_t$ by subtracting its mean and then multiply each return by 200, so that absolute values were not too small, Figure~\ref{fig:real-ts} shows transformed returns. 
\begin{figure}[!ht]
\begin{center}
\caption {Dow Jones daily transformed log return for a period of Jan 2010 - Dec 2018.}
\includegraphics[scale=0.40]{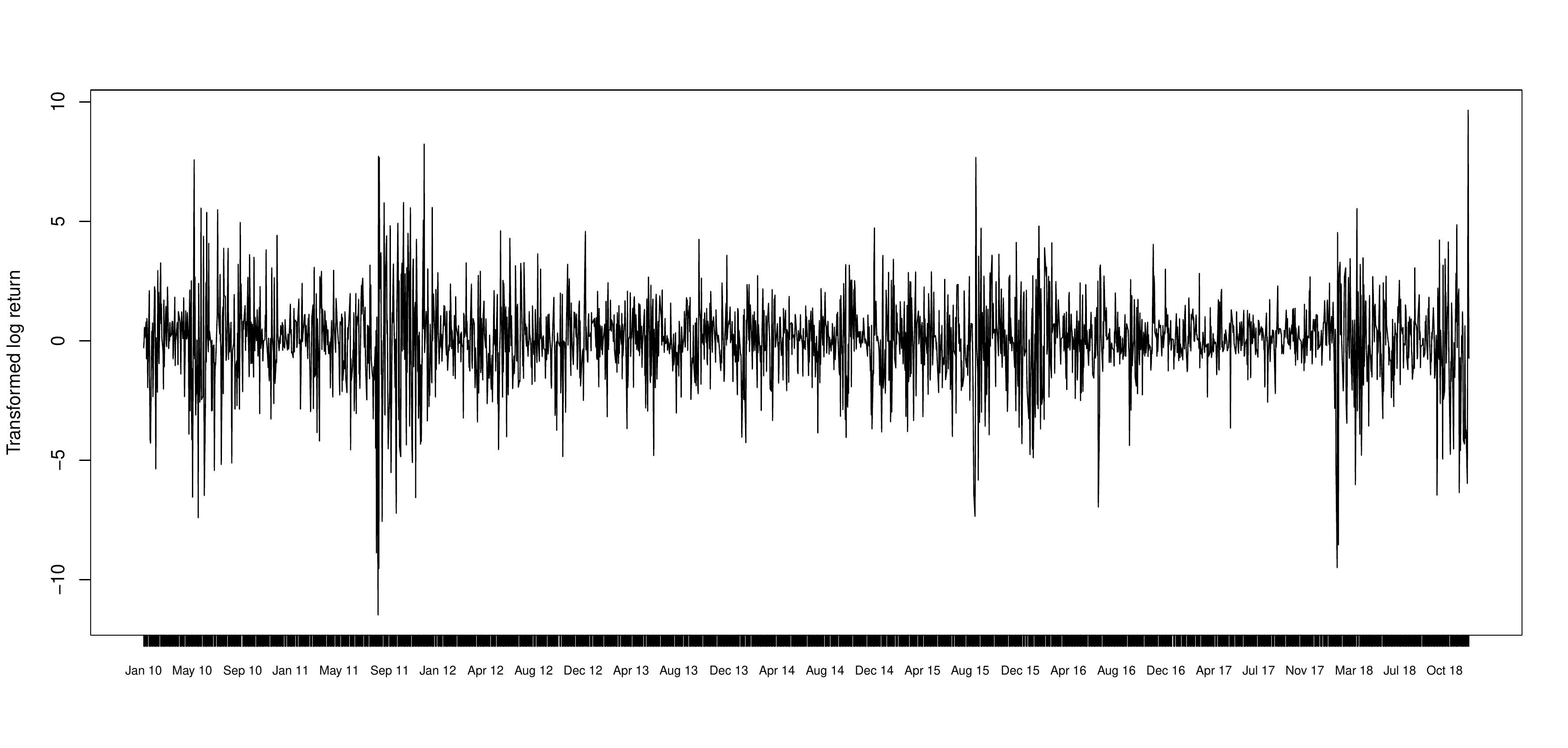}
\label{fig:real-ts}
\end{center}
\end{figure}
This time series ($\by_0$) has mean zero by construction, and its auto-correlations and partial auto-correlations are insignificant for any lag.  However, it is obvious that variances are correlated and there are alternating periods of low and high variability. This prompts us to use Stochastic Volatility model with $\al$-Stable errors as described in the previous section. Since the likelihood does not exist for this class of models, the simulation-based methods are probably the only available tools for the inference. The evolution of time series is described by equation \eqref{eq:sv_model}  and the parameter's prior is set as in equation \eqref{eq:sv_prior}. The skewed parameter of Stable distribution is fixed at value of $-1$. To estimate the posterior distribution we run AABC-U and ABLS-U samplers. The summary statistic for both methods is the same  7-dimensional vector defined in section \ref{sec:sv_al}. Each chain was run for 100 thousand iterations with last 80 thousands used for inference.
Figures~\ref{fig:real-abc} and \ref{fig:real-bsl} show trace-plots and histograms for AABC-U and ABSL-U samplers respectively for each parameter.
\begin{figure}[!ht]
\begin{center}
\caption {Dow Jones log returns: AABC-U Sampler. Every column corresponds to a particular parameter component from left to right: $\theta_1$, $\theta_2$, $\theta_3$, $\theta_4$ and shows trace-plot on top and histogram on bottom. }
\includegraphics[scale=0.40]{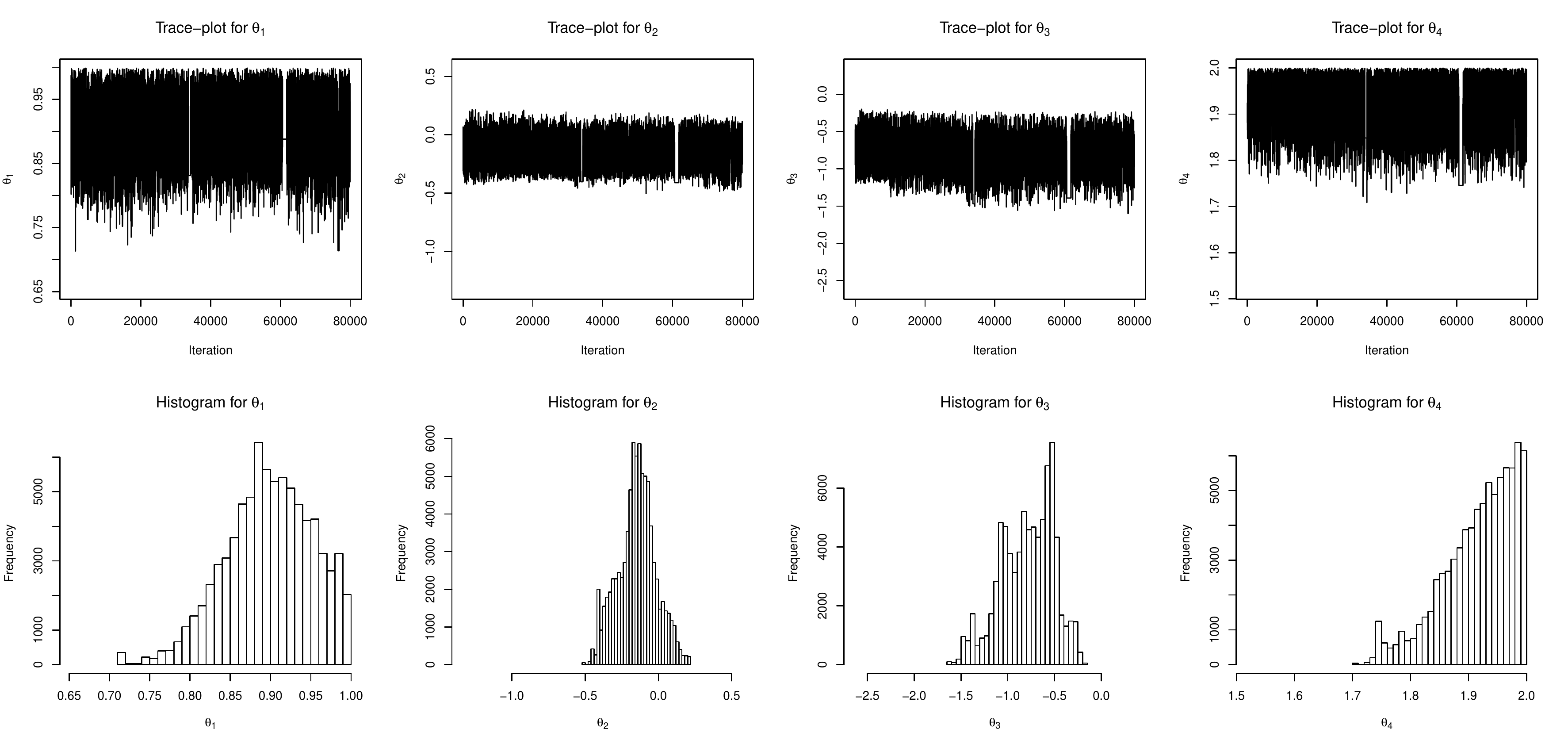}
\label{fig:real-abc}
\end{center}
\end{figure}
\begin{figure}[!ht]
\begin{center}
\caption {Dow Jones log returns: ABSL-U Sampler. Every column corresponds to a particular parameter component from left to right: $\theta_1$, $\theta_2$, $\theta_3$, $\theta_4$ and shows trace-plot on top and histogram on bottom. }
\includegraphics[scale=0.40]{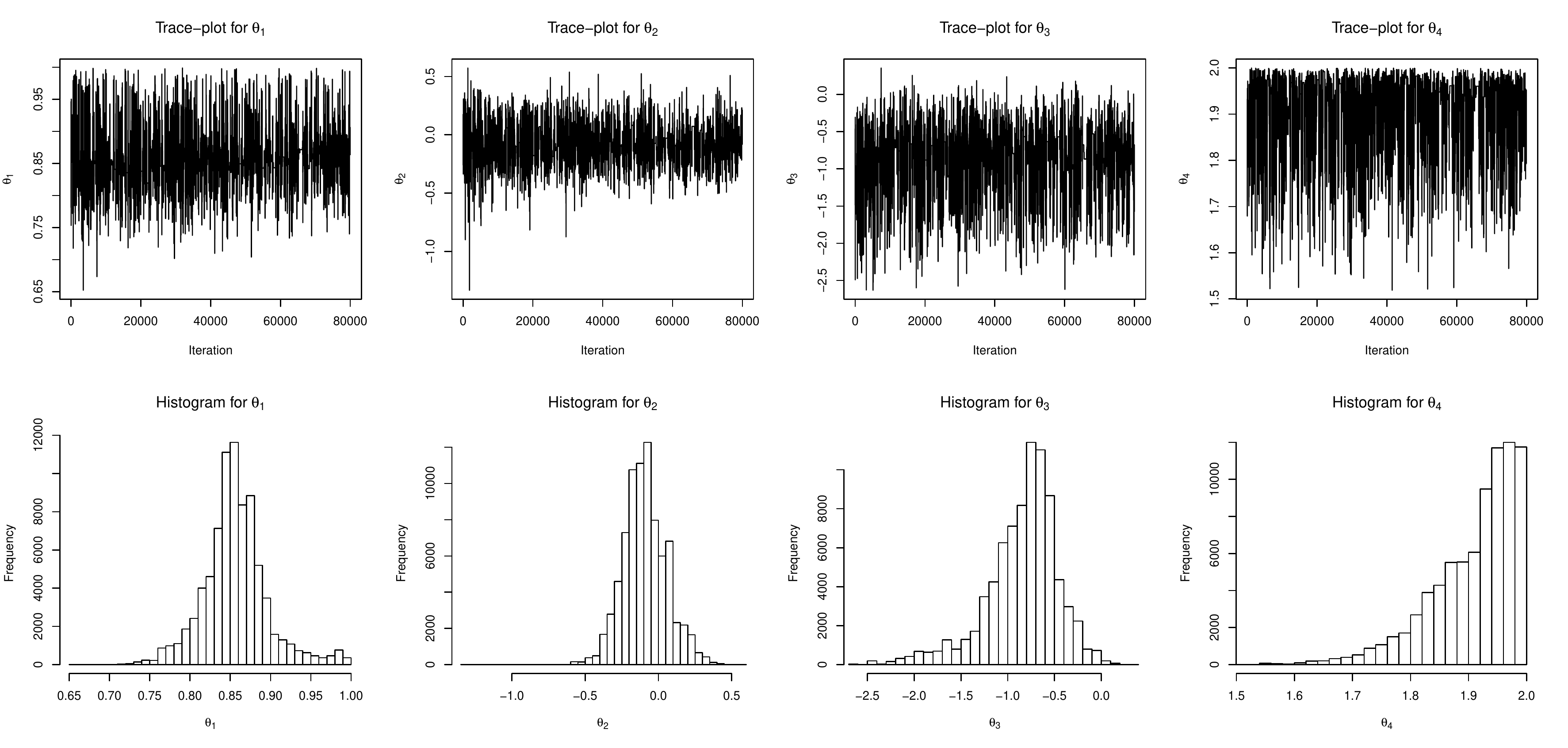}
\label{fig:real-bsl}
\end{center}
\end{figure}
The conclusions are in agreement with  the ones suggested by the simulation study. The mixing of AABC-U is generally better than of ABSL-U. However, posterior draws of ABSL-U for the first 3 components are uni-modal, symmetric and bell-shaped, which is not surprising since  the use of Gaussian priors within the BSL method yields  Gaussian posteriors due to conjugacy. Table~\ref{table:real} reports posterior mean and 95\% credible intervals for every parameter and for both samplers.
\begin{table} [!ht]
\begin{center}
\caption{ Dow Jones log return stochastic volatility: 95\% credible intervals and posterior averages for 4 parameters for two proposed samplers (AABC-U and ABSL-U). }
\smallskip
\scalebox{0.7}{
\begin{tabular}{l | r r r || r r r   }
\multicolumn{1}{c}{ }  &  \multicolumn{3}{c}{AABC-U}   & \multicolumn{3}{c}{ABSL-U} \\
\hline
Parameter   & 2.5\% Quantile & Average & 97.5\% Quantile & 2.5\% Quantile  & Average & 97.5\% Quantile  \\
\hline
$\th_1$	&	0.787	&	0.899	&	0.990	&	0.775	&	0.856	&	0.959	\\
$\th_2$	&	-0.411	&	-0.147	&	0.112	&	-0.369	&	-0.092	&	0.222	\\
$\th_3$	&	-1.405	&	-0.790	&	-0.304	&	-1.858	&	-0.841	&	-0.206	\\
$\th_4$	&	1.758	&	1.916	&	1.997	&	1.721	&	1.909	&	1.996	\\
\hline
\end{tabular}
}
\label{table:real}
\end{center}
\end{table}
AABC-U and ABSL-U produce similar results. We see that the estimated correlation between adjacent variables in the hidden layer  is about $0.9$ and the estimate of $\al$-Stable emission noise is $1.91$.  This model can produce more extreme values than those predicted by one with standard Gaussian noise.  

\section{Theoretical Justifications}
\label{sec:theory}
In this section we show that the novel approximated ABC MCMC and BSL samplers with independent proposals exhibit ergodic properties in a long run. In other words, we want to show that as number of MCMC iterations increases marginal distribution of $\{\th^{(t)}\}$ converges to appropriate posterior distribution in total variation and sample averages converge to the true expectations. \\
We start by reviewing our notation. Let $p(\th), q(\th)$ represent the prior and proposal distributions for $\th \in \Th$ respectively. For AABC we define a function $h(\th)$ as $P(\delta<\eps|\th)$ where $\delta = \delta(\by,\by_0)$ and $\by \sim f(\by|\th)$. Then given a proposed $\z^*$ the acceptance probability is:
\begin{equation} 
\begin{split}
a(\th,\z^*)&=\min(1,\al(\th,\z^*)), \\
\al(\th,\z^*) &= \frac{p(\z^*)q(\th)h(\z^*)}{p(\th)q(\z^*)h(\th)}.
\end{split}
\end{equation}
This MH procedure defines an exact transition kernel which we call $P(\cdot,\cdot)$. Since $h(\th)$ is not available in closed form we will estimate it using k-nearest-neighbor approach. \\
Let $\zn = \{\tilde \z_n,\one_{\{\tilde \delta_n<\eps\}}\}_{n=1}^N$ represent $N$ independent samples from $q(\z)P(\one_{\{\delta<\eps\}}|\z)$ for AABC. Actually $\zn$ contains past generated samples that were saved before $N$th iteration. Given $\th$ and $\z^*$ we apply kNN to approximate $h(\th)$ and $h(\z^*)$ by calculating local weighted averages of $\one_{\{\tilde \delta_n<\eps\}}$ for $\tilde \z_n$ that are close to $\th$ or $\z^*$.  We denote such estimate $\hat h(\th;\zn)$, and the probability of proposal acceptance for this perturbed algorithm (more on perturbed MCMC can be found in \cite{roberts1998convergence, pillai2014ergodicity, johndrow2017error}) is:
\begin{equation} 
\begin{split}
\hat a(\th,\z^*;\zn)&=\min(1,\hat \al(\th,\z^*;\zn)), \\
\hat \al(\th,\z^*;\zn) &= \frac{p(\z^*)q(\th)\hat h(\z^*;\zn)}{p(\th)q(\z^*)\hat h(\th;\zn)}.
\end{split}
\end{equation}
The approximate kernel transition is $\hat P_N(\cdot,\cdot)=E_{\zn}\left[ \hat P_N(\cdot,\cdot;\zn)\right]$, the goal is to show that as $N \to \infty$ the distance between this transition and the exact one converges to zero, where distance is defined as:

\begin{equation}
\|\hat P_N(\cdot,\cdot)-P(\cdot,\cdot)\|=\sup_{\th}\|\hat P_N(\th,\cdot)-P(\th,\cdot)\|_{TV},
\end{equation}
where the last distance is "total variation" distance between two measures. First we show that under strong consistency assumption of $\hat h(\th;\zn)$, perturbed kernel converges to the exact one.
\begin{theorem}
\label{perturb_mcmc}
Suppose $\Th$ is compact, $\sup_{\th}\|\hat h(\th;\zn)-h(\th)\|\to 0$ with probability 1 and $h(\th)>0$ for all $\th \in \Th$. Then for any $\eps>0$ there exists $C$ such that for all $N>C$, $\|\hat P_N- P\|<\eps$. 
\end{theorem}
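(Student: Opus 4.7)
The plan is to reduce the convergence of transition kernels to the convergence of acceptance probabilities, and then to the uniform consistency of $\hat h$. First, I recall that the Metropolis--Hastings kernel splits into an absolutely continuous part and an atom at $\theta$: $P(\theta, d\zeta) = q(\zeta) a(\theta,\zeta) d\zeta + (1 - r(\theta))\delta_\theta(d\zeta)$ with $r(\theta)=\int q(\zeta) a(\theta,\zeta)d\zeta$, and analogously $\hat P_N(\theta,\cdot;\zn)$ with $\hat a$ replacing $a$. A standard bookkeeping argument yields $\|P(\theta,\cdot)-\hat P_N(\theta,\cdot;\zn)\|_{TV} \leq \int q(\zeta) |a(\theta,\zeta) - \hat a(\theta,\zeta;\zn)| d\zeta$, so it suffices to show the right-hand side vanishes uniformly in $\theta$ with probability one.

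Second, set $L_N := \sup_\theta |\hat h(\theta;\zn)-h(\theta)|$, which tends to zero almost surely by assumption, and let $h_0 := \inf_\theta h(\theta)$; compactness of $\Theta$ combined with positivity of $h$ (together with the standard continuity of $h$) gives $h_0>0$. With probability one, $L_N < h_0/2$ for all $N$ large, so $\hat h(\theta;\zn)\ge h_0/2$ uniformly. Writing $\hat\alpha(\theta,\zeta;\zn)/\alpha(\theta,\zeta) = [\hat h(\zeta;\zn)/h(\zeta)]\cdot[h(\theta)/\hat h(\theta;\zn)]$ and using $|\hat h/h - 1|\le L_N/h_0$, I obtain $\hat\alpha = \alpha(1+\varepsilon_N(\theta,\zeta))$ with $|\varepsilon_N|\le \delta_N$ for some deterministic $\delta_N\to 0$ (depending only on $L_N$ and $h_0$).

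Third, a short case analysis on the map $x\mapsto \min(1,x)$ shows that for all $\alpha \ge 0$ and $|\varepsilon|\le 1$, $|\min(1,\alpha(1+\varepsilon))-\min(1,\alpha)|\le |\varepsilon|$. Applying this pointwise gives $|a(\theta,\zeta)-\hat a(\theta,\zeta;\zn)|\le \delta_N$, hence the bound from the first step collapses to $\sup_\theta \|P(\theta,\cdot)-\hat P_N(\theta,\cdot;\zn)\|_{TV} \le \delta_N \to 0$ almost surely. Since $\hat P_N(\theta,\cdot) = E_{\zn}[\hat P_N(\theta,\cdot;\zn)]$, Jensen applied to $\|\cdot\|_{TV}$ gives $\sup_\theta \|P(\theta,\cdot)-\hat P_N(\theta,\cdot)\|_{TV} \le E_{\zn}[\delta_N]$; bounded convergence (the bound is trivially at most 2) then yields $\|\hat P_N - P\|<\varepsilon$ for all $N$ sufficiently large.

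The main obstacle is controlling the ratio $h(\zeta)/\hat h(\theta)$ when either denominator could be small, and this is exactly where positivity of $h$ combined with compactness of $\Theta$ is essential. A secondary subtlety that merits care is that we cannot simply bound $|a-\hat a|$ by $|\alpha-\hat\alpha|$, since $\alpha$ need not even be integrable against $q$; what rescues us is that the truncation $\min(1,\cdot)$ in the MH acceptance rule turns a small \emph{multiplicative} perturbation of $\alpha$ into a small \emph{absolute} perturbation of $a$, uniformly and without any additional integrability hypothesis on the ratio $q/p$.
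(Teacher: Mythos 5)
Your proof is correct, and while it shares the paper's overall architecture (the same split of the Metropolis--Hastings kernel into its absolutely continuous part plus the rejection atom, the reduction of $\|P(\th,\cdot)-\hat P_N(\th,\cdot;\zn)\|_{TV}$ to $\int q\,|a-\hat a|$, and then averaging over $\zn$ to pass to $\hat P_N(\th,\cdot)=E_{\zn}[\hat P_N(\th,\cdot;\zn)]$), the key analytic step is handled by a genuinely different and in fact tighter argument. The paper proves $\hat a \to a$ pointwise in probability via Slutsky and the continuous mapping theorem and then \emph{asserts} that the convergence is uniform in $(\th,\z^*)$ before splitting $E_{\zn}|\hat a - a|$ over the event $\{|\hat a-a|\ge\delta\}$; that uniformity claim is exactly the delicate point and is not argued. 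You instead make it quantitative: the uniform lower bound $h_0=\inf_\th h>0$ lets you write $\hat\al=\al(1+\varepsilon_N)$ with $|\varepsilon_N|$ controlled by $L_N/h_0$ uniformly in $(\th,\z^*)$, and the elementary inequality $|\min(1,\al(1+\varepsilon))-\min(1,\al)|\le|\varepsilon|$ converts the multiplicative perturbation into an absolute bound $\sup_\th\|P(\th,\cdot)-\hat P_N(\th,\cdot;\zn)\|_{TV}\le\delta_N$, after which Jensen and bounded convergence finish the job. This buys an explicit rate in terms of $\sup_\th|\hat h-h|$ and closes the uniformity gap in the published argument. Two small caveats: (i) obtaining $h_0>0$ from compactness and positivity requires continuity (or at least lower semicontinuity) of $h$, which is not among the theorem's stated hypotheses --- you flag this honestly, and it is assumption {\bf(B4)} where the theorem is applied, so it is a fair addition, but it should be stated as such; (ii) $\delta_N$ is a function of the random quantity $L_N$, so it is random rather than deterministic --- harmless, since your final step only uses $\delta_N\wedge 2\to 0$ a.s.\ together with bounded convergence, but the wording should be fixed.
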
  
Next let $\calP_\eps=\{\hat P_N:\|\hat P_N-P\|<\eps\}$  be a collection of perturbed kernels each $\eps$ distance from the exact kernel. For illustration consider an example when auxiliary set $\zn$ grows with number of iterations, in this case at each iteration a new kernel $\hat P_N \in \calP_\eps$ is used in the chain. We want to show that this procedure will results in ergodic chain with appropriate convergence results. For most of the presented results below we refer to the work of \cite{johndrow2015optimal} on convergence properties of perturbed kernels.\\
To obtain useful convergence results we need to make additional Doeblin Condition assumption about the exact kernel $P$:
\begin{definition}[Doeblin Condition]
Given a kernel $P$, there exists $0<\al<1$ such that
$$ \sup_{(\th,\z^*)\in \Th \times \Th}\|P(\th,\cdot)-P(\z^*,\cdot)\|_{TV}<1-\al. $$
\end{definition}
We also choose $\eps$ so that $\al^*= \al + 2\eps < 1$ and $\eps<\al/2$ which by Remark 2.1 in \cite{johndrow2015optimal} guarantees that every member of $\calP_\eps$ satisfies Doeblin Condition with $\al = \al^*$ and has a unique invariant measure. Thus we define the following 3 assumptions:
\begin{enumerate}
\item[({\bf A1})] Exact transition kernel $P$ satisfies satisfies the Doeblin Condition,
\item[({\bf A2})]  For any  $\hat P \in \calP_\eps$, $\|\hat P-P\|<\eps$,
\item[({\bf A3})]  $\eps < \min(\al/2,(1-\al)/2)$.
\end{enumerate}
Now, let $\mu$ be invariant measure of the exact kernel $P$, and the perturbed chain $\th^{(0)},\th^{(1)},\cdots,\th^{(t)}$ is a Markov chain with $\th^{(0)}\sim \nu=\mu_0$. Also define marginal distribution of $\th^{(t)}$ denoted by $\mu_t$, $t=1,2,,$ and equal to $\mu_t = \nu \hat P_0 \hat P_1\cdots \hat P_{t}$ with each $\hat P_t \in \calP$, $t=1,2,\cdots$ and $\hat P_0$ being identity transition (for convenience). First we need to examine the total variation distance between $\mu$ and average measure $\sum_{t=0}^{M-1}\mu_t /M$, in other words:
\begin{equation}
\left\|\mu - \frac{\sum_{t=0}^{M-1}\nu \hat P_{0} \cdots \hat P_{t}  }{M}\right\|_{TV},\hsp \mbox{where} \hsp \hat P_{0}=I.
\end{equation} 
Then we have the following important convergence result:
\begin{theorem}
\label{conv_result_1}
Suppose that  ({\bf A1}), ({\bf A2}) and ({\bf A3}) are satisfied. Let $\nu$ be any probability measure on $(\Th,\calF_0)$, then 
\begin{equation}
\left\| \mu - \frac{\sum_{t=0}^{M-1}\nu \hat P_{0} \cdots \hat P_{t}  }{M}\right\|_{TV} \leq \frac{(1-(1-\al)^M)\|\mu-\nu\|_{TV}}{M\al} - \frac{\eps(1-(1-\al)^M)}{M\al^2}+\frac{\eps}{\al},
\end{equation}
which implies that this difference can be arbitrary small for sufficiently large $M$ and small enough $\eps$.
\end{theorem}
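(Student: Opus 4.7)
The plan is to exploit invariance $\mu P = \mu$ together with the standard consequence of Doeblin's condition: that $P$ contracts signed measures of zero total mass at rate $1-\al$. Granting that, the argument reduces to a telescoping identity plus bookkeeping of two geometric sums.

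First, fixing $t \geq 1$ and using $\mu = \mu P^t$, I would decompose
\[
\mu - \nu \hat P_0 \hat P_1 \cdots \hat P_t = (\mu - \nu)P^t + \sum_{k=1}^{t} \nu \hat P_0 \cdots \hat P_{k-1} (P - \hat P_k) P^{t-k},
\]
the sum being the usual telescoping of $\nu P^t$ against the perturbed product (recall $\hat P_0 = I$). The first summand has TV norm at most $(1-\al)^t\|\mu-\nu\|_{TV}$ by iterating the signed-measure contraction $t$ times. For the $k$-th telescoping term, the signed measure $\nu \hat P_0 \cdots \hat P_{k-1}(P-\hat P_k)$ has zero total mass (both $P$ and $\hat P_k$ are Markov kernels, so $P(\theta,\Theta) - \hat P_k(\theta,\Theta) = 0$) and, by (A2), has TV norm at most $\eps$; applying contraction $t-k$ further times gives the bound $(1-\al)^{t-k}\eps$. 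Summing the resulting geometric series in $k$ yields
\[
\|\mu - \nu \hat P_0 \cdots \hat P_t\|_{TV} \leq (1-\al)^t\|\mu-\nu\|_{TV} + \frac{\eps\bigl(1-(1-\al)^t\bigr)}{\al},
\]
which trivially extends to $t=0$ (both sides reduce to $\|\mu-\nu\|_{TV}$).

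Averaging over $t = 0, 1, \ldots, M-1$ and using the triangle inequality on the left-hand side,
\[
\left\| \mu - \frac{1}{M}\sum_{t=0}^{M-1} \nu \hat P_0 \cdots \hat P_t \right\|_{TV} \leq \frac{1}{M}\sum_{t=0}^{M-1} \|\mu - \nu \hat P_0 \cdots \hat P_t\|_{TV}.
\]
Plugging in the per-$t$ bound and evaluating the two elementary sums $\sum_{t=0}^{M-1}(1-\al)^t = (1-(1-\al)^M)/\al$ and $\sum_{t=0}^{M-1}(1-(1-\al)^t)/\al = M/\al - (1-(1-\al)^M)/\al^2$ reproduces exactly the three-term right-hand side stated.

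The only nontrivial step is the contraction $\|\eta P\|_{TV} \leq (1-\al)\|\eta\|_{TV}$ for signed $\eta$ with $\eta(\Theta)=0$. I would derive it by Jordan-decomposing $\eta = \eta^+ - \eta^-$ with $\eta^+(\Theta) = \eta^-(\Theta) = \|\eta\|_{TV}/2$ and applying (A1) to the pair of probability measures $2\eta^+/\|\eta\|_{TV}$ and $2\eta^-/\|\eta\|_{TV}$; this is Mitrophanov's lemma in standard form, and is the only place where compactness/Doeblin really does work. Note that (A3) is not strictly needed in the present argument; it plays its role upstream by guaranteeing that every $\hat P \in \calP_\eps$ itself inherits a Doeblin condition and hence admits a unique invariant measure against which one would subsequently compare.
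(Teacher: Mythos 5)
Your argument is correct and follows essentially the same route as the paper's proof: the same telescoping decomposition of $\mu-\nu\hat P_0\cdots\hat P_t$ into $(\mu-\nu)P^t$ plus one-step perturbation terms, each bounded by $\eps$ via ({\bf A2}) and contracted by powers of $P$ under the Doeblin condition, followed by averaging over $t$ and summing the two geometric series. The only differences are cosmetic refinements: you make explicit the zero-mass signed-measure contraction (Mitrophanov's lemma via Jordan decomposition) that the paper invokes implicitly, and you correctly observe that ({\bf A3}) is not actually used in this particular bound.
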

Next we focus on the following mean squared error (MSE):
$$ E\left[ \left(\mu f - \frac{\sum_{t=0}^{M-1}f(\th^{(t)})}{M}\right)^2 \right], $$
where $f$ is bounded function and $\mu f = E_{\mu}[f(\th)]$. The main objective here is to find the upper bound for this MSE when perturbed MCMC is used and how it depends on the sample size $M$. To obtain the main result we introduce the following lemma:
\begin{lemma}
\label{lemma_cov}
Suppose:  ({\bf A2}) and ({\bf A3}) are satisfied;  $\th^{(0)} \sim \nu$, where $\nu$ is a probability distribution; $\mu_t=\nu \hat P_1 \cdots \hat P_t$ is the marginal distribution of $\th^{(t)}$, $t=1,2,\cdots$. Let $f(\th)$ and $g(\th)$ be bounded functions with $|f|=\sup_{\th}f(\th)$ and $|g|=\sup_{\th}g(\th)$. Then
$$cov(f(\th^{(k)}),f(\th^{(j)})) \leq 8|f||g|(1-\al^*)^{|k-j|}.$$ 
\end{lemma}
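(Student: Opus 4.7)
The plan is to reduce the covariance bound to a statement about the oscillation of a conditional expectation, and then obtain that oscillation from the Doeblin contraction inherited by every kernel in $\calP_\eps$.

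First, I would assume WLOG $k\le j$ (the covariance is symmetric in its arguments), and use the tower property. Let $K_{k,j}:=\hat P_{k+1}\hat P_{k+2}\cdots \hat P_j$ and define
\begin{equation*}
H(\th) \;:=\; E\bigl[g(\th^{(j)})\,\bigl|\,\th^{(k)}=\th\bigr] \;=\; \int g(\th')\,K_{k,j}(\th,d\th').
\end{equation*}
Then $E[f(\th^{(k)})g(\th^{(j)})] = E[f(\th^{(k)})H(\th^{(k)})]$, so
$\rcov(f(\th^{(k)}),g(\th^{(j)})) = \rcov(f(\th^{(k)}),H(\th^{(k)}))$, and the problem reduces to controlling how much $H$ varies on $\Th$.

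Second, I would invoke the Doeblin condition. By assumptions (A1)--(A3) and the cited Remark~2.1 of \cite{johndrow2015optimal}, every kernel $\hat P\in\calP_\eps$ satisfies Doeblin with parameter $\al^*$, and from this the standard one-step contraction $\|\nu_1\hat P - \nu_2\hat P\|_{TV}\le (1-\al^*)\|\nu_1-\nu_2\|_{TV}$ follows. Iterating over $j-k$ kernels, $K_{k,j}$ is a contraction with factor $(1-\al^*)^{j-k}$, so for any $\th,\z\in\Th$,
\begin{equation*}
|H(\th)-H(\z)| \;\le\; 2|g|\,\bigl\|K_{k,j}(\th,\cdot)-K_{k,j}(\z,\cdot)\bigr\|_{TV} \;\le\; 4|g|\,(1-\al^*)^{j-k},
\end{equation*}
using the convention $\|\nu_1-\nu_2\|_{TV}\le 2$ for probability measures. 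In particular, $|H(\th)-E_{\mu_k}[H]|\le 4|g|(1-\al^*)^{j-k}$ for all $\th$.

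Third, combining the two:
\begin{equation*}
|\rcov(f(\th^{(k)}),H(\th^{(k)}))| \;\le\; E\bigl[|f(\th^{(k)})|\cdot|H(\th^{(k)})-E_{\mu_k}[H]|\bigr] \;\le\; |f|\cdot 4|g|\,(1-\al^*)^{j-k},
\end{equation*}
and applying the same argument when $j<k$ (swap the roles of $f$ and $g$, taking the other term out as the conditioning variable) yields the bound $8|f||g|(1-\al^*)^{|k-j|}$ claimed, with the remaining factor of $2$ absorbing the cruder $E|X-EX|\le 2\|X-c\|_\infty$ bound. The main (and really only) subtlety is the propagation of the Doeblin contraction across a product of \emph{distinct} kernels $\hat P_{k+1},\ldots,\hat P_j\in\calP_\eps$; this is handled by applying the one-step Dobrushin inequality kernel by kernel, since the contraction constant $1-\al^*$ holds uniformly over $\calP_\eps$ under (A2)--(A3).
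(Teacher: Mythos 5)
Your proof is correct and follows essentially the same route as the paper's: condition on the earlier time point, push the bounded function through the product $\hat P_{k+1}\cdots\hat P_j$ of perturbed kernels, and iterate the Doeblin/Dobrushin contraction with the uniform coefficient $1-\al^*$ to control the resulting conditional expectation. The only differences are cosmetic (you bound the oscillation of $H$ via pairwise point masses instead of comparing $\delta_{\th^{(j)}}$ with $\mu_j$ after centering both functions), and in fact your accounting yields the slightly sharper constant $4|f||g|$, which implies the stated $8|f||g|$ bound.
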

The next important convergence results follows (similar to Theorem 2.5 of \cite{johndrow2015optimal}):
\begin{theorem}[Approximation of MSE]
\label{conv_result_2}
Suppose that  ({\bf A1}), ({\bf A2}) and ({\bf A3}) are satisfied.  Let $\mu$ denote the invariant measure of $P$, $f(\th)$ be a bounded function and $\th^{(0)} \sim \nu$, where $\nu$ is a probability distribution . Then
\begin{equation}
\begin{split}
& E\left[ \left(\mu f - \frac{1}{M}\sum_{t=0}^{M-1} f(\th^{(t)})   \right)^2 \right] \\
& \leq 4|f|^2 \left( \frac{(1-(1-\al)^M)}{M\al} - \frac{\eps(1-(1-\al)^M)}{M\al^2} + \frac{\eps}{\al} \right)^2 \\ 
& +  8|f|^2 \left( \frac{1}{M} +  \frac{2}{(\al^*)^2} \left( \frac{(1-\al^*)^{M+1}-(1-\al^*)}{M^2} + \frac{(1-\al^*)-(1-\al^*)^2}{M} \right) \right).
\end{split}
\end{equation}
In other words this expectation can be made arbitrary small for sufficiently large $M$ and small enough $\eps$.
\end{theorem}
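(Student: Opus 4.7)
The plan is a bias--variance decomposition of the MSE, feeding Theorem~\ref{conv_result_1} into the bias term and Lemma~\ref{lemma_cov} into the variance term. Set $Y_M = \frac{1}{M}\sum_{t=0}^{M-1} f(\th^{(t)})$ and $\mu_t = \nu \hat P_{0}\cdots \hat P_{t}$ with $\hat P_0 = I$, so that $E[Y_M] = \frac{1}{M}\sum_{t=0}^{M-1}\mu_t f$. Then
\[
E\!\left[\bigl(\mu f - Y_M\bigr)^2\right] = \bigl(\mu f - E[Y_M]\bigr)^2 + \rv(Y_M),
\]
and treat the two summands separately.

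For the bias, use the elementary fact that $|\pi_1 f - \pi_2 f|\leq 2|f|\,\|\pi_1 - \pi_2\|_{TV}$ for any probability measures $\pi_1,\pi_2$, which gives
\[
\bigl(\mu f - E[Y_M]\bigr)^2 \;\leq\; 4|f|^2 \left\|\mu - \frac{1}{M}\sum_{t=0}^{M-1}\mu_t\right\|_{TV}^{\!2}.
\]
Plugging the inequality from Theorem~\ref{conv_result_1} into the right-hand side and squaring yields exactly the first summand of the stated bound.

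For the variance, expand
\[
\rv(Y_M) \;=\; \frac{1}{M^2}\sum_{j,k=0}^{M-1}\rcov\bigl(f(\th^{(j)}),f(\th^{(k)})\bigr),
\]
and apply Lemma~\ref{lemma_cov} with $g=f$ to bound every covariance by $8|f|^2(1-\al^*)^{|k-j|}$. The diagonal ($k=j$) contributes $8|f|^2/M$, which is the first piece inside the $8|f|^2(\cdot)$ bracket. The off-diagonal sum reduces, after a change of index $d=|k-j|$, to
\[
\frac{2}{M^2}\sum_{d=1}^{M-1}(M-d)(1-\al^*)^d,
\]
which is evaluated in closed form using the standard identities for $\sum_d r^d$ and $\sum_d d\,r^d$ at $r = 1-\al^*$, producing the remaining $\frac{2}{(\al^*)^2}(\cdots)$ bracket.

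The only real obstacle is the bookkeeping in this last step: one must carefully collect the closed forms of the two finite geometric sums to match the precise combination $\frac{(1-\al^*)^{M+1}-(1-\al^*)}{M^2} + \frac{(1-\al^*)-(1-\al^*)^2}{M}$ given in the statement. No new probabilistic input beyond Theorem~\ref{conv_result_1} and Lemma~\ref{lemma_cov} is required; assumption (A1) enters only implicitly through Theorem~\ref{conv_result_1}, while (A2)--(A3) underwrite the geometric decay of the covariances via Lemma~\ref{lemma_cov}.
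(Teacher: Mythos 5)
Your proposal is correct and follows essentially the same route as the paper: the bias--variance decomposition with $E[Y_M]=\frac{1}{M}\sum_{t=0}^{M-1}\nu\hat P_0\cdots\hat P_t f$, the bias bounded via $|\pi_1 f-\pi_2 f|\leq 2|f|\,\|\pi_1-\pi_2\|_{TV}$ together with Theorem~\ref{conv_result_1} (and $\|\mu-\nu\|_{TV}\leq 1$), and the variance bounded by applying Lemma~\ref{lemma_cov} with $g=f$ to the double covariance sum and evaluating the resulting geometric sums. The closed form you indicate for $\frac{2}{M^2}\sum_{d=1}^{M-1}(M-d)(1-\al^*)^d$ does match the bracket in the statement, so no gap remains.
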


Based on these theorems we   obtain convergence results for AABC and ABSL algorithms. To that end, we consider the following  assumptions:
\begin{itemize}
\item[{\bf (B1)}] $\Th$ is a compact set.
\item[{\bf(B2)}] $q(\th)>0$ continuous density of independent proposal distribution.
\item[{\bf(B3)}] $p(\th)>0$ continuous density of prior distribution.
\item[{\bf(B4)}] $h(\th)$ continuous function of $\th$. 
\item[{\bf(B5)}] In kNN estimation assume that $K(N)=\sqrt{N}$ with uniform or linear weights.
\item[{\bf(B6)}] $E[s^j|\th]$ and $E[s^js^k|\th]$ are continuous functions of $\th$ for every $1\leq j,k\leq p$ with $s^j$ representing $j$th component of summary statistic $s$.
\item[{\bf(B7)}] $Var[s^j|\th]$ and $Var[s^j s^k|\th]$ are bounded functions.
\item[{\bf(B8)}] $|\Sigma_{\th}|>a_0$ where $\Sigma_{\th}= Var(s|\th)$ for every $\th\in \Th$.
\end{itemize}

\begin{theorem}[Ergodicity of AABC]
\label{abc_conv}
Consider the proposed AABC sampler with $\eps$ threshold and let: $p(\th)$ denote the prior measure on $\Th$, $\zn$ denote simulated pairs $\{\tilde \z_n,\one_{\{\tilde \delta_n<\eps\}}\}_{n=1}^N$ with $\tilde \z_n \sim q(\z)$ $\forall n$. Assume  {\bf (B1)}-{\bf(B5)} hold.
Then for sufficiently large $N$ (number of past simulations) and $M$ (number of chain iterations), assumptions {\bf (A1)}-{\bf (A3)} are satisfied and  the results established in Theorems~\ref{conv_result_1} and \ref{conv_result_2} follow.
\end{theorem}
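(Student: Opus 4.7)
The plan is to verify each of the three assumptions (A1)--(A3) from the assumptions (B1)--(B5), then invoke Theorem~\ref{perturb_mcmc} to push the kNN uniform consistency into the required kernel-distance bound, and finally appeal directly to Theorems~\ref{conv_result_1}--\ref{conv_result_2}. The chain here is an independent Metropolis sampler with proposal $q$ on the compact set $\Th$ and target proportional to $p(\th) h(\th)$, so all three conditions can in principle be read off from standard uniformly-ergodic theory once the kNN estimator is controlled.

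First I would verify (A1), the Doeblin condition for the exact kernel $P$. By (B1)--(B4), the functions $p$, $q$, and $h$ are positive and continuous on the compact set $\Th$, so each attains a positive minimum and a finite maximum. Consequently the ratio $p(\th) h(\th) / q(\th)$ is bounded above by some $M<\infty$ and below by some $m>0$, uniformly in $\th$. This is precisely the classical sufficient condition for an independent MH sampler to be uniformly ergodic: one shows directly that $P(\th, A) \ge (m/M)\,\mu(A)$ for the invariant measure $\mu$, which gives the Doeblin condition with $\al = m/M \in (0,1)$.

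Next I would establish (A2) for large $N$. The key structural input is Theorem~\ref{perturb_mcmc}, which reduces the kernel-distance bound to the uniform strong consistency $\sup_{\th \in \Th} |\hat h(\th;\zn) - h(\th)| \to 0$ a.s. Under (B5), the bandwidth $K(N) = \sqrt{N}$ satisfies $K(N) \to \infty$ and $K(N)/N \to 0$; the covariates $\tilde \zeta_n$ are i.i.d.\ from $q$, which by (B2) admits a continuous positive density on the compact $\Th$; the responses $\one_{\{\tilde\delta_n<\eps\}}$ are bounded; and the regression function being estimated is $h(\th)$, which is continuous by (B4). These are precisely the hypotheses under which the uniform strong consistency result of Cheng (1984) for kNN regression (as cited just before the theorem) applies, for both the uniform and the linear weight schemes in (B5). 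Combining this with Theorem~\ref{perturb_mcmc} gives, for any $\eps>0$, existence of $N_0$ such that $\|\hat P_N - P\| < \eps$ for all $N > N_0$, which is (A2).

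Finally, (A3) is an entirely free choice: fix any $\eps$ with $\eps < \min(\al/2, (1-\al)/2)$ using the $\al$ produced in the first step, and then choose $N$ large enough via the second step so that (A2) holds for this particular $\eps$. All three hypotheses of Theorems~\ref{conv_result_1} and~\ref{conv_result_2} are then in force, so the stated convergence of the empirical time-averages and the MSE bound follow immediately. The main obstacle I expect is bookkeeping around the uniform kNN convergence step: one needs that the reference measure $q$ truly has a continuous density bounded away from zero on $\Th$ (used here via (B1)--(B2)) and that the $(\tilde\zeta_n,\tilde\delta_n)$ are genuinely i.i.d., which is why the algorithm separates the history draws $\tilde\zeta_n$ from the proposal chain --- a point already emphasised in the construction but which must be invoked explicitly when citing Cheng's theorem.
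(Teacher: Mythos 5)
Your proposal is correct and follows essentially the same route as the paper: establish the Doeblin condition (A1) from the uniform ergodicity of the independent Metropolis kernel using compactness and positivity/continuity of $p$, $q$, $h$ (the paper cites Theorem~\ref{hast-unif} where you give the minorization bound directly), obtain (A2) by combining Cheng's uniform strong consistency of the kNN estimator (Theorem~\ref{knn}) with Theorem~\ref{perturb_mcmc}, and treat (A3) as a free choice of $\eps$ relative to $\al$ before taking $N$ large. No substantive differences from the paper's argument.
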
 

\begin{corollary}[Ergodicity of ABSL]
\label{sbl_conv}
Assume that {\bf (B1)}-{\bf(B8)} hold. Let $p(\th)$ be the prior distribution on $\Th$, $h(\th) = \calN\left(s_0;\mu_{\th},\Sigma_{\th}\right)$, $\zn$ the set of simulated pairs $\{\tilde \z_n,\{\tilde s_n^{(j)}\}_{j=1}^m\}_{n=1}^N$.
Then for sufficiently large $N$ (number of past simulations) and $M$ (number of chain iterations), assumptions {\bf (A1)}-{\bf (A3)} are satisfied and  the results established in Theorems~\ref{conv_result_1} and \ref{conv_result_2} follow.
\end{corollary}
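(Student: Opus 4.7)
The plan is to reduce the claim to verifying assumptions (A1)--(A3) for the ABSL perturbed kernel, after which Theorems~\ref{conv_result_1} and~\ref{conv_result_2} apply as in Theorem~\ref{abc_conv}. Relative to the AABC proof, the only new ingredient is the uniform consistency of the synthetic-likelihood plug-in $\hat h(\th;\zn) = \calN(s_0;\hat\mu_{\th},\hat\Sigma_{\th})$ for the target $h(\th) = \calN(s_0;\mu_{\th},\Sigma_{\th})$; once this is established, Theorem~\ref{perturb_mcmc} delivers (A2) and the remaining conditions are argued as for AABC.

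Under (B6)--(B7), each coordinate regression $\th \mapsto E[s^j|\th]$ and each second-moment regression $\th \mapsto E[s^j s^k|\th]$ is continuous with bounded conditional variance, so compactness of $\Th$ (B1) together with $K(N)=\sqrt{N}$ and uniform/linear weights (B5) give coordinate-wise uniform strong consistency of the associated kNN estimators by \cite{cheng1984strong}. Componentwise continuous-mapping then yields
\[
\sup_{\th\in\Th}\|\hat\mu_\th - \mu_\th\| \to 0 \quad\text{and}\quad \sup_{\th\in\Th}\|\hat\Sigma_\th - \Sigma_\th\| \to 0 \quad\text{a.s.}
\]
To lift this to $\hat h$, note that $(\mu,\Sigma)\mapsto \calN(s_0;\mu,\Sigma)$ is continuous on $\{|\Sigma|>0\}$. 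By (B8) and continuity of $\th\mapsto\Sigma_\th$ on the compact set $\Th$, the image of $\th\mapsto(\mu_\th,\Sigma_\th)$ lies in a compact subset $\mathcal K\subset\{|\Sigma|\ge a_0\}$; the uniform convergence just established ensures that $(\hat\mu_\th,\hat\Sigma_\th)$ eventually lies, uniformly in $\th$, in a slightly enlarged compact set where $|\Sigma|\ge a_0/2$. On this set $\calN(s_0;\cdot,\cdot)$ is uniformly continuous, whence $\sup_\th|\hat h(\th;\zn) - h(\th)| \to 0$ a.s. Finally $h(\th)>0$ on $\Th$ since $\Sigma_\th$ is non-degenerate.

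Given uniform consistency and positivity of $\hat h$, (A2) follows from Theorem~\ref{perturb_mcmc}. For (A1), the independent-Metropolis kernel $P$ has acceptance ratio $p(\zeta)q(\th)h(\zeta)/[p(\th)q(\zeta)h(\th)]$; by (B2)--(B4) and the Gaussian form of $h$, all four factors are continuous and strictly positive on the compact set $\Th$, hence uniformly bounded above and away from zero. Thus $a(\th,\zeta)\ge c$ for some $c>0$, yielding the minorization $P(\th,A)\ge c\int_A q(\zeta)\,d\zeta$ and with it the Doeblin condition. Assumption (A3) is then satisfied by choosing $\eps<\min(\al/2,(1-\al)/2)$, and Theorems~\ref{conv_result_1} and~\ref{conv_result_2} apply.

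The main obstacle is the transition from uniform convergence of $(\hat\mu_\th,\hat\Sigma_\th)$ to uniform convergence of $\hat h$: the Gaussian density is not continuous at singular covariance matrices, so without a uniform positive lower bound on $|\Sigma_\th|$ a small perturbation of $\hat\Sigma_\th$ could produce an uncontrolled change in $\calN(s_0;\hat\mu_\th,\hat\Sigma_\th)$. Assumption (B8) is precisely what makes this tractable, by confining the action to a compact region of Gaussian parameters on which the density is uniformly continuous; the rest of the argument is structurally the same as for AABC.
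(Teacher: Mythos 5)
Your proposal is correct and follows essentially the same route as the paper: verify \textbf{(A1)}--\textbf{(A3)} by combining uniform strong consistency of the kNN plug-in with Theorem~\ref{perturb_mcmc} for \textbf{(A2)}, a Doeblin condition for the exact independent-Metropolis kernel for \textbf{(A1)}, and a suitably small $\eps$ for \textbf{(A3)}. Two minor differences are worth noting: you derive the Doeblin condition directly from the minorization $P(\th,A)\ge c\int_A q(\z)\,d\z$ implied by the uniformly positive acceptance probability, whereas the paper invokes the independent-Metropolis uniform ergodicity result (Theorem~\ref{hast-unif}); and you spell out the passage from coordinatewise kNN consistency of $\hat\mu_\th$ and $\hat\Sigma_\th$ (via \textbf{(B6)}--\textbf{(B7)} and Theorem~\ref{knn}) to uniform consistency of $\hat h(\th;\zn)=\calN(s_0;\hat\mu_\th,\hat\Sigma_\th)$, using \textbf{(B8)} and compactness to keep the estimated covariances uniformly away from singularity --- a step the paper leaves implicit when it asserts $\sup_{\th}|\hat h(\th;\zn)-h(\th)|\to 0$ directly from Theorem~\ref{knn}. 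Both refinements are sound and, if anything, make the argument more complete than the paper's own.
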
 
To prove the results above we will utilize the following two theorems, one is about the strong uniform consistency of kNN estimators the later one is about uniform ergodicity of Hastings algorithm with independent proposal.
\begin{theorem}[Uniform Consistency of kNN - \cite{cheng1984strong}]
\label{knn}
Given independent $\{\tilde \z_n,\tilde \delta_n\}_{n=1}^{N}$, let $\Th$ be support of distribution of $\tilde \z$, $h(\tilde \z)=E(\tilde \delta|\tilde \z)$ and $\hat h_N(\tilde \z)=\sum_{j=1}^N W_{Nj}\tilde \delta_j$ (kNN estimator) (here $j$ are permuted indices that order distances between $\tilde \z_n$ and $\tilde \z$ from smallest to largest). Suppose weights $W_{Nj}$ satisfy 
\begin{itemize}
\item[(i)] $\sum_{j=1}^N W_{Nj} =1 $,
\item[(ii)] $W_{Nj}=0$ for $j>K$, and $K=K(N)$ with $K \to \infty$ and $K/N \to 0$,
\item[(iii)] $\sup_N K \max_j W_{Nj} < \infty$.
\end{itemize} 
If
\begin{itemize}
\item[(i)] $\Th$ is compact,
\item[(ii)] $h(\tilde \z)$ is continuous function,
\item[(iii)] $Var(\tilde \delta|\tilde \z)$ is bounded random variable,
\item[(iv)] $K(N)$ satisfies $K/\sqrt{N}\log(N) \to \infty$,
\end{itemize}
then $\sup_{\tilde \z \in \Th}|\hat h_N(\tilde \z)-h(\tilde \z)|\to 0$ with probability 1.
\end{theorem}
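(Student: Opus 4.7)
The plan is to prove uniform strong consistency via a bias--variance decomposition combined with a covering argument on $\Th$. Write $\hat h_N(\tilde\z) - h(\tilde\z) = V_N(\tilde\z) + B_N(\tilde\z)$, where $B_N(\tilde\z) := E[\hat h_N(\tilde\z)\mid \tilde\z_1,\ldots,\tilde\z_N]-h(\tilde\z)$ is the conditional bias given the design points and $V_N(\tilde\z)$ is the stochastic deviation around it, and I would control each term separately.

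For the bias, the continuity of $h$ on the compact set $\Th$ yields uniform continuity with some modulus $\omega_h$. Since the conditional mean equals $\sum_j W_{Nj}\, h(\tilde\z_{(j)})$, where $\tilde\z_{(j)}$ is the $j$-th nearest design point to $\tilde\z$, we get the deterministic bound $|B_N(\tilde\z)| \le \omega_h(R_{N,K}(\tilde\z))$ with $R_{N,K}(\tilde\z) := \|\tilde\z_{(K)}-\tilde\z\|$. To show $\sup_{\tilde\z\in\Th} R_{N,K}(\tilde\z)\to 0$ almost surely, I would cover $\Th$ by finitely many balls of radius $\eta/2$; each ball intersects $\Th$ and thus carries positive mass under the design distribution since $\Th$ is its support. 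By the strong law applied to each of the finitely many balls, almost surely each eventually contains more than $K(N)$ design points (using $K/N\to 0$), so every $\tilde\z\in\Th$ satisfies $R_{N,K}(\tilde\z)\le \eta$ for all large $N$. Letting $\eta\downarrow 0$ gives $\sup_{\tilde\z\in\Th}|B_N(\tilde\z)|\to 0$ a.s.

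For the variance term, I would condition on the design. The summands $\tilde\delta_j-h(\tilde\z_{(j)})$ have bounded conditional variance (this is where assumption (iii) on $\mathrm{Var}(\tilde\delta\mid\tilde\z)$ enters), and conditions (i) and (iii) on the weights give $\sum_j W_{Nj}^2 \le (\max_j W_{Nj})\sum_j W_{Nj}\le C/K$. A Bernstein- or Hoeffding-type inequality then yields $P(|V_N(\tilde\z)|>t\mid\mbox{design})\le 2\exp(-cKt^2)$ for some $c>0$. To upgrade this pointwise inequality to a uniform one, I would use that for any fixed design the $K$-NN weight assignment is piecewise constant in $\tilde\z$, with polynomially many (in $N$) constant regions carved out by the perpendicular bisectors of pairs of design points. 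A union bound over one representative from each region introduces only an $O(\log N)$ factor in the exponent, and the hypothesis $K/(\sqrt N\log N)\to\infty$ makes the resulting probabilities summable in $N$. Borel--Cantelli then delivers $\sup_{\tilde\z\in\Th}|V_N(\tilde\z)|\to 0$ almost surely, and combining this with the bias bound finishes the proof.

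The main obstacle will be the uniformization in the variance step: rigorously bounding the number of polyhedral regions generated by the $K$-NN partition and validating that the concentration bound may be applied jointly across them on the \emph{same} realization of the design. A cleaner route, which I would probably prefer, is to invoke Vapnik--Chervonenkis machinery --- the family of indicator sets of the form ``$\tilde\z_j$ is among the $K$ nearest design points to the query point'' is cut out by $O(N)$ quadratic inequalities in the query point and hence has bounded VC dimension, so symmetrization and the VC inequality produce the required uniform deviation bound with exactly the extra $\log N$ factor that the rate hypothesis on $K$ is calibrated to absorb. A third option is to appeal directly to general uniform-consistency theorems for local averaging estimators in the Devroye--Gy\"orfi tradition, which package this argument in ready-made form.
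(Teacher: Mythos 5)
You should first note that the paper itself offers no proof of this statement: it is imported verbatim from Cheng (1984) and used as a black box, so there is no internal argument to compare against; your sketch has to stand on its own against the hypotheses as stated. Your overall skeleton (bias via uniform continuity of $h$ plus a covering argument showing $\sup_{\tilde\z}\|\tilde\z_{(K)}-\tilde\z\|\to 0$ a.s.\ when $K/N\to 0$, then a conditional concentration bound uniformized over the polynomially many $K$NN configurations) is the standard route, and the bias half is essentially correct as written.

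The genuine gap is in the variance step. The theorem only assumes that $Var(\tilde\delta\mid\tilde\z)$ is bounded; it does \emph{not} assume $\tilde\delta$ is bounded, so neither Hoeffding nor Bernstein applies and the claimed bound $P(|V_N(\tilde\z)|>t\mid\mbox{design})\le 2\exp(-cKt^2)$ is not justified --- bounded conditional variance alone gives only Chebyshev-type tails, which are not summable after the union bound. This is exactly where the otherwise odd-looking rate hypothesis $K/(\sqrt{N}\log N)\to\infty$ comes from: the classical argument truncates $\tilde\delta$ at a level growing with $N$, applies an exponential inequality to the truncated part (whose Bernstein exponent degrades by the truncation level), and controls the truncation remainder with the second-moment bound; the $\sqrt{N}$ factor is the fingerprint of that truncation. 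Under your claimed sub-Gaussian bound, $K/\log N\to\infty$ would already suffice, which is a signal that a step is missing rather than that the hypothesis is slack. (In the paper's own application the $\tilde\delta_n$ are indicators, so your argument would go through there, but not for the theorem as stated.) A secondary issue: the ``piecewise constant in $\tilde\z$'' claim used for the union bound holds only for uniform weights; for the linear weights allowed by conditions (i)--(iii) the weights vary continuously inside each order-$K$ Voronoi cell, so picking one representative per cell does not control the supremum without an extra continuity or chaining argument --- your fallback of invoking general uniform-consistency results for local averaging estimators (or Cheng's theorem itself) is the cleaner way to close both holes.
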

Note that the uniform and linear weights satisfy $W_{Nj}$ assumptions above.
\begin{theorem}[Independent Metropolis sampler  - \cite{mengersen1996rates}]
\label{hast-unif}
Suppose $\th^{(t)}$ is a MH Markov Chain with invariant distribution $\pi(\th)$, independent proposal $q(\th)$ and acceptance probabilities $a(\th,\z^*) = \min\left( 1,\frac{\pi(\z^*)q(\th)}{\pi(\th)q(\z^*)}\right)$.\\
If there exists $\beta>0$ such that $q(\th)/\pi(\th)>\beta$ for all $\th\in \Th$, then the algorithm is \textit{uniformly ergodic} so that $\|P^n(\th,\cdot)-\pi\|_{TV}<(1-\beta)^n$ (here $P^n(\th,\cdot)$ is conditional distribution of $\th^{(n)}$ given $\th^{(0)}=\th$).
\end{theorem}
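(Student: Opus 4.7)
The plan is to establish a uniform one-step minorization $P(\th,A)\geq \beta\,\pi(A)$ for every $\th\in\Th$ and every measurable $A$, and then extract the geometric bound $(1-\beta)^n$ by the standard kernel-splitting argument. This is the cleanest route because the hypothesis $q/\pi \geq \beta$ on the whole space is precisely a Doeblin-type condition in disguise, with the entire state space acting as a small set.

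First I would write the independent-proposal Metropolis kernel explicitly as
\begin{equation*}
P(\th,A) \;=\; \int_A a(\th,\z)\,q(\z)\,d\z \;+\; r(\th)\,\one_A(\th),
\end{equation*}
where $r(\th) = 1 - \int a(\th,\z)q(\z)\,d\z$, and observe that the Dirac mass at $\th$ contributes a non-negative remainder, so only the proposal-accepted term needs to be bounded from below. The pointwise claim is that $a(\th,\z)\,q(\z) \geq \beta\,\pi(\z)$ for every $\th,\z\in\Th$. This follows from a case split on the ratio inside the $\min$: if $\pi(\z)q(\th)\geq \pi(\th)q(\z)$ then $a(\th,\z)=1$ and the inequality reduces to the hypothesis $q(\z)\geq \beta\,\pi(\z)$ applied at $\z$; otherwise $a(\th,\z)q(\z) = \pi(\z)\,q(\th)/\pi(\th)\geq \beta\,\pi(\z)$ by the hypothesis applied at $\th$. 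Integrating over $A$ yields the minorization.

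With the minorization in hand, I would decompose $P(\th,\cdot) = \beta\,\pi(\cdot) + (1-\beta)\,R(\th,\cdot)$, where $R(\th,A):=(P(\th,A)-\beta\,\pi(A))/(1-\beta)$ is a genuine Markov kernel (non-negativity from the minorization, and $R(\th,\Th)=1$). Since $\pi P=\pi$, this forces $\pi R=\pi$, so for any probability measure $\nu$,
\begin{equation*}
\|\nu P - \pi\|_{TV} \;=\; (1-\beta)\,\|\nu R - \pi R\|_{TV} \;\leq\; (1-\beta)\,\|\nu-\pi\|_{TV},
\end{equation*}
the last step being the non-expansivity of total variation under Markov kernels. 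Iterating $n$ times with $\nu=\delta_\th$, together with $\|\delta_\th-\pi\|_{TV}\leq 1$, gives $\|P^n(\th,\cdot)-\pi\|_{TV}\leq (1-\beta)^n$, strict whenever $P^n(\th,\cdot)\neq \pi$, matching the stated inequality.

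The only delicate step is the pointwise bound $a(\th,\z)q(\z)\geq \beta\,\pi(\z)$: one must apply the hypothesis $q/\pi\geq \beta$ at the correct argument in each branch of the $\min$ (at $\z$ when acceptance is forced, at $\th$ when it is not). Everything else is elementary Markov chain algebra, and the conclusion does not depend on any structural assumption on $\Th$, $\pi$, or $q$ beyond the uniform lower bound.
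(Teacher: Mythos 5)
Your proof is correct. The paper does not prove this result at all --- it is imported verbatim from Mengersen and Tweedie (1996) --- and your argument (the one-step minorization $P(\th,\cdot)\geq\beta\,\pi(\cdot)$ obtained from the case split on the acceptance ratio, followed by the kernel decomposition $P=\beta\pi+(1-\beta)R$ and iteration) is exactly the classical proof given in that reference, so there is nothing to reconcile. The only cosmetic remark is that your chain of inequalities naturally yields $\leq(1-\beta)^n$ rather than the strict inequality as stated; since the hypothesis $q/\pi>\beta$ is strict this is immaterial, and you already flag it.
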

\subsection{Proofs of theorems}
\begin{proof}[Proof of Theorem~\ref{perturb_mcmc}]
Note that $\sup_{\th}\|\hat h(\th;\zn)-h(\th)\|\to 0$ w.p.1 implies that for all $\th$ and $\z^*$ in $\Th$: 
$$\hat h(\th;\zn) \overset{p}{\to} h(\th),$$
$$\hat h(\z^*;\zn) \overset{p}{\to} h(\z^*),$$
therefore by Slutsky's theorem we obtain
$$\frac{\hat h(\z^*;\zn)}{\hat h(\th;\zn)} \overset{p}{\to} \frac{h(\z^*)}{h(\th)},$$ for all $(\th,\z^*)$ in $\Th \times \Th$.
therefore 
$$\hat \al(\th,\z^*;\zn)=\frac{p(\z^*)q(\th)\hat h(\z^*;\zn)}{p(\th)q(\z^*)\hat h(\th;\zn)} \overset{p}{\to} \frac{p(\z^*)q(\th)h(\z^*)}{p(\th)q(\z^*)h(\th)}=\al(\th,\z^*).$$
Since $\min(1,x)$ is a continuous function, Continuous Mapping Theorem implies that
$$\hat a(\th,\z^*;\zn)=\min(1,\hat \al(\th,\z^*;\zn)) \overset{p}{\to} \min(1,\al(\th,\z^*))=a(\th,\z^*).$$
Note that this not just a point-wise convergence, but uniform convergence in probability so that one $C$ will work for all $(\th,\z^*)$. That is, for any $(\th,\z^*)$, $\delta>0$ and $\eps>0$ there exists $C$ such that for all $N>C$, $P(|\hat a(\th,\z^*;\zn)-a(\th,\z^*)|>\delta)<\eps$.\\
Another important observation is that (fixing $\th$, $\z^*$ and letting $a(\th,\z^*)=a$ and $\hat a(\th,\z^*;\zn)=\hat a$ for convenience)
\begin{equation}
\label{eq:exp_bound}
\begin{split}
E_{\zn}(|\hat a-a|)&=\int |\hat a-a| dF(\zn)=\int_{|\hat a-a|<\delta}|\hat a-a| dF(\zn)+\int_{|\hat a-a|\geq\delta}|\hat a-a| dF(\zn)\leq \\
& \leq \delta + \int_{|\hat a-a|\geq\delta} dF(\zn)\leq \delta + \eps.
\end{split}
\end{equation}
Because $|\hat a-a|\leq 1$ and applying definition of convergence in probability. The above inequality shows that we can make this expected value arbitrary small by taking large enough $N$, moreover this result is uniform so one $N$ will work for all $\th$ and $\z^*$.\\
Next we focus on the distance between two transition kernels, this discussion is similar to the proof of Corollary 2.3 in \cite{alquier2016noisy}. Observe that (using independent proposals):
\begin{equation*}
\begin{split}
& P(\th,d\z^*)=q(\z^*)a(\th,\z^*)d\z^* + \delta_{\th}(\z^*)r(\th), \\
& \hat P_N(\th,d\z^*)=\int q(\z^*) \hat a(\th,\z^*;\zn)d\z^* dF(\zn) + \delta_{\th}(\z^*)\hat r_N(\th),
\end{split}
\end{equation*}
where $r(\th)=1-\int q(\z^*)a(\th,\z^*)d\z^*$ and $\hat r_N(\th)=1-\int \int q(\z^*)a(\th,\z^*)d\z^* dF(\zn)$.
Fix $\th \in \Theta$, and noting that total variation between two probability distributions that have densities is also equal to:
$$\|\pi-\hat \pi\|_{TV} = 0.5\int |\pi(\th)-\hat \pi(\th)| d\th.$$
Therefore
\begin{equation}
\begin{split}
P(\th,d\z^*)-\hat P_N(\th,d\z^*) = & \int q(\z^*)(a(\th,\z^*)-\hat a(\th,\z^*;\zn))dF(\zn) \\
& + \delta_{\th}(d\z^*)\int \int q(t)(a(\th,t)-\hat a(\th,t;\zn))dF(\zn)dt,
\end{split}
\end{equation}
and it follows that 
\begin{equation}
\begin{split}
\|P(\th,d\z^*)-\hat P_N(\th,d\z^*)\|_{TV} \leq & 0.5 \left\{ \left| \int \int q(\z^*)(a(\th,\z^*)-\hat a(\th,t;\zn))dF(\zn)d\z^* \right| \right. \\
& \left. + \left| \int \int q(t)(a(\th,t)-\hat a(\th,t;\zn))dF(\zn)dt \right| \right\} \\
& = \left| \int \int q(t)(a(\th,t)-\hat a(\th,t;\zn))dF(\zn)dt \right| \\
& \leq \int \int q(t)\left| a(\th,t)-\hat a(\th,t;\zn) \right| dF(\zn)dt \leq \delta + \eps\\
\end{split}
\end{equation}
for any $\eps>0$ and $\delta>0$ and large enough $N$ by \eqref{eq:exp_bound}. Since this result is true for any $\th \in \Th$ we finally get the main result:
\begin{equation}
\sup_{\theta}\|\hat P_N(\th,d\z^*)-P(\th,d\z^*)\|_{TV}\leq \delta+\eps
\end{equation}
\end{proof}

\begin{proof}[Proof of Theorem~\ref{conv_result_1}]
We generally follow the proof of Theorem 2.4 in \cite{johndrow2015optimal}. First observe that:
$$ \nu \hat P_{0} \cdots \hat P_{M} -\mu P^M = (\nu-\mu)P^M + \sum_{t=0}^{M-1}\nu \hat P_0\cdots \hat P_{t}(\hat P_{t+1} -P)P^{M-t-1}. $$
By Assumptions 2 and 3, we get:
$$ \|\nu \hat P_0 \cdots \hat P_{t} \hat P_{t+1} - \nu \hat P_0 \cdots \hat P_{t} P\|_{TV} \leq \eps,$$
and 
$$ \|\nu \hat P_0 \cdots \hat P_{t} \hat P_{t+1} P^{n-t-1} - \nu \hat P_0 \cdots \hat P_{t} P P^{M-t-1}\|_{TV} \leq \eps(1-\al)^{M-t-1}.$$
Using these results, the triangular inequality and formula for sum of finite geometric series we establish that:
\begin{equation}
\begin{split}
\|\nu \hat P_0 \cdots \hat P_{M} - \mu P^M\|_{TV} \leq & \|\mu P^M-\nu P^M\|_{TV} + \sum_{t=0}^{M-1} \|\nu \hat P_0 \cdots \hat P_{t} \hat P_{t+1} P^{M-t-1} - \nu \hat P_0 \cdots \hat P_{t} P P^{M-t-1}\|_{TV} \\
\leq & (1-\al)^M \|\mu-\nu\|_{TV} + \eps \sum_{t=0}^{M-1}(1-\al)^{M-t-1} \\
= &  (1-\al)^M \|\mu-\nu\|_{TV} + \eps \frac{1-(1-\al)^M}{\al}.
\end{split}
\end{equation}
Finally we get the main result using that fact that $\mu$ is invariant for $P$ (again using sum of finite geometric series)
\begin{equation}
\begin{split}
\left\|\mu - \frac{\sum_{t=0}^{M-1}\nu \hat P_{0} \cdots \hat P_{t}  }{M}\right\|_{TV} = &
\left\|\frac{\sum_{t=0}^{M-1} \mu P^t}{M} - \frac{\sum_{t=0}^{M-1}\nu \hat P_{0} \cdots \hat P_{t}  }{M}\right\|_{TV} \\ \leq & \frac{1}{M}\sum_{t=0}^{M-1} \|\mu P^t - \nu \hat P_{0} \cdots \hat P_{t} \|_{TV} \\
\leq & \frac{1}{M}\sum_{t=0}^{M-1} \left( (1-\al)^t \|\mu-\nu\|_{TV} + \eps \frac{1-(1-\al)^t}{\al} \right) \\ = & \frac{(1-(1-\al)^M)\|\mu-\nu\|_{TV}}{M\al} - \frac{\eps(1-(1-\al)^M)}{M\al^2}+\frac{\eps}{\al}.
\end{split}
\end{equation}
\end{proof}

\begin{proof}[Proof of Lemma~\ref{lemma_cov}]
Without loss of generality we assume that $k>j$, next define:
$$ \tilde f(\th^{(j)}) = f(\th^{(j)}) - \mu_j f, $$
$$ \tilde g(\th^{(k)}) = g(\th^{(k)}) - \mu_k g, $$
so that $E[\tilde f(\th^{(j)})]=E[\tilde g(\th^{(k)})]=0$. Then we get the following
\begin{equation}
\label{eq:cov1}
\begin{split}
cov(f(\th^{(j)}),g(\th^{(k)})) = & E[\tilde f(\th^{(j)})\tilde g(\th^{(k)})]=E[E[\tilde f(\th^{(j)})\tilde g(\th^{(k)})|\th^{(j)}]] \\
=& E[\tilde f(\th^{(j)})E[\tilde g(\th^{(k)})|\th^{(j)}]] = E_{\th^{(j)}}[\tilde f(\th^{(j)})\delta_{\th^{(j)}}\hat P_{j+1} \cdots \hat P_{k}\tilde g],
\end{split}
\end{equation}
where $\delta_{\th}$ is point mass at $\th$ and using our notation $\delta_{\th^{(j)}}\hat P_{j+1} \cdots \hat P_{k}$ corresponds to conditional distribution of $\th^{(k)}$ given fixed value of $\th^{(j)}$. \\
Using the general observation that for any two measures $\nu_1$ and $\nu_2$ and any bounded function $f$ the following inequality holds
\begin{equation}
|\nu_1 f - \nu_2 f| \leq 2 |f| \|\nu_1-\nu_2\|_{TV},
\end{equation}
we find that:
\begin{equation}
\begin{split}
|\delta_{\th^{(j)}}\hat P_{j+1} \cdots \hat P_{k}\tilde g| = & |\delta_{\th^{(j)}}\hat P_{j+1} \cdots \hat P_{k}\tilde g-0|=|\delta_{\th^{(j)}}\hat P_{j+1} \cdots \hat P_{k}\tilde g -\mu_k \tilde g|\\
=& |\delta_{\th^{(j)}}\hat P_{j+1} \cdots \hat P_{k}\tilde g - \mu_j \hat P_{j+1} \cdots \hat P_{k} \tilde g| \leq 2 |\tilde g| \|\delta_{\th^{(j)}}\hat P_{j+1} \cdots \hat P_{k} -
\mu_j \hat P_{j+1} \cdots \hat P_{k} \|_{TV} \\
\leq & 2|\tilde g| (1-\al^*)^{|k-j|}
\end{split}
\end{equation}
note that this result is for any $\th^{(j)}\in \Th$. 
Returning to \eqref{eq:cov1} we get that:
\begin{equation}
cov(\tilde f(\th^{(j)}),\tilde g(\th^{(k)})) \leq 2 |\tilde f||\tilde g| (1-\al^*)^{|k-j|}.
\end{equation}
Finally by triangular inequality $|\tilde f|\leq 2|f|$ for any $j=1,2,\cdots$ and similarly for $|\tilde g|$. The desired result follows immediately.
\end{proof}

\begin{proof}[Proof of Theorem~\ref{conv_result_2}]
Using our standard notation $\nu \hat P_0 \cdots \hat P_t f = E[f(\th^{(t)})]$, Theorem~\ref{conv_result_1}, Lemma~\ref{lemma_cov} and simple results for double sum of geometric series we get
\begin{equation}
\begin{split}
& E\left[ \left(\mu f - \frac{1}{M}\sum_{t=0}^{M-1} f(\th^{(t)})   \right)^2 \right] =  E\left[ \left(\mu f -\frac{1}{M}\sum_{t=0}^{M-1} \nu \hat P_0 \cdots \hat P_t f + \frac{1}{M}\sum_{t=0}^{M-1} \nu \hat P_0 \cdots \hat P_t f- \frac{1}{M}\sum_{t=0}^{M-1} f(\th^{(t)})   \right)^2 \right] \\
& = \left( \mu f -\frac{1}{M}\sum_{t=0}^{M-1} \nu \hat P_0 \cdots \hat P_t f \right)^2 + E\left[ \left(\frac{1}{M}\sum_{t=0}^{M-1} \nu \hat P_0 \cdots \hat P_t f- \frac{1}{M}\sum_{t=0}^{M-1} f(\th^{(t)})   \right)^2 \right] \\
&\leq \left( 2|f| \left( \frac{(1-(1-\al)^M)\|\mu-\nu\|_{TV}}{M\al} - \frac{\eps(1-(1-\al)^M)}{M\al^2} + \frac{\eps}{\al} \right) \right)^2 + \frac{1}{M^2} \sum_{j=0}^{M-1}\sum_{t=0}^{M-1} cov(f(\th^{(j)}),f(\th^{(t)})) \\
&\leq 4|f|^2 \left( \frac{(1-(1-\al)^M)}{M\al} - \frac{\eps(1-(1-\al)^M)}{M\al^2} + \frac{\eps}{\al} \right)^2 + \frac{8|f|^2}{M^2} \sum_{j=0}^{M-1}\sum_{t=0}^{M-1}(1-\al^*)^{|t-j|}\\
& = 4|f|^2 \left( \frac{(1-(1-\al)^M)}{M\al} - \frac{\eps(1-(1-\al)^M)}{M\al^2} + \frac{\eps}{\al} \right)^2 \\ 
& +  8|f|^2 \left( \frac{1}{M} +  \frac{2}{(\al^*)^2} \left( \frac{(1-\al^*)^{M+1}-(1-\al^*)}{M^2} + \frac{(1-\al^*)-(1-\al^*)^2}{M} \right) \right)
\end{split}
\end{equation}
Obtaining the desired result.
\end{proof}

\begin{proof}[Proof of Theorem~\ref{abc_conv}]
First by {\bf(B1)} - {\bf(B4)}, Theorem~\ref{hast-unif} guarantees uniform ergodicity of the exact chain $P$ with $\beta = \min_{\th \in \Th} \frac{q(\th)}{p(\th)h(\th)/c}$ where $c$ is the normalizing constant of the posterior. Note that $\beta>0$ since $\Th$ is compact, ratio is continuous  and never zero. Therefore $P$ also satisfies Doeblin Condition. Next from {\bf(B1)}, {\bf(B4)} and {\bf(B5)}, Theorem~\ref{knn} implies that $\sup_{\th \in \Th}\|\hat h(\th;\zn)-h(\th)\|\to 0$ with probability 1. Hence by Theorem~\ref{perturb_mcmc} perturbed kernel $\hat P$ can be made arbitrary close to the exact kernel $P$ for sufficiently large $N$. Note that total variation distance between $\hat P_N$ and $P$ decreases to zero as $N$ increases. Finally assumptions of Theorems~\ref{conv_result_1} and \ref{conv_result_2} follow trivially.
\end{proof}
\begin{proof}[Proof of Corollary~\ref{sbl_conv}]
First by {\bf(B1)}, {\bf(B2)}, {\bf(B3)}, {\bf(B4)} and {\bf(B8)}, Theorem~\ref{hast-unif} guarantees uniform ergodicity of the exact chain $P$ with 
$\beta = \min_{\th \in \Th} \frac{q(\th)}{p(\th)h(\th)/c}$ where $c$ is the normalizing constant of the posterior. Note that $\beta>0$ since $\Th$ is compact, ratio is continuous and never zero. 
Therefore $P$ satisfies Doeblin Condition. Next from {\bf(B1)}, {\bf(B5)}, {\bf(B6)} and {\bf(B7)}, Theorem~\ref{knn} implies that $\sup_{\th \in \Th}\|\hat h(\th;\zn)-h(\th)\|\to 0$ with probability 1. Hence by Theorem~\ref{perturb_mcmc} perturbed kernel $\hat P$ can be made arbitrary close to the exact kernel $P$ for sufficiently large $N$. Note that total variation distance between $\hat P_N$ and $P$ decreases to zero as $N$ increases. Finally assumptions of Theorems~\ref{conv_result_1} and \ref{conv_result_2} follow trivially.
\end{proof}
\section{Conclusion}
In this paper we proposed to speed up generic ABC-MCMC and BSL algorithms by storing past simulations. This approach significantly accelerates the process and can be very useful for models where simulation of a pseudo data set is computationally expensive or when large number of MCMC iterations is required. We presented theoretical arguments and necessary assumptions for convergence properties of the perturbed chain. The performance of these strategies were examined via a series of simulations under different models. All simulations summaries show that proposed methods significantly improve mixing and efficiency of the chain and at the same time produce as accurate and precise parameter estimates as generic samplers.       

\section*{Acknowledgments}
We thank Jeffrey Rosenthal and Stanislav Volgushev for constructive comments. RVC is grateful to the organizers of the BIRS workshop ``Validating and Expanding Approximate Bayesian Computation Methods" for creating a stimulating environment that generated ideas for this work. Finally, we  acknowledge funding support from NSERC of Canada.


\end{document}